\documentclass[11pt]{article}

\newcommand{\floor}[1]{\lfloor #1 \rfloor}
\newcommand{\ceil}[1]{\lceil #1 \rceil}
\usepackage{xcolor}
\usepackage{amsfonts}
\usepackage{amsthm}
\usepackage{amsmath}
\newtheorem{theorem}{Theorem}
\newtheorem{lemma}{Lemma}
\newtheorem{define}{Definition}
\usepackage{graphicx}
\usepackage[sort]{natbib}
\usepackage{algpseudocode}
\usepackage{algorithm}
\usepackage{multirow}
\usepackage{hyperref}
\usepackage{accents}
\usepackage{chngcntr}

\newcommand{\tSigma}{\underaccent{\tilde}{\Sigma}}
\newcommand{\tmathcalT}{\underaccent{\tilde}{\mathcal{T}}}
\newcommand{\tphi}{\underaccent{\tilde}{\phi}}

\newcommand{\ts}{\underaccent{\tilde}{s}}
\allowdisplaybreaks

\DeclareMathOperator{\argmax}{arg\,max}

\begin{document}

\title{Efficient Topic Model Estimation under Heavy-Tailed Document Lengths}
\author{Daniel Cirkovic\thanks{Department of Mathematical and Statistical Sciences, Marquette University, Milwaukee, WI 53233, U.S.A.} \and Tiandong Wang\thanks{Shanghai Center for Mathematical Sciences, Fudan University, Shanghai 200438, China.}}
\date{\today}
\maketitle

\begin{abstract}
Early inquiries into the statistical properties of natural language found that words tend to occur with power-law frequencies. This observation, closely associated with Zipf's law, has spurred many investigations into why this power-law pattern emerges with such regularity. Rarely, however, has this property of text been leveraged in statistical inference. In this paper, we demonstrate that the Latent Dirichlet Allocation (LDA) model can accommodate power-law word frequencies. In particular, when the document length distribution is regularly varying, the word frequency distribution admits a hierarchy of power laws across documents and topics. We further leverage this finding to develop an efficient tensor decomposition algorithm for estimating the topic matrix via the moments of normalized extreme word frequencies. Applying our algorithm to the twenty newsgroups corpus reveals that the extreme-value methodology exhibits robustness to certain choices made in the pre-processing of the data. This work furthers the recent interest in adapting machine learning methods to the study of multivariate extremes.
\end{abstract}

\medskip

\section{Introduction}

Folklore in the natural language processing literature states that words tend to occur with power-law frequencies. That is, the probability that word $w$ will occur with frequency greater than $j$ is approximately proportional to $j^{-\alpha}$ for some $\alpha > 0$ and $j$ sufficiently large. This observation is related to the classical rank--frequency formulation of Zipf's law, which states that the frequency of the $r$-th most common word in a corpus decays like a power of the rank $r$ \cite{zipf1935psycho}. Power-law patterns in text have generated many hypotheses for why they emerge with such regularity. Potential explanations include but are not limited to the preferential reuse of words and communicative efficiency \cite{yule2014statistical, simon1955class, zipf2016human}. Despite the plethora of hypotheses suggested to describe this phenomenon, the natural language processing and text mining literature lacks proposals leveraging the power-law property in the statistical inference of text data. In this paper, we make inroads into this domain as it pertains to the modeling of topics.

We demonstrate that the Latent Dirichlet Allocation (LDA) model, perhaps the most well-known topic model, can accommodate power-law word frequencies. In particular, we show that if the document lengths are power-law distributed, then so are the word frequencies. Moreover, the LDA model exhibits a hierarchy of power-laws across corpus-, document- and topic-level word frequencies with behavior that is ultimately governed by the parameters of the model. In contrast to previous work, we leverage this information to propose a topic estimation procedure that utilizes the higher-order moments of normalized extreme word frequencies. This strategy augments a previously proposed tensor decomposition algorithm for LDA \cite{anandkumar2012spectral, anandkumar2014tensor, anandkumar2015spectral}. Simulations demonstrate that by leveraging only the extreme word frequencies, the procedure greatly improves computational efficiency while also retaining estimation accuracy. Additionally, we show that a classic topic modeling benchmark, the twenty newsgroups corpus, exhibits these power-law properties and that the proposed methodology is robust to the often overlooked choices made in the pre-processing of the dataset. 

Given the prevalence of power-law word frequencies in text corpora, it is naturally desirable to develop generative models of text that reflect such behavior. If this pattern is not captured by the model, inferences of the parameters may be distorted \cite{goldwater2005interpolating, goldwater2011producing}. Previous work has imbued the LDA model with this capacity through the use of Pitman-Yor priors on parameters of the model \cite{sato2010topic, wang2025fine}. This class of priors improves the out-of-sample perplexity of the LDA model when applied to various corpora, but the power-law property is not explicitly leveraged to improve the efficiency of the fitting process. An exception is the spectral estimator of \cite{ke2024using}, whose SCORE-type normalization is designed to accommodate the severe word-frequency heterogeneity associated with Zipf's law. Similar methods have also been employed in the context of $n$-gram language models \cite{teh2006hierarchical}.

Our methodology connects to a recent line of work that adapts techniques from machine learning and high-dimensional statistics to multivariate extremes \cite[see][for a review]{engelke2021sparse}. One such adaptation is the extension of clustering methods such as spherical $k$-means and spectral clustering to extreme observations \cite{janssen2020k, medina2024spectral}. The spectral tensor decomposition of extreme word frequencies presented herein more closely follows the extensions of PCA and dependence measures to multivariate extreme-value settings \cite{butsch2025estimation, medina2025insights, reinbott2026principal}. Such extensions are often non-trivial since the notions of dependence used in extreme-value theory are distinct from measures traditionally used in statistics such as correlation \cite{das2017hidden}. In particular, we decompose the moments of the angular measure, quantities that are often used to describe the dependence structure between components of a random vector with a heavy tail \cite{cooley2019decompositions}.

The paper is organized as follows. We present the probabilistic topic model and LDA in Section \ref{sec:tm}. Section \ref{sec:mrv} reviews the technical details behind multivariate regular variation, the mathematical framework used to describe the power-law behavior of word frequencies. Subsequently, the multivariate regular variation of various word frequencies is derived. Section \ref{sec:est} introduces the estimation algorithm for the topic matrix based on extreme normalized word frequencies. We also provide the necessary background on tensor decompositions and identifiability assumptions. A simulation study evaluating the accuracy and computational efficiency of the proposed algorithm is provided in Section \ref{sec:sim}. The twenty newsgroups corpus is then analyzed in Section \ref{sec:20ng} and concluding remarks are provided in Section \ref{sec:conc}. Section \ref{sec:tech} contains proofs of the main theorems and supporting technical results.

\section{Topic Models}\label{sec:tm}

Consider a corpus of $D$ documents. Topic models represent each document $d = 1, \dots, D$ as a sequence of words $(X_{d1}, \dots, X_{dN_{d\cdot\cdot}})$ where $N_{d\cdot\cdot}$ is the length of document $d$. It is typically assumed that each word is drawn from a common vocabulary of $W$ words shared across all documents. We generically label the words using the integers $1, 2, \dots, W$. The probabilistic topic model, introduced by \cite{hofmann1999probabilistic} as probabilistic latent semantic indexing, assumes that the choice of $X_{di} \in \{1, \dots, W \}$ is driven by a latent topic $Z_{di} \in \{1, \dots, K\}$, where $K$ is the number of topics in the corpus. Each document is then characterized by a distribution over the latent topics. In other words, $(X_{di}, Z_{di})$ are independently generated according to
\begin{align*}
&Z_{di} \mid \theta_d \sim \text{Categorical}(\theta_{d}), \\
&X_{di} \mid Z_{di}, \phi_1, \dots, \phi_K \sim \text{Categorical}(\phi_{Z_{di}}).
\end{align*}
Here, $\theta_d \in \Delta^{K - 1}$ is a document specific probability distribution describing how often each of the $K$ topics appear in document $d$. Each topic vector $\phi_1, \dots, \phi_K \in \Delta^{W - 1}$  is a probability distribution over words describing how often each word appears in a topic. We assume each word appears in at least one topic with positive probability. It will be convenient to concatenate the topic vectors into a topic matrix 
\begin{align*}
\Phi = \begin{bmatrix} \phi_1 & \cdots & \phi_K\end{bmatrix}.
\end{align*}
Note that, given the model parameters, the independence of the $(X_{di}, Z_{di})$  is assumed within and across documents. Notably, the model disregards any sequential dependence of words within each document. 

Since topic models treat each document as a bag-of-words \cite[see][for an elaboration]{wallach2006topic}, it is useful to consider the probabilistic properties of various word frequencies.  From the model construction, it is easily seen that for any $w = 1, \dots, W$ and $k = 1, \dots, K$
\begin{align}
\label{eq:wordtopicjoint}
\mathbb{P}\left(X_{di} = w, Z_{di} = k \mid \theta_d, \Phi\right) = \theta_{dk}\phi_{kw}.
\end{align}
Let the latent count $N_{dkw} = \sum_{i = 1}^{N_{d\cdot\cdot}} 1_{\{X_{di} = w, Z_{di} = k\}}$ denote the frequency with which word $w$ appears as an instance of topic $k$ in document $d$.  Independence of the $(X_{di}, Z_{di})$ over $i$ gives that
\begin{align}
\label{eq:dkwcount}
\left(N_{dkw}\right)_{k, w} \mid N_{d\cdot\cdot}, \theta_d, \Phi \sim \text{Multinomial}\left(N_{d\cdot\cdot}, \left(\theta_{dk}\phi_{kw} \right)_{k, w}\right),
\end{align}
independently over $d = 1, \dots, D$. From this fact, multinomial aggregation gives the distribution of $N_{d\cdot w} = \sum_{k = 1}^K N_{dkw}$, the observable word counts within each document, regardless of topic. That is, independently for $d = 1,\dots, D$
\begin{align}
\label{eq:ddotwcount}
\left(N_{d\cdot 1}, \dots, N_{d\cdot W}\right) \mid N_{d\cdot\cdot}, \theta_d, \Phi \sim \text{Multinomial}\left(N_{d\cdot\cdot}, p_d \right),
\end{align}
where $p_d = \Phi \theta_d$. The goal of topic modeling is then to estimate the topic matrix $\Phi$ from the independent realizations $\left(N_{d\cdot 1}, \dots, N_{d\cdot W}\right)$, $d = 1,\dots, D$. We note that in \eqref{eq:ddotwcount} the topic matrix is not identifiable without further assumptions, a discussion we defer to Section \ref{sec:id}. From \eqref{eq:dkwcount}, the distributions of other word frequencies are easily ascertained. Defining $N_{d k \cdot} = \sum_{w = 1}^W N_{dkw}$ and $N_{\cdot k w} = \sum_{d = 1}^D N_{dkw}$, we have that independently over $d = 1,\dots, D$
\begin{align}
\label{eq:dkdotcount}
\left(N_{d1\cdot}, \dots, N_{dK\cdot}\right) \mid N_{d\cdot\cdot}, \theta_d \sim \text{Multinomial}\left(N_{d\cdot\cdot}, \theta_d\right),
\end{align}
and for $k = 1, \dots, K$
\begin{align}
\label{eq:dotkwcount}
\left(N_{\cdot k 1}, \dots, N_{\cdot k W}\right) \mid N_{\cdot k \cdot}, \phi_k \sim \text{Multinomial}\left(N_{\cdot k\cdot}, \phi_{k}\right).
\end{align}

The latent Dirichlet allocation (LDA) model of \cite{blei2003latent} further assumes that the parameters $\theta_1, \dots, \theta_D$ are independently generated according to
\begin{align}
\label{eq:dirichlet}
\theta_d \sim \text{Dirichlet}(\alpha_1, \dots, \alpha_K), \qquad d = 1, \dots, D.
\end{align}
This is a common assumption made in topic modeling \cite{anandkumar2012spectral, arora2012learning}. Hence, we will work with the LDA model unless otherwise specified.

\section{Regular variation of word frequencies}\label{sec:mrv}

\subsection{Background on regular variation}\label{sec:mrvback}

In Section \ref{sec:mrvcounts}, we will characterize the power-law nature of various word frequency statistics under the topic model. To formalize our findings, we provide some definitions related to multivariate regular variation (MRV). Whereas the original definition of MRV was constructed by employing the vague convergence of measures, $\mathbb{M}$-convergence permits greater flexibility when working with extremes in more than one dimension \cite{das2013living, das2017hidden, lindskog2014regularly}. 

Let $\mathbb{C}_0 \subset \mathbb{C} \subset \mathbb{R}^p_+$ be two closed cones. In our setting, we will usually have $\mathbb{C}_0 = \{ 0\}$ and $\mathbb{C} = \mathbb{R}^p_+$ where $0$ represents the zero vector in $\mathbb{R}^p_+$. Let $\mathbb{M}(\mathbb{C}\setminus \mathbb{C}_0)$ be the set of all Borel measures that are finite on sets bounded away from $\mathbb{C}_0$. Denote the set of all continuous, bounded, positive functions on $\mathbb{C}\setminus \mathbb{C}_0$ whose supports are bounded away from $\mathbb{C}_0$ via $\mathcal{C}(\mathbb{C}\setminus \mathbb{C}_0)$. Without loss of generality, we may assume that the functions are uniformly continuous \cite{lindskog2014regularly}. The definition of $\mathbb{M}$-convergence is provided below.

\begin{define}
\label{def:Mconv}
For $\mu_n, \mu \in \mathbb{M}(\mathbb{C}\setminus \mathbb{C}_0)$ we say that $\mu_n \rightarrow \mu$ in $\mathbb{M}(\mathbb{C}\setminus \mathbb{C}_0)$ if $\int f d\mu_n \rightarrow \int f d\mu$ for all $f \in \mathcal{C}(\mathbb{C}\setminus \mathbb{C}_0)$. 
\end{define}

With the notion of $\mathbb{M}$-convergence, we may define MRV for the distribution of a random vector in $\mathbb{C} = \mathbb{R}^p_+$. We specialize to the case where $\mathbb{C}_0 = \{ 0\}$. Below, we let $RV_c$ denote the class of all functions $U: \mathbb{R}_+ \rightarrow \mathbb{R}_+$ satisfying $\lim_{t \rightarrow \infty} U(tx)/U(t) = x^{c}$ for $c > 0, x > 0$. 

\begin{define}
\label{def:mrv}
The distribution of a random vector $Z \in \mathbb{R}^p_+$, $\mathbb{P}(Z \in \cdot)$, is multivariate regularly varying on $\mathbb{R}_+^p \setminus \{0\}$ with index $c > 0$ if there exists a scaling function $b(t) \in RV_{1/c}$ and a limit measure $\nu \in \mathbb{M}(\mathbb{R}^p_+\setminus \{0\}), \nu \not\equiv 0$ such that as $t \rightarrow \infty$
\begin{align*}
t\mathbb{P}\left(Z/b(t) \in \cdot \right) \rightarrow \nu(\cdot), \qquad \text{in } \mathbb{M}(\mathbb{R}^p_+ \setminus \{0\}).
\end{align*}
\end{define}   

If $\mathbb{P}(Z \in \cdot)$ satisfies Definition \ref{def:mrv}, we write $\mathbb{P}(Z \in \cdot) \in \text{MRV}(c, b(t), \nu, \mathbb{R}^p_+ \setminus \{0\})$. From a statistical  perspective, it is useful to restate multivariate regular variation in terms of polar coordinates since it reveals a semi-parametric structure for the transformed limit measure. That is, let $\|\cdot \|$ denote a norm on $\mathbb{R}_+^p$ and let $\aleph_0  = \{x \in \mathbb{R}^p_+: \|x \| = 1 \}$. Then Definition \ref{def:mrv} is equivalent to stating that for some appropriately chosen $b(t)$ 
\begin{align}
\label{eq:polar}
t\mathbb{P}\left[\left(\frac{Z}{\|Z\|}, \frac{\|Z\|}{b(t)} \right) \in \cdot \right] \rightarrow (S \times \nu_c)(\cdot), \quad \text{in } \mathbb{M}(\aleph_0  \times (0, \infty)),
\end{align}
where $\nu_c((x, \infty)) = x^{-c}$ for $x > 0$ and $S$ is a probability measure on $\aleph_0$ referred to as the angular measure.

In order to prove MRV of various word frequency statistics, we rely on a generalized Breiman's theorem \cite{wang2022asymptotic, janssen2025multivariate}. The result is stated in Section \ref{sec:gbt} for convenience.

\subsection{Multivariate regular variation of word frequencies}\label{sec:mrvcounts}

In this section, we present the multivariate regular variation of various word frequencies in the LDA model under the assumption that the document lengths are drawn from a power-law distribution. That is, suppose that $N_{1\cdot\cdot}, \dots, N_{D\cdot\cdot}$ are independently and identically distributed. We assume that for some $\iota > 0$ there exists a scaling function $b(t) \in RV_{1/\iota}$ such that as $t \rightarrow \infty$
\begin{align}
\label{eq:lengthrv}
t\mathbb{P}\left(N_{1\cdot\cdot} > b(t)x \right) \rightarrow x^{-\iota} = \nu_\iota((x, \infty)), \quad x > 0.
\end{align}
Under this assumption, we consider the MRV of the observed document-word counts,
\begin{align}
\label{eq:wfreqd}
\left(N_{d\cdot 1}, \dots, N_{d\cdot W}\right),
\end{align}
as well as the unobserved document-topic counts
\begin{align}
\label{eq:kfreqd}
\left(N_{d 1\cdot}, \dots, N_{d K\cdot}\right).
\end{align}
For a fixed topic $k$, we also consider MRV of the aggregate topic-word counts across all documents
\begin{align}
\label{eq:wfreq}
\left(N_{\cdot k 1}, \dots, N_{\cdot k W}\right).
\end{align}
The multivariate regular variation of these frequencies is presented in Theorem \ref{thm:freqmrv}. Since the finite sum of independent multivariate regularly varying random vectors is also multivariate regularly varying, a straightforward consequence of Theorem \ref{thm:freqmrv}(a) is that under the same conditions
\begin{align*}
\left(N_{\cdot\cdot 1}, \dots, N_{\cdot\cdot W}\right) = \sum_{d = 1}^D \left(N_{d\cdot 1}, \dots, N_{d\cdot W}\right) \in \text{MRV}(\iota, D^{1/\iota}b(t), \mu, \mathbb{R}^W_+ \setminus \{0\}).
\end{align*}
Hence, we obtain a hierarchy of power-laws. Whether word frequencies are observed within topics, within documents, or across an entire corpus, they all have the same power-law tail index $\iota$. Although the power-law behavior of word frequencies at these different levels of aggregation is not well-studied, the homogeneity of the tail index seems to be supported by the literature \cite{montemurro2001beyond, sato2010topic, williams2015text}. What differs across these hierarchies, however, is where large word frequencies concentrate. Theorem \ref{thm:freqmrv} indicates that high frequency words within topic $k$ are governed by $\phi_k$, while high frequency words within document $d$ modulate in accordance with $\Phi\theta_d$. While Theorem \ref{thm:freqmrv} is stated for the LDA model, we note that parts (a) and (b) hold for $\theta_d$ drawn from general distributions on the simplex. A variant of part (c) also holds assuming $\mathbb{P}(\theta_d \in \cdot)$ places non-negligible probability mass on all components of $\theta_d$. 

\begin{theorem}
\label{thm:freqmrv}
Assume $N_{1 \cdot \cdot}, \dots, N_{D \cdot \cdot}$ are drawn independently from a common regularly varying distribution satisfying \eqref{eq:lengthrv}. In addition, suppose $\theta_1,\dots,\theta_D$ are drawn independently according to \eqref{eq:dirichlet}.
\begin{itemize}
\item[(a)] The document-word counts are multivariate regularly varying with tail index $\iota$, i.e.
\begin{align*}
\left(N_{d\cdot 1}, \dots, N_{d\cdot W}\right) \in \text{MRV}(\iota, b(t), \mu, \mathbb{R}^W_+ \setminus \{0\}),
\end{align*}
where $\mu \in\mathbb{M}(\mathbb{R}_+^W \setminus \{0\})$ satisfies for any $f \in \mathcal{C}(\mathbb{R}_+^W \setminus \{0\})$
\begin{align*}
\mu(f) = \int_0^\infty \mathbb{E}\left[f(y\Phi \theta_d )\right] \nu_\iota(dy).
\end{align*}
Hence in $\mathbb{M}\left(\Delta^{W-1} \times (\mathbb{R}_+ \setminus \{0 \}) \right)$ 
\begin{align*}
t\mathbb{P}\left[ \left(\left(\frac{N_{d\cdot 1}}{N_{d\cdot\cdot}}, \dots, \frac{N_{d\cdot W}}{N_{d\cdot\cdot}}\right), \frac{N_{d\cdot\cdot}}{b(t)} \right) \in \cdot \right] \rightarrow \mathbb{P}\left(\Phi \theta_d \in \cdot \right) \times \nu_\iota(\cdot),
\end{align*}
as $t \rightarrow \infty$. 
\item[(b)] The document-topic counts are multivariate regularly varying with tail index $\iota$, i.e.
\begin{align*}
\left(N_{d1\cdot }, \dots, N_{dK\cdot}\right) \in \text{MRV}(\iota, b(t), \eta, \mathbb{R}^K_+ \setminus \{0\}),
\end{align*}
where $\eta \in\mathbb{M}(\mathbb{R}_+^K \setminus \{0\})$ satisfies for any $f \in \mathcal{C}(\mathbb{R}_+^K \setminus \{0\})$
\begin{align*}
\eta(f) = \int_0^\infty \mathbb{E}\left[f(y \theta_d )\right] \nu_\iota(dy).
\end{align*}
Hence in $\mathbb{M}\left(\Delta^{K-1} \times (\mathbb{R}_+ \setminus \{0 \}) \right)$ 
\begin{align*}
t\mathbb{P}\left[ \left(\left(\frac{N_{d1\cdot}}{N_{d\cdot\cdot}}, \dots, \frac{N_{dK \cdot}}{N_{d\cdot\cdot}}\right), \frac{N_{d\cdot\cdot}}{b(t)} \right) \in \cdot \right] \rightarrow \mathbb{P}\left(\theta_d \in \cdot \right) \times \nu_\iota(\cdot),
\end{align*}
as $t \rightarrow \infty$. 

\item[(c)] The topic-word counts are multivariate regularly varying with tail index $\iota$, i.e.
\begin{align*}
\left(N_{\cdot k 1}, \dots, N_{\cdot k W}\right) \in \text{MRV}(\iota, \tilde{b}_k(t), \upsilon, \mathbb{R}^W_+ \setminus \{0\}),
\end{align*}
where $\upsilon \in\mathbb{M}(\mathbb{R}_+^W \setminus \{0\})$ satisfies for any $f \in \mathcal{C}(\mathbb{R}_+^W \setminus \{0\})$
\begin{align*}
\upsilon(f) = \int_0^\infty f(y\phi_k) \nu_\iota(dy).
\end{align*}
Here, 
\begin{align*}
\tilde{b}_k(t) = \left(D \frac{B(\alpha_k + \iota, \alpha_0 - \alpha_k)}{B(\alpha_k, \alpha_0 - \alpha_k)} \right)^{1/\iota} b(t).
\end{align*}
Hence in $\mathbb{M}\left(\Delta^{W-1} \times (\mathbb{R}_+ \setminus \{0 \}) \right)$ 
\begin{align*}
t\mathbb{P}\left[ \left(\left(\frac{N_{\cdot k 1}}{N_{\cdot k \cdot}}, \dots, \frac{N_{\cdot k W}}{N_{\cdot k \cdot}}\right), \frac{N_{\cdot k \cdot}}{\tilde{b}_k(t)} \right) \in \cdot \right] \rightarrow \epsilon_{\phi_k}(\cdot) \times \nu_\iota(\cdot),
\end{align*}
as $t \rightarrow \infty$, where $\epsilon_{\phi_k}$ denotes the Dirac measure at $\phi_k$.
\end{itemize}
\end{theorem}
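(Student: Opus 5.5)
The plan is to treat each count vector as a random multinomial draw whose number of trials $N_{d\cdot\cdot}$ is regularly varying, independent of the cell probabilities. The central step is a conditional law of large numbers: given $N_{d\cdot\cdot}=n$ and $\theta_d$, the vector $(N_{d\cdot 1},\dots,N_{d\cdot W})$ satisfies
\begin{align*}
(N_{d\cdot 1},\dots,N_{d\cdot W}) = n\,\Phi\theta_d + R_n,
\end{align*}
where $\mathbb{E}\|R_n\|^2 \le n$ uniformly in $\theta_d$, since the multinomial variances are bounded by $n$. I will write
\begin{align*}
\frac{(N_{d\cdot w})_w}{b(t)} = \frac{N_{d\cdot\cdot}}{b(t)}\,\Phi\theta_d + \frac{R_{N_{d\cdot\cdot}}}{b(t)}.
\end{align*}
The first term is a product of a regularly varying scalar and an independent bounded vector. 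Breiman's theorem, or the generalized Breiman theorem of Section \ref{sec:gbt}, then gives
\begin{align*}
t\,\mathbb{P}\left(\frac{N_{d\cdot\cdot}}{b(t)}\,\Phi\theta_d\in\cdot\right)\to \mu
\end{align*}
in $\mathbb{M}(\mathbb{R}^W_+\setminus\{0\})$, with $\mu(f)=\int\mathbb{E}f(y\Phi\theta_d)\,\nu_\iota(dy)$. Here $\Phi\theta_d$ is nonzero because it lies in the simplex.

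Next I show the remainder is negligible in the $\mathbb{M}$ sense. Take $f$ uniformly continuous, with support in $\{\|x\|>\delta\}$. I split on the event $\{N_{d\cdot\cdot}>\epsilon b(t)\}$.
\begin{itemize}
\item On the complement, $\|(N_{d\cdot w})_w\|\le N_{d\cdot\cdot}\le\epsilon b(t)$. For $\epsilon<\delta$ both $f$-values vanish, because the $\ell_1$ norm of the leading term is $N_{d\cdot\cdot}$.
\item On the event itself, I use Chebyshev's inequality conditionally:
\begin{align*}
\mathbb{P}\left(\|R_{N}\|>\eta b(t)\mid N\right)\le \frac{N}{\eta^2 b(t)^2},
\end{align*}
so that
\begin{align*}
t\,\mathbb{E}\left[\min\left(1,\frac{N}{\eta^2b(t)^2}\right)1_{\{N>\epsilon b(t)\}}\right]\to 0.
\end{align*}
\end{itemize}
To justify the last limit, split at $N\le M b(t)$ and $N>Mb(t)$. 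The first piece is at most $t\,\mathbb{P}(N>\epsilon b(t))\cdot M/(\eta^2 b(t))\to0$ because $b(t)\to\infty$. The second is at most $t\,\mathbb{P}(N>Mb(t))\approx M^{-\iota}$, which is small for large $M$. Uniform continuity of $f$ then converts this into $t\,|\mathbb{E}f(\text{full})-\mathbb{E}f(\text{leading})|\to0$. This gives the MRV statement in part (a).

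The polar statement in part (a) comes from the same argument applied to the pair
\begin{align*}
\left((N_{d\cdot w}/N_{d\cdot\cdot})_w,\; N_{d\cdot\cdot}/b(t)\right).
\end{align*}
Here the second coordinate is exactly regularly varying. On $\{N_{d\cdot\cdot}>\epsilon b(t)\}$, the first coordinate differs from $\Phi\theta_d$ by $R/N$, whose conditional second moment is at most $1/N\to0$. The product limit then follows from independence of $\theta_d$ and $N_{d\cdot\cdot}$. Part (b) is identical, using \eqref{eq:dkdotcount} with $\Phi$ replaced by the identity.

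For part (c), I will write $N_{\cdot k\cdot}=\sum_d N_{dk\cdot}$. By part (b), the scalar $N_{dk\cdot}$ has tail
\begin{align*}
t\,\mathbb{P}(N_{dk\cdot}>b(t)x)\to x^{-\iota}\,\mathbb{E}\theta_{dk}^\iota .
\end{align*}
Since $\theta_{dk}$ is marginally $\text{Beta}(\alpha_k,\alpha_0-\alpha_k)$, we have
\begin{align*}
\mathbb{E}\theta_{dk}^\iota=\frac{B(\alpha_k+\iota,\alpha_0-\alpha_k)}{B(\alpha_k,\alpha_0-\alpha_k)}.
\end{align*}
The $D$ summands are independent across $d$, so the sum has tail constant $D\,\mathbb{E}\theta_{dk}^\iota$. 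This yields the scaling $\tilde b_k(t)$, under which $N_{\cdot k\cdot}$ has limit $\nu_\iota$. Conditionally on the latent totals, \eqref{eq:dotkwcount} says that $(N_{\cdot kw})_w$ is multinomial with $N_{\cdot k\cdot}$ trials and fixed probability vector $\phi_k$. The same LLN-plus-Chebyshev argument as above then gives the limit $\int f(y\phi_k)\,\nu_\iota(dy)$ and the Dirac angular component $\epsilon_{\phi_k}$.

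The main obstacle is the remainder negligibility: the multinomial fluctuations must be controlled uniformly over the heavy tail of $N$, which is what the $M$-truncation handles. A secondary subtlety is checking that the conditional multinomial structure in \eqref{eq:dotkwcount} remains valid when we condition on the regularly varying total $N_{\cdot k\cdot}$. This holds because, given all $N_{dk\cdot}$, the word labels are independent draws from $\phi_k$.
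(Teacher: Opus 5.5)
Your argument is correct, but it is organized differently from the paper's. The paper never isolates a remainder term. It proves Lemma \ref{lem:multrv} for an arbitrary mixing law on the simplex by feeding the normalized count process $Y(t)=N(\lfloor t\rfloor)/t$ into the generalized Breiman theorem (Theorem \ref{thm:Breiman}); this process converges almost surely to $\rho$ by the strong law. It then obtains the polar form by citing an equivalence result (Corollary 2.1 of Resnick's 2024 book). For (c) the paper works document by document: it computes the tail of $N_{dk\cdot}$ through a Beta double integral, applies the same lemma with $T=N_{dk\cdot}$ and deterministic $\rho=\phi_k$, and finally sums the $D$ independent MRV vectors using closure of MRV under independent sums. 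You instead need only the classical Breiman lemma for the leading term $N_{d\cdot\cdot}\Phi\theta_d/b(t)$. You dispose of the multinomial fluctuation with a conditional Chebyshev bound plus truncation at $Mb(t)$. Your polar statement comes from the same estimate, since the angular error has conditional second moment at most $1/N_{d\cdot\cdot}$, rather than from a cited equivalence. In (c) you aggregate first at the scalar level, reading the constant off directly as $\mathbb{E}\theta_{dk}^\iota$ and using the tail of a sum of $D$ i.i.d.\ regularly varying variables. Only then do you invoke the conditional $\text{Multinomial}(N_{\cdot k\cdot},\phi_k)$ structure of \eqref{eq:dotkwcount}, so no vector-sum closure is needed. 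Your route is more elementary, self-contained and quantitative: it exhibits the same kind of $1/N_{d\cdot\cdot}$ control the paper ends up needing anyway for the bias condition in Theorem \ref{thm:EDM}. The paper's route is more modular, reducing everything to checking an almost sure limit and independence, and it avoids modulus-of-continuity bookkeeping. When you write yours up, make explicit that the term $\gamma\,t\,\mathbb{P}(N_{d\cdot\cdot}>\epsilon b(t))\to\gamma\epsilon^{-\iota}$ is handled by letting $\gamma\to 0$ after taking $t\to\infty$.
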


The proof of Theorem \ref{thm:freqmrv} is provided in Section \ref{sec:mrvproof}. Note that Theorem~\ref{thm:freqmrv} identifies document length as the source of the multivariate regular variation of the document-word counts. The topic proportions and word occurrences are not assumed to be independent. Indeed, conditional on $\theta_d$, words are generated according to the LDA model with probability vector $p_d=\Phi\theta_d$. Moreover, for finite document lengths, the normalized counts
\[
\Omega_d =
\left(
\frac{N_{d\cdot1}}{N_{d\cdot\cdot}},\ldots,
\frac{N_{d\cdot W}}{N_{d\cdot\cdot}}
\right)
\]
depend on $N_{d\cdot\cdot}$. The product limit in the theorem shows that this dependence vanishes asymptotically so that among increasingly long documents, $\Omega_d$ converges to $\Phi\theta_d$, while $N_{d\cdot\cdot}$ governs the radial magnitude.

Word co-occurrences, or the number of times two words occur together in the same document, play a seminal role in topic model estimation. In fact, \cite{arora2012learning} employs nonnegative matrix factorization on the word co-occurrence matrix in their estimation of the topic matrix in the LDA model. We extend the regular variation properties of the word frequencies to the word co-occurrences. Note that the number of times two distinct words $u$ and $v$ co-occur in document $d$ is given by $N_{d \cdot u}N_{d\cdot v}$. Hence, one way regular variation of the co-occurrences for distinct words can be achieved is through the closure properties for products of the components of a regularly varying random vector \cite[see Proposition 7.6 of][for example]{resnick2007heavy}. We note that Lemma \ref{lem:cooccur} extends to word triplets, quadruplets and so on. For example, the number of times three distinct words $u, v$ and $w$ co-occur is regularly varying with index $\iota/3$. The proof of Lemma \ref{lem:cooccur} is provided in Section \ref{sec:coproof}.

\begin{lemma}
\label{lem:cooccur}
Consider the setting of Theorem \ref{thm:freqmrv}. Fix any $u = 1, \dots, W$. Then in $\mathbb{M}\left(\mathbb{R}_+^{W}  \setminus \{0 \} \right)$ 
\begin{align*}
\left(N_{d\cdot u}N_{d\cdot 1}, \dots, N_{d\cdot u}N_{d\cdot W}\right) \in \text{MRV}(\iota/2, b^2(t), \zeta, \mathbb{R}^W_+ \setminus \{0\}),
\end{align*}
where $\zeta \in\mathbb{M}(\mathbb{R}_+^W \setminus \{0\})$ satisfies for any $f \in \mathcal{C}(\mathbb{R}_+^W \setminus \{0\})$
\begin{align*}
\zeta(f) = \int_0^\infty \mathbb{E}\left[f(y p_{du} p_d)\right] \nu_{\iota/2}(dy).
\end{align*}
\end{lemma}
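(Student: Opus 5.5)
The plan is to write the co-occurrence vector as a continuous map of the document-word count vector and then pass to the limit. Set $Y_d = (N_{d\cdot 1}, \dots, N_{d\cdot W})$. By Theorem \ref{thm:freqmrv}(a), $\mathbb{P}(Y_d\in\cdot)\in \text{MRV}(\iota, b(t), \mu, \mathbb{R}^W_+\setminus\{0\})$. Define $h:\mathbb{R}^W_+\to\mathbb{R}^W_+$ by $h(x) = x_u x$. Then $h$ is continuous and homogeneous of degree two, $h(cx) = c^2 h(x)$, so
\begin{align*}
h(Y_d)/b^2(t) = h\left(Y_d/b(t)\right).
\end{align*}
Since $b\in RV_{1/\iota}$, we have $b^2\in RV_{2/\iota}$, which is the correct scaling for index $\iota/2$.

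Fix $f\in\mathcal{C}(\mathbb{R}^W_+\setminus\{0\})$. We need to show
\begin{align*}
t\,\mathbb{E}\left[f\left(h(Y_d/b(t))\right)\right]\to \int f\circ h\, d\mu.
\end{align*}
This would follow from Definition \ref{def:Mconv} if $f\circ h\in\mathcal{C}(\mathbb{R}^W_+\setminus\{0\})$. That is the main obstacle: $h$ vanishes on the set $\{x_u=0\}$, which is not bounded away from $0$. So $f\circ h$ is continuous and bounded, but its support need not be bounded away from the origin. This is exactly the issue addressed in Proposition 7.6 of \cite{resnick2007heavy}.

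I would handle it through the polar form of the limit rather than a generic mapping theorem. The support of $f$ lies in $\{\|z\|\ge\epsilon\}$ for some $\epsilon>0$, and $\|h(x)\| = x_u\|x\|\le \|x\|^2$ (taking the sup norm, say). Hence $f(h(x))\neq 0$ forces $\|x\|\ge\sqrt{\epsilon}$. The support of $f\circ h$ is therefore bounded away from $0$ after all, because $h(x)$ can be large only if $x$ itself is large. Points with $x_u=0$ are simply sent to $h(x)=0$, where $f$ vanishes, and this causes no difficulty. Consequently $f\circ h\in\mathcal{C}(\mathbb{R}^W_+\setminus\{0\})$, and the convergence above follows directly from Theorem \ref{thm:freqmrv}(a). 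Uniform continuity of $f\circ h$ is not needed for this definition.

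It remains to identify the limit. Using the form of $\mu$,
\begin{align*}
\int f\circ h\, d\mu = \int_0^\infty \mathbb{E}\left[f\left(y^2 p_{du}p_d\right)\right]\nu_\iota(dy),
\end{align*}
with $p_d=\Phi\theta_d$. Substitute $s=y^2$. Since $\nu_\iota((y,\infty))=y^{-\iota}$, the image measure satisfies $\nu_\iota(\{y: y^2>s\}) = s^{-\iota/2} = \nu_{\iota/2}((s,\infty))$. This gives $\zeta(f)=\int_0^\infty\mathbb{E}[f(s\,p_{du}p_d)]\,\nu_{\iota/2}(ds)$.

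Finally, to confirm $\zeta\not\equiv0$, note that $p_{du}>0$ with positive probability, because word $u$ appears in some topic and the Dirichlet distribution puts mass on all components. Hence $\zeta$ is a nonzero element of $\mathbb{M}(\mathbb{R}^W_+\setminus\{0\})$, which completes Definition \ref{def:mrv} with index $\iota/2$ and scaling $b^2(t)$.
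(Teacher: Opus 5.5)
Your proof is correct, but it takes a different route from the paper. The paper proceeds in three steps. First, it shows that $N_{d\cdot 1}$ is regularly varying with scaling $c\,b(t)$, where $c=(\mathbb{E}[p_{d1}^{\iota}])^{1/\iota}$. Second, it uses the mapping theorem to get joint MRV of the augmented vector $(N_{d\cdot 1}/c, N_{d\cdot 1},\dots,N_{d\cdot W})$. Third, it invokes the product-closure result, Proposition 7.6 of \cite{resnick2007heavy}, after checking that the limit is not asymptotically independent, and then converts the resulting scaling $c\,b^2(t)$ back to $b^2(t)$.

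You instead apply Definition \ref{def:Mconv} once, to $f\circ h$ with the degree-two homogeneous map $h(x)=x_u x$. The single bound $\|h(x)\|_\infty\le\|x\|_\infty^2$ shows that $f\circ h$, with $f$ extended by zero at the origin, lies in $\mathcal{C}(\mathbb{R}^W_+\setminus\{0\})$. Your argument is shorter and self-contained, and it never needs the constant $c$. It also extends verbatim to triplets via $x\mapsto x_ux_vx$ and $\|x\|_\infty^3$, giving index $\iota/3$. The paper's route instead leans on a standard closure theorem.

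One clarification about your framing. The zero set $\{x_u=0\}$ is not really what Proposition 7.6 guards against; as you show yourself, it causes no difficulty in the $\mathbb{M}$-framework on $\mathbb{R}^W_+\setminus\{0\}$. What the paper's "lack of asymptotic independence" check buys is a nonzero limit. That corresponds exactly to your final verification that $\zeta\not\equiv 0$. In fact $p_{du}>0$ almost surely here, since the Dirichlet components are a.s.\ positive and word $u$ has positive probability in some topic.
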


\section{Extreme-value estimation of the topic matrix}\label{sec:est}

In this section, we develop an estimation procedure for the topic matrix based on the probabilistic results presented in Theorem \ref{thm:freqmrv}. In particular, we establish a method of moments procedure that utilizes the moments of the angular measure $\mathbb{P}\left(\Phi \theta_d \in \cdot \right)$ in part (a) of Theorem \ref{thm:freqmrv}. The estimation scheme is motivated by a tensor decomposition method developed in \cite{anandkumar2012spectral, anandkumar2015spectral}, but leverages the power-law nature of word frequencies to derive simpler estimators and mitigate computational burdens.

\subsection{Background on orthogonal tensor decomposition}\label{sec:tensor}

Consider a generic cubical tensor $T \in \mathbb{R}^{n \times n \times n}$. In order to adequately describe the tensor decomposition algorithms, we must define a notion of tensor-matrix multiplication via multiplication along modes \cite{kolda2009tensor, anandkumar2014tensor}. That is, for matrices $A_i \in \mathbb{R}^{n \times p_i}$, let $T(A_1, A_2, A_3) \in \mathbb{R}^{p_1 \times p_2 \times p_3}$ be the tensor with $(a_1, a_2, a_3)$-th entry
\begin{align*}
[T(A_1, A_2, A_3)]_{a_1, a_2, a_3} = \sum_{i = 1}^n \sum_{j = 1}^n \sum_{l = 1}^n T_{ijl}[A_1]_{i, a_1}[A_2]_{j, a_2}[A_3]_{l, a_3}.
\end{align*}

Our algorithms will only be concerned with symmetric cubical tensors, or those that satisfy $T_{ijl} = T_{\tau(i, j, l)}$ for any permutation $\tau$ of the arguments. Following \cite{comon2008symmetric}, we define the notion of symmetric rank for $T$. Here, $\otimes$ denotes the usual tensor product.
\begin{define}
\label{def:srank}
The symmetric rank of a symmetric cubical tensor $T$ is the smallest non-negative integer $R$ such that $T = \sum_{r = 1}^R u_r \otimes u_r \otimes u_r$ for some $u_1, \dots, u_R \in \mathbb{R}^n$.
\end{define} 
More generally, the decomposition of a generic tensor into the sum of rank-one terms is called a canonical polyadic (CP) decomposition \cite[see][]{kolda2009tensor}. Following \cite{anandkumar2014tensor}, we now define what it means for a symmetric cubical tensor to admit an orthogonal decomposition.
\begin{define}
A symmetric cubical tensor $T \in \mathbb{R}^{n \times n \times n}$ has an orthogonal decomposition if there exists a set of orthonormal vectors $\{v_1, \dots, v_R\}$ and strictly positive scalars $\lambda_1, \dots, \lambda_R > 0$ such that $T = \sum_{r = 1}^R \lambda_r (v_r \otimes v_r \otimes v_r)$.
\end{define}
A symmetric cubical tensor with orthogonal decomposition $T = \sum_{r = 1}^R \lambda_r (v_r \otimes v_r \otimes v_r)$ has symmetric rank $R$ \cite{kolda2015symmetric}. An orthogonal decomposition is not guaranteed to exist for every symmetric tensor. However, if an orthogonal decomposition does exist, it is unique up to permutation \cite{anandkumar2014tensor}. 

In order to compute the decomposition, \cite{anandkumar2014tensor} note that the pairs $(v_r, \lambda_r)$, $r = 1, \dots, R$, form a set of tensor eigenvector/eigenvalue pairs since they satisfy a third-order generalization of the matrix notion:
\begin{align}
\label{eq:Teigen}
T(I, u, u) = \lambda u,
\end{align}
for some unit vector $u$ and $\lambda \in \mathbb{R}$. However, $\{v_1, \dots, v_R\}$ are not the only unit vectors satisfying \eqref{eq:Teigen}. Fortunately, these additional vectors can be discarded by only considering the set of vectors that are recoverable through tensor power iterations.
\begin{define}
A unit vector $u \in \mathbb{R}^n$ is a robust eigenvector of $T \in \mathbb{R}^{n \times n \times n}$ if there exists an $\varepsilon > 0$ such that for all $\theta \in \{ s \in \mathbb{R}^n : \|u - s \|_2 \leq \varepsilon \}$, repeated iteration of the map
\begin{align*}
\bar{\theta} \mapsto \frac{T(I, \bar{\theta}, \bar{\theta})}{\|T(I, \bar{\theta}, \bar{\theta}) \|_2},
\end{align*}
starting from $\theta$ converges to $u$.
\end{define}

In particular, Theorem 4.1 of \cite{anandkumar2014tensor} asserts that the set of robust eigenvectors of $T$ is exactly equal to $\{v_1, \dots, v_R\}$. In practice, rarely does one observe an orthogonally decomposable tensor $T$. Rather, one observes a noisy estimate $\hat{T} = T + E$ where $E$ is an error tensor with small operator norm (see \eqref{eq:op} for a definition). In order to account for the perturbation from this error tensor, \cite{anandkumar2014tensor} provide the robust tensor power method which we record in Algorithm \ref{alg:tensorpower}. Given a starting vector $u_0$, tensor power iterations of an orthogonally decomposable $T$ converge to the component $v_i$ which maximizes $|\lambda_i v_i'u_0|$ \citep[see Lemma 5.1 of][]{anandkumar2014tensor}. Given a perturbed tensor $\hat{T}$, however, Algorithm \ref{alg:tensorpower} uses $L$ trials to ensure that a starting vector $u_0$ that sufficiently separates the $|\lambda_i v_i'u_0|$ values relative to the error tensor is eventually found with high probability. Naturally, Algorithm \ref{alg:tensorpower} must be repeated multiple times with deflation in order to compute all robust eigenvectors.

\begin{algorithm}
\caption{Robust tensor power method \cite{anandkumar2014tensor}.}
\label{alg:tensorpower}
\begin{algorithmic}
\Require Symmetric tensor $T \in \mathbb{R}^{K \times K \times K}$, \# random restarts $L$, \# power iterations $N$.
\Ensure Estimated eigenvalue $\lambda$ with corresponding eigenvector $v$, deflated tensor $T - \lambda (v \otimes v \otimes v)$.
\For{$l = 1$ to $L$}
\Statex Sample $v^{(l)}_0$ from the uniform distribution on the unit sphere in $\mathbb{R}^K$.
\For{$t = 1$ to $N$}
\begin{align*}
v^{(l)}_t = \frac{T(I, v^{(l)}_{t-1}, v^{(l)}_{t-1})}{\| T(I, v^{(l)}_{t-1}, v^{(l)}_{t-1}) \|_2}
\end{align*}
\EndFor
\EndFor
\Statex Compute $\ell = \argmax_{l = 1, \dots, L} T(v^{(l)}_{N}, v^{(l)}_{N}, v^{(l)}_{N})$.
\For{$t =  N + 1$ to $2N$}
\begin{align*}
v^{(\ell)}_t = \frac{T(I, v^{(\ell)}_{t-1}, v^{(\ell)}_{t-1})}{\| T(I, v^{(\ell)}_{t-1}, v^{(\ell)}_{t-1}) \|_2}
\end{align*}
\EndFor
\Statex Set $v = v^{(\ell)}_{2N}$ and $\lambda = T(v, v, v)$.
\end{algorithmic}
\end{algorithm}

In order to analyze the accuracy of Algorithm \ref{alg:tensorpower} in our setting, we introduce some matrix and tensor norms. For a matrix $A \in \mathbb{R}^{m \times n}$, we let $\|A\|_F$ and $\|A\|$ denote the Frobenius and operator norms, respectively. For a symmetric tensor $T \in \mathbb{R}^{n \times n \times n}$, these notions are defined analogously. The Frobenius norm of $T$ is given by
\begin{align}
\label{eq:frob}
\|T\|_F = \sqrt{\sum_{i = 1}^n \sum_{j = 1}^n \sum_{l = 1}^n T^2_{ijl}},
\end{align}
while the operator norm is given by
\begin{align}
\label{eq:op}
\|T\| = \sup_{\|x\|_2 = 1} |T(x, x, x)|.
\end{align}
As in the matrix setting, $\|T\| \leq \| T \|_F$. See \cite{anandkumar2014tensor, kolda2009tensor, zhang2012best} for more details on tensor norms.

\subsection{Identifiability and estimation of topics}\label{sec:id}

A variety of methods have been proposed to estimate the topic model of \eqref{eq:ddotwcount}. In this section, we briefly discuss the assumptions that these procedures make in order to guarantee identifiability of the topic model parameters. Recall the definition $p_d = \Phi \theta_d$ for $d = 1, \dots, D$ and consider the matrices
\begin{align*}
P \equiv \begin{bmatrix} p_1 & \cdots & p_D \end{bmatrix} = \Phi \begin{bmatrix} \theta_1 & \cdots & \theta_D \end{bmatrix} \equiv \Phi \Theta.
\end{align*}
One class of methods used to estimate the topic matrix constructs an estimate of $P$ using the empirical word frequencies and then use non-negative matrix factorization algorithms to recover $\Phi$ \cite{lee1999learning}. Similar approaches employ the word co-occurrence matrix instead \cite{arora2012learning}. In order to ensure that $\Phi$ and $\Theta$ are identifiable, this class of methods assume that each topic has at least one word, called an anchor word, that occurs with strictly positive probability in that topic and probability zero in the other topics  \cite{arora2012learning, bing2020fast}. Under the anchor word assumption, \cite{ke2024using} develop a fast SVD-based procedure, Topic-SCORE, whose normalization is tailored to the severe word-frequency heterogeneity implied by Zipf's law, and establish rates of convergence for the estimated topic matrix. 

The class of methods that we consider estimates the topic matrix in LDA through tensor decomposition of the third-order moments \cite{anandkumar2012spectral, anandkumar2014tensor, anandkumar2015spectral, decarolis2020end}. Tensor decomposition methods require that the columns of $\Phi$ are linearly independent, which is naturally less stringent than assuming the existence of anchor words. Given that these methods were developed for the LDA model, they are known to be sensitive to the Dirichlet specification of the topic proportions in \eqref{eq:dirichlet}. However, more recent developments have extended the application of tensor methods to topic proportion distributions with more realistic dependence structures between topics \cite{arabshahi2017spectral}. Henceforth, we assume linear independence of the topics $\phi_1, \dots, \phi_K$.

\subsection{Tensor decomposition of the angular measure moments}

We now motivate our extreme-value based estimation procedure under the assumption that $\Phi$ has full column rank. Consider the convergence of the angular component of the word frequencies in part (a) of Theorem \ref{thm:freqmrv}. This statement implies that for any $d = 1, \dots, D$
\begin{align}
\label{eq:angledist}
\left(\frac{N_{d\cdot 1}}{N_{d\cdot\cdot}}, \dots, \frac{N_{d\cdot W}}{N_{d\cdot\cdot}}\right) \big| \left\lbrace N_{d\cdot\cdot} > t \right\rbrace \Rightarrow \Phi \theta_d,
\end{align} 
as $t \rightarrow \infty$ where $\Rightarrow$ denotes weak convergence. Hence, for sufficiently large document lengths, we may model the word-document relative frequencies as independent draws having identical distributions to that of $\Phi \theta_d$. In order to estimate the topic matrix, we employ a method-of-moments scheme inspired by \cite{anandkumar2012spectral, anandkumar2015spectral}. 

In order to motivate the procedure, assume that we observe draws $p_d = \Phi \theta_d$ where $\theta_d$ is independently drawn from a  $\text{Dirichlet}(\alpha_1, \dots, \alpha_K)$ distribution for $d = 1, \dots, D$. The method of moments aims to match the model parameters to the moments of $p_d$, which are readily estimated by their sample counterparts. After appropriate adjustments, the raw lower-order moments of $p_d$ exhibit a useful decomposition that can be exploited to recover the topics \cite{anandkumar2012spectral}. Below we define the adjusted moments, all of which are functions of the moments of $p_d$:
\begin{align*}
M =& \mathbb{E}\left[p_d\right] = \frac{1}{\alpha_0} \Phi \alpha, \\
\Sigma =& \mathbb{E}\left[ p_d \otimes p_d \right] - \frac{\alpha_0}{\alpha_0 + 1} M \otimes M, \\
\mathcal{T} =& \mathbb{E}\left[  p_d \otimes p_d \otimes p_d \right] - \frac{\alpha_0}{\alpha_0 + 2} \left( \mathbb{E}\left[  p_d \otimes p_d \otimes M \right] + \mathbb{E}\left[  p_d \otimes M \otimes p_d \right] + \mathbb{E}\left[ M \otimes p_d \otimes p_d \right] \right) \\
&+ \frac{2\alpha_0^2}{(\alpha_0 + 2)(\alpha_0 + 1)} M \otimes M \otimes M.
\end{align*}
The following lemma from \cite{anandkumar2012spectral} reveals that $\Sigma$ and $\mathcal{T}$ admit CP decompositions where the factors of the decompositions are functions of the topic vectors $\phi_1, \dots, \phi_K$.  Note that while the decomposition of $\Sigma$ in Lemma \ref{lem:moments} is suggestive, knowledge of $\Sigma$ alone does not uniquely identify the parameters $\alpha$ and $\Phi$. There are instances where different choices of $(\alpha, \Phi)$ give rise to the same $\Sigma$. Hence, one must appeal to the third-order moments. In Section \ref{sec:ntopic}, however, we do employ the decomposition of $\Sigma$ to identify the number of topics. For completeness, we provide the proof of Lemma \ref{lem:moments} in Section \ref{sec:mproof}.

\begin{lemma}
\label{lem:moments}
Under the LDA model, the adjusted moments may be expressed as
\begin{align*}
\Sigma =& \frac{1}{\alpha_0(\alpha_0 + 1)} \sum_{k = 1}^K \alpha_k (\phi_k \otimes \phi_k), \\
\mathcal{T} =& \frac{2}{\alpha_0(\alpha_0 + 1)(\alpha_0 + 2)} \sum_{k = 1}^K \alpha_k (\phi_k \otimes \phi_k \otimes \phi_k).
\end{align*}
\end{lemma}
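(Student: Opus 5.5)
The proof reduces everything to the Dirichlet moments of $\theta_d$. Since $p_d = \Phi\theta_d$ and tensor products are multilinear, we have $\mathbb{E}[p_d\otimes p_d] = \mathbb{E}[\theta_d\otimes\theta_d](\Phi^\top,\Phi^\top)$, and analogously for third order. We therefore first compute the adjusted moments of $\theta_d$ itself, with $\Phi$ replaced by the identity, and then push them through $\Phi$ in each mode. So it suffices to show
\begin{align*}
\mathbb{E}[\theta\otimes\theta] - \tfrac{\alpha_0}{\alpha_0+1}\bar\alpha\otimes\bar\alpha &= \tfrac{1}{\alpha_0(\alpha_0+1)}\textstyle\sum_k \alpha_k e_k\otimes e_k,
\end{align*}
where $\bar\alpha = \alpha/\alpha_0$, together with the corresponding third-order identity with coefficient $2/(\alpha_0(\alpha_0+1)(\alpha_0+2))$.

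For the second-order identity we use the standard Dirichlet moment formula $\mathbb{E}[\theta_i\theta_j] = \alpha_i(\alpha_j+\delta_{ij})/(\alpha_0(\alpha_0+1))$. In matrix form this reads
\[
\mathbb{E}[\theta\otimes\theta] = \frac{\alpha\otimes\alpha + \operatorname{diag}(\alpha)}{\alpha_0(\alpha_0+1)}.
\]
Subtracting $\tfrac{\alpha_0}{\alpha_0+1}\bar\alpha\otimes\bar\alpha = \alpha\otimes\alpha/(\alpha_0(\alpha_0+1))$ leaves exactly the diagonal term.

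For the third-order identity we use
\[
\mathbb{E}[\theta_i\theta_j\theta_l] = \frac{\alpha_i(\alpha_j+\delta_{ij})(\alpha_l+\delta_{il}+\delta_{jl})}{\alpha_0(\alpha_0+1)(\alpha_0+2)}.
\]
Written symmetrically, the numerator tensor is
\[
\alpha\otimes\alpha\otimes\alpha + \Big[\sum_k \alpha_k e_k\otimes e_k\otimes \alpha + \text{two permutations}\Big] + 2\sum_k \alpha_k e_k^{\otimes 3}.
\]
Next we compute $\mathbb{E}[\theta\otimes\theta\otimes\bar\alpha]$ from the second-order formula. Multiplied by $\tfrac{\alpha_0}{\alpha_0+2}$, it becomes
\[
\frac{\alpha\otimes\alpha\otimes\alpha + \sum_k\alpha_k e_k\otimes e_k\otimes\alpha}{\alpha_0(\alpha_0+1)(\alpha_0+2)}.
\]
Subtracting the three permuted copies of this removes the three mixed terms. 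It also subtracts $3\,\alpha^{\otimes 3}$ in total, against the $1\cdot\alpha^{\otimes 3}$ already present, leaving a net $-2\,\alpha^{\otimes3}/(\alpha_0(\alpha_0+1)(\alpha_0+2))$. This is cancelled by $\tfrac{2\alpha_0^2}{(\alpha_0+1)(\alpha_0+2)}\bar\alpha^{\otimes 3} = 2\alpha^{\otimes3}/(\alpha_0(\alpha_0+1)(\alpha_0+2))$, and only the term $2\sum_k\alpha_k e_k^{\otimes3}$ survives.

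The only delicate step is the bookkeeping of the third-order moment formula. We must confirm that the index-coincidence terms $\delta_{ij},\delta_{il},\delta_{jl}$ assemble into the symmetric sum of the three mixed tensors. The $i=j=l$ case must contribute $\alpha_k(\alpha_k+1)(\alpha_k+2)$, which gives the coefficient $2$ on the diagonal, and we verify this entrywise. The moment formulas themselves follow from the Dirichlet normalizing constant, $\mathbb{E}[\prod\theta_k^{m_k}] = \prod \alpha_k^{(m_k)}/\alpha_0^{(\sum m_k)}$ with rising factorials. Finally, applying multilinearity with $\Phi$ in each mode turns $e_k$ into $\phi_k$ and $\bar\alpha$ into $M = \Phi\alpha/\alpha_0$, which gives the stated expressions for $\Sigma$ and $\mathcal{T}$.
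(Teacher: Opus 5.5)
Your proof is correct and follows essentially the same route as the paper. Both arguments reduce to the Dirichlet moment formulas (the paper's Lemma \ref{lem:dirmoments}) and transfer them through $\Phi$ by multilinearity; the only cosmetic difference is that you do the cancellations at the level of $\theta_d$ before applying $\Phi$, whereas the paper applies $\Phi$ first and then rewrites the mixed terms through $\mathbb{E}[p_d\otimes p_d\otimes M]$ and its permutations.
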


We now discuss how the third-order moments can be used to identify the topic matrix $\Phi$. Lemma \ref{lem:moments} indicates that $\mathcal{T}$ has a symmetric rank $K$ decomposition \citep[see Lemma 5.1 of][]{comon2008symmetric}. Given that $\Phi$ has full column rank, Kruskal's result implies that this CP decomposition is unique up to permutation and scaling, the latter of which can be resolved by  the simplex constraint on the topics \cite[see Section 3.2 of][for more on uniqueness of tensor decompositions]{kolda2009tensor}. Hence, given $\alpha_0$, $(\alpha, \Phi)$ are identifiable from $\mathcal{T}$ up to permutation. Given an estimate of $\mathcal{T}$, one may apply tensor decomposition algorithms such as alternating least squares (ALS) to estimate the model parameters \cite{harshman1970foundations, kolda2009tensor}. However, the non-convexity of the ALS objective and high dimension of $\mathcal{T}$ present challenges to such strategies. 

In order to facilitate computation of the tensor decomposition, we appeal to the strategy of \cite{anandkumar2012spectral, anandkumar2015spectral} and transform $\mathcal{T}$ to an orthogonally decomposable tensor. This allows us to apply the methodology discussed in Section \ref{sec:tensor} to recover the topics. In order to conveniently describe the procedure, we follow \cite{huang2017spectral} and instead work with the scaled quantities
\begin{align*}
\tSigma =& \alpha_0(\alpha_0 + 1) \Sigma = \sum_{k = 1}^K \alpha_k (\phi_k \otimes \phi_k) = \sum_{k = 1}^K \tphi_k \otimes \tphi_k, \\
\tmathcalT =& \frac{\alpha_0(\alpha_0 + 1)(\alpha_0 + 2)}{2} \mathcal{T} = \sum_{k = 1}^K \alpha^{-1/2}_k (\tphi_k \otimes \tphi_k \otimes \tphi_k),
\end{align*}
where $\tphi_k = \sqrt{\alpha_k}\phi_k$ for $k = 1, \dots, K$. From $\tSigma$, \cite{anandkumar2012spectral, anandkumar2015spectral} use the eigendecomposition $\tSigma = U\Lambda U'$ where $U \in \mathbb{R}^{W \times K}$, $\Lambda \in \mathbb{R}^{K\times K}$ to construct a whitening matrix $\mathcal{W} = U \Lambda^{-1/2}$. Note that whitened vectors $\varphi_k = \mathcal{W}'\tphi_k \in \mathbb{R}^K$ are orthonormal since
\begin{align*}
I = \mathcal{W}'\tSigma\mathcal{W} = \sum_{k = 1}^K \varphi_k \otimes \varphi_k.
\end{align*}  
We note that applying the whitening matrix to the weighted topics orthonormalizes them within the topic subspace. This reduces the dimension of the topic vectors from $W$ to $K$ without losing information since they can be recovered using $U\Lambda^{1/2}$. Applying the same whitening matrix to each mode of the tensor $\tmathcalT$ gives that
\begin{align*}
\tmathcalT(\mathcal{W}, \mathcal{W}, \mathcal{W}) = \sum_{k = 1}^K \alpha^{-1/2}_k(\varphi_k \otimes \varphi_k \otimes \varphi_k). 
\end{align*}
In other words, the whitened $\tmathcalT$ admits a symmetric orthogonal decomposition and the pairs $(\varphi_k, \alpha^{-1/2}_k)$ can be recovered via the tensor power iterations discussed in Section \ref{sec:tensor}. From there, $\phi_1, \dots, \phi_K$ can be obtained via transformation by $U\Lambda^{1/2}$ and rescaling. 

Given estimates of the moments $M$, $\Sigma$ and $\mathcal{T}$, Algorithm \ref{alg:specLDA} presents the estimation procedure for the topic matrix. We discuss how the power-law nature of the word frequencies can be used to derive computationally efficient estimators of these quantities in Section \ref{sec:moment_est}. We allow the number of tensor power iterations, $N_D \equiv N$, to depend on the number of documents in Algorithm \ref{alg:specLDA} since theoretical analysis requires $N_D$ to diverge as $D \rightarrow \infty$. Note that since the algorithm inputs estimates of $M$, $\Sigma$ and $\mathcal{T}$ rather than the true values, the topic vector estimates output by the algorithm described previously may not be probability vectors. Hence, following \cite{huang2017spectral}, the last step of Algorithm \ref{alg:specLDA} projects the output vectors to $\Delta^{W - 1}$ using the Algorithm in Figure 1 of \cite{duchi2008efficient}. We further note that while $\hat{\mathcal{T}}_D$ is listed as an input in Algorithm \ref{alg:specLDA}, the $W \times W \times W$ tensor need not be explicitly formed. Rather, only an estimate of the $K \times K \times K$ tensor $\mathcal{T}(\mathcal{W}, \mathcal{W}, \mathcal{W})$ needs to be formed in order to execute the algorithm.

\begin{algorithm}
\caption{Estimation of topic matrix from tensor decomposition of adjusted moments \cite{anandkumar2012spectral, anandkumar2015spectral, huang2017spectral}}
\label{alg:specLDA}
\begin{algorithmic}
\Require Estimates $\hat{\Sigma}_D$ and $\hat{\mathcal{T}}_D$, \# of topics $K$, Dirichlet concentration $\alpha_0$, \# random restarts $L$, \# power iterations $N_D$.
\Ensure Parameter estimates $(\hat{\alpha}_D, \hat{\Phi}_D)$. 
\Statex 1. For convenience, transform $\hat{\tSigma}_D = \alpha_0(\alpha_0 + 1) \hat{\Sigma}_D$, $\hat{\tmathcalT}_D = \frac{\alpha_0(\alpha_0 + 1)(\alpha_0 + 2)}{2} \hat{\mathcal{T}}_D$.
\Statex 2. Eigendecompose $\hat{\tSigma}_D \approx \hat{U}_D \hat{\Lambda}_D \hat{U}_D'$  where $\hat{U}_D \in \mathbb{R}^{W \times K}$, $\hat{\Lambda}_D \in \mathbb{R}^{K\times K}$ and set $\hat{\mathcal{W}}_D = \hat{U}_D \hat{\Lambda}_D^{-1/2}$. 
\Statex 3. Construct $\hat{\tmathcalT}_D(\hat{\mathcal{W}}_D, \hat{\mathcal{W}}_D, \hat{\mathcal{W}}_D)$ and perform the tensor power method (Algorithm \ref{alg:tensorpower}) $K$ times with parameters $L$ and $N_D$. Recover tensor eigenvectors $\hat{\varphi}_{D, 1}, \dots, \hat{\varphi}_{D, K}$ with tensor eigenvalues $\hat{\lambda}_{D, 1}, \dots, \hat{\lambda}_{D, K}$.
\Statex 4. Compute $\hat{\beta}_{D,k} = \hat{\lambda}_{D,k} \hat{U}_D\hat{\Lambda}_D^{1/2} \hat{\varphi}_{D,k}$ and $\hat{\alpha}_{D,k} = \hat{\lambda}_{D,k}^{-2}$ for $k = 1, \dots, K$.
\Statex 5. Map $\hat{\beta}_{D,k}$ to $\hat{\phi}_{D,k}$ by projecting it to $\Delta^{W - 1}$ using \cite{duchi2008efficient}.
\end{algorithmic}
\end{algorithm}

\subsection{Estimation of angular measure moments}\label{sec:moment_est}

Given estimates of the adjusted moments $M, \Sigma$ and $\mathcal{T}$, Algorithm \ref{alg:specLDA} can be applied to estimate the topic matrix $\Phi$.  In \cite{anandkumar2012spectral, anandkumar2015spectral, huang2015online, decarolis2020end}, the raw sample moments of the word frequencies are used to estimate the population moments. Such moment estimators must account for the dependence between the components of multinomial vectors, which complicates their computation and implementation. Using the convergence in part (a) of Theorem \ref{thm:freqmrv}, our extreme-value theory motivated estimators are able to side-step the multinomial noise and provide comparatively simpler moment estimates. 

In order to introduce the proposed moment estimators, define the sample  radii $R_d = N_{d\cdot\cdot}$ and sample angles
\begin{align*}
\Omega_d = \left(\omega_{d1}, \dots, \omega_{dW}\right) = \left(\frac{N_{d\cdot 1}}{N_{d \cdot \cdot}}, \dots, \frac{N_{d\cdot W}}{N_{d \cdot \cdot}} \right), 
\end{align*}
for $d = 1, \dots, D$. Let $S(\cdot) = \mathbb{P}(\Phi \theta_d \in \cdot)$ denote the angular measure in (a) of Theorem \ref{thm:freqmrv}. The population moments that form $M$, $\Sigma$ and $\mathcal{T}$ may be expressed as
\begin{align}
\label{eq:p1}
\mathbb{E}[p_d] =& \int_{\Delta^{W-1}} p_d S(dp_d), \\
\label{eq:p2}
\mathbb{E}[p_d \otimes p_d] =& \int_{\Delta^{W-1}} p_d \otimes p_d S(dp_d), \\
\label{eq:p3}
\mathbb{E}[p_d \otimes p_d \otimes p_d] =& \int_{\Delta^{W-1}} p_d \otimes p_d \otimes p_d S(dp_d). 
\end{align}
Let $R_{(1)} \geq \dots \geq R_{(D)}$ denote the decreasing radii order statistics. Following Chapter 9.2 of \cite{resnick2007heavy}, a consistent estimator of $S$ is given by
\begin{align*}
\hat{S}_D(\cdot) = \frac{1}{k_D} \sum_{d = 1}^D 1_{\left\lbrace R_d \geq R_{(k_D)}, \Omega_d \in \cdot \right\rbrace}.
\end{align*} 
In other words, considering $\hat{S}_D$ as a random measure in $\mathbb{M}(\Delta^{W-1})$, we have that $\hat{S}_D \Rightarrow S$ in $\mathbb{M}(\Delta^{W-1})$ as $D \rightarrow \infty$, $k_D \rightarrow \infty$ and $k_D/D \rightarrow 0$ (see Chapter 5.3 of \cite{resnick2024art}, Chapter 9.2 of \cite{resnick2007heavy} or Proposition 6.2 of \cite{das2013living}). Via the estimator $\hat{S}_D$, the following plug-in estimates of the moments may be derived:
\begin{align}
\label{eq:s1}
\hat{\mathbb{E}}_D[p_d] =& \frac{1}{k_D} \sum_{d = 1}^D \Omega_d 1_{\left\lbrace R_d \geq R_{(k_D)}\right\rbrace}, \\
\label{eq:s2}
\hat{\mathbb{E}}_D[p_d \otimes p_d] =& \frac{1}{k_D} \sum_{d = 1}^D (\Omega_d \otimes \Omega_d) 1_{\left\lbrace R_d \geq R_{(k_D)}\right\rbrace}, \\
\label{eq:s3}
\hat{\mathbb{E}}_D[p_d \otimes p_d \otimes p_d] =& \frac{1}{k_D} \sum_{d = 1}^D (\Omega_d \otimes \Omega_d \otimes \Omega_d) 1_{\left\lbrace R_d \geq R_{(k_D)}\right\rbrace}. 
\end{align}
Since the integrands in \eqref{eq:p1}, \eqref{eq:p2} and \eqref{eq:p3} are bounded and continuous, the plug-in estimates are element-wise consistent as $D \rightarrow \infty$, $k_D \rightarrow \infty$ and $k_D/D \rightarrow 0$. Likewise, consistent estimates of $M, \Sigma$ and $\mathcal{T}$ (or similarly, $\tSigma$ and $\tmathcalT$), may be easily obtained by replacing \eqref{eq:p1}, \eqref{eq:p2} and \eqref{eq:p3} by \eqref{eq:s1}, \eqref{eq:s2} and \eqref{eq:s3}, respectively. Let $\hat{M}_D$, $\hat{\Sigma}_D$ and $\hat{\mathcal{T}}_D$ denote these plug-in estimators.

In comparison to the estimators provided by \cite{anandkumar2012spectral, anandkumar2015spectral, huang2015online, decarolis2020end}, the estimators $\hat{M}_D$, $\hat{\Sigma}_D$ and $\hat{\mathcal{T}}_D$ have a few main advantages. First, extreme-value theory allows us to circumvent the dependence between the multinomial components in \eqref{eq:ddotwcount} by only using the documents with sufficiently long length. This results in simpler estimators, especially in regards to the third moment tensor $\mathcal{T}$ \citep[see equation (3) of][for example]{huang2015online}. In addition, the extreme-value theory estimators employ $k_D$ documents rather than $D$. Hence, in large-scale applications, the estimators are faster to compute. Further, the extreme-value theory based estimators yield algorithmic benefits as well. For example, consider the estimator
\begin{align*}
\hat{\Sigma}_D = \hat{\mathbb{E}}_D\left[ p_d \otimes p_d \right] - \frac{\alpha_0}{\alpha_0 + 1} \hat{\mathbb{E}}_D\left[ p_d \right] \otimes \hat{\mathbb{E}}_D\left[ p_d \right].
\end{align*}
Step 2 of Algorithm \ref{alg:specLDA} computes the eigendecomposition of $\alpha_0(\alpha_0 + 1)\hat{\Sigma}_D$. We note that by using Lanczos style algorithms, $\hat{\Sigma}_D$ need not be formed to compute the eigendecomposition. Rather, only products of the form 
\begin{align*}
\hat{\Sigma}_D x = \frac{1}{k_D} \sum_{d = 1}^D \Omega_d (\Omega_d'x) 1_{\left\lbrace R_d \geq R_{(k_D)}\right\rbrace} - \frac{\alpha_0}{\alpha_0 + 1} \hat{\mathbb{E}}_D\left[ p_d \right] \left( \frac{1}{k_D} \sum_{d = 1}^D \Omega_d'x 1_{\left\lbrace R_d \geq R_{(k_D)}\right\rbrace} \right),
\end{align*}
for $x \in \mathbb{R}^W$ are needed \cite{lehoucq1998arpack}. These iterations are faster to compute since the sums are over $k_D = o(D)$ documents. 

We now analyze the consistency of Algorithm \ref{alg:specLDA} with the extreme-value motivated estimators of the adjusted moments. In order to do so, it is useful to derive asymptotic normality of the empirical moments in \eqref{eq:s1}, \eqref{eq:s2}, \eqref{eq:s3}. This in turn provides a rate of convergence for the adjusted moment estimators. In addition to their use in the adjusted moment estimators, the off-diagonal elements of $\hat{\mathbb{E}}_D[p_d \otimes p_d]$ also provide an alternative, extreme-value based estimator of the probability that two randomly selected words from a typical document will be of a given pair. These word co-occurrence statistics play a large role in topic model estimation. For example, factorizations of modified co-occurrence statistics serve as precursors to modern word embeddings \cite{deerwester1990indexing, landauer1997solution}.

Throughout, let $F(x) = \mathbb{P}(R_d \leq x)$ for $x \geq 0$ and define $b(t) = \left( \frac{1}{1 - F} \right)^\leftarrow(t)$ for $t \geq 1$. We present asymptotic normality of the estimators in the following theorem. The proof of Theorem \ref{thm:EDM} is provided in Section \ref{sec:edmproof}.

\begin{theorem}
\label{thm:EDM}
Consider the setting of Theorem \ref{thm:freqmrv}. Suppose that as $D \rightarrow \infty$, $k_D \rightarrow \infty$ and $k_D/D \rightarrow 0$. If, in addition
\begin{align*}
\sqrt{k_D}/b(D/k_D) \rightarrow 0, \qquad \text{as } D \rightarrow \infty,
\end{align*}
then for any fixed $u, v, w \in \{1, 2, \dots, W \}$, we have the marginal convergences as $D \rightarrow \infty$
\begin{gather}
\sqrt{k_D}\left(\hat{\mathbb{E}}_D[p_{du}] - \mathbb{E}[p_{du}] \right) \Rightarrow N(0, \xi^2_{u}), \\
\sqrt{k_D}\left(\hat{\mathbb{E}}_D[p_{du}p_{dv}] - \mathbb{E}[p_{du}p_{dv}] \right) \Rightarrow N(0, \sigma^2_{u, v}), \\
\sqrt{k_D}\left(\hat{\mathbb{E}}_D[p_{du}p_{dv}p_{dw}] - \mathbb{E}[p_{du}p_{dv}p_{dw}] \right) \Rightarrow N(0, \gamma^2_{u, v, w}), \label{eq:norm3}
\end{gather}
where $\xi^2_{u} = \text{Var}(p_{du})$, $\sigma^2_{u, v} = \text{Var}(p_{du}p_{dv})$ and $\gamma^2_{u, v, w} = \text{Var}(p_{du}p_{dv}p_{dw})$, with all variances computed under the angular measure $S$, that is, for $p_d = \Phi\theta_d$ with $\theta_d$ distributed according to \eqref{eq:dirichlet}.
\end{theorem}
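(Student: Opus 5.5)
The idea is to split the error into three pieces: a sampling term, a multinomial-noise term, and a random-threshold term. Each statistic has the form $\frac{1}{k_D}\sum_d g(\Omega_d)1_{\{R_d \ge R_{(k_D)}\}}$, where $g$ is a bounded monomial in the coordinates of $\Omega_d$ (degree one, two or three). Write $g(\Omega_d) = g(p_d) + \Delta_d$ with $\Delta_d = g(\Omega_d) - g(p_d)$, and write $\mathbb{E}_S[g] = \mathbb{E}[g(\Phi\theta_d)]$. Then
\begin{align*}
\sqrt{k_D}\bigl(\hat{\mathbb{E}}_D[g] - \mathbb{E}_S[g]\bigr) = \frac{1}{\sqrt{k_D}}\sum_{d=1}^D \bigl(g(p_d) - \mathbb{E}_S[g]\bigr)1_{\{R_d \ge R_{(k_D)}\}} + \frac{1}{\sqrt{k_D}}\sum_{d=1}^D \Delta_d 1_{\{R_d \ge R_{(k_D)}\}}.
\end{align*}
The key structural fact is that, under the LDA model, $\theta_d$ is independent of $N_{d\cdot\cdot}=R_d$. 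Only the conditional multinomial law of the counts depends on $R_d$. So $p_d$ is exactly independent of $R_d$, and the angular measure $S$ in Theorem \ref{thm:freqmrv}(a) is exactly the law of $p_d$ at every radius, not only in the limit. This removes the usual bias term that comes from the angle–radius dependence in tail estimation.

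For the first term, condition on the radii $(R_1,\dots,R_D)$. The set $\{d : R_d \ge R_{(k_D)}\}$ then has exactly $k_D$ elements (ties have probability zero, or are handled by a fixed tie-breaking rule). Given the radii, the corresponding $p_d$ are i.i.d. draws from $S$, because $\theta_d$ is independent of $R_d$. The first term is therefore conditionally a normalized sum of $k_D$ i.i.d. bounded centered variables with variance $\operatorname{Var}_S(g)$. The ordinary CLT gives conditional convergence to $N(0,\operatorname{Var}_S(g))$, and because the limit does not depend on the radii, dominated convergence of characteristic functions gives unconditional convergence. This produces $\xi_u^2$, $\sigma^2_{u,v}$ and $\gamma^2_{u,v,w}$ as stated.

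The main step is to show the second, multinomial-noise term is $o_P(1)$; this is where the hypothesis $\sqrt{k_D}/b(D/k_D)\to 0$ is used.
\begin{itemize}
\item Conditional on $(R_d,\theta_d)$, $\Omega_d$ is an average of $R_d$ i.i.d. categorical vectors with mean $p_d$. Hence $\mathbb{E}[\|\Omega_d - p_d\|^2 \mid R_d,\theta_d] \le 1/R_d$.
\item Since $g$ is a polynomial on the simplex, it is Lipschitz there, so $|\Delta_d| \le C\|\Omega_d - p_d\|$.
\item Conditioning on the radii, the term is bounded in $L^1$ by $\frac{C}{\sqrt{k_D}}\sum_{d\in\text{top }k_D} R_d^{-1/2} \le C\sqrt{k_D}\,R_{(k_D)}^{-1/2}$.
\item The standard order-statistic result $R_{(k_D)}/b(D/k_D)\to_P 1$ (Resnick 2007, Ch. 4, valid when $k_D\to\infty$ and $k_D/D\to 0$) makes this bound $O_P(\sqrt{k_D/b(D/k_D)})$.
\end{itemize}

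This is the main obstacle. The crude Lipschitz bound only gives $\sqrt{k_D/b}$, which requires $k_D/b(D/k_D)\to 0$, a stronger condition than the one assumed. To get down to the stated rate, I would exploit unbiasedness: for degree one, $\mathbb{E}[\Delta_d\mid R_d,\theta_d]=0$ exactly. For degrees two and three, the multinomial moment formulas give $\mathbb{E}[\Delta_d\mid R_d,\theta_d] = O(1/R_d)$ and $\operatorname{Var}(\Delta_d\mid R_d,\theta_d) = O(1/R_d)$. The $\Delta_d$ are conditionally independent across $d$ given the radii and the $\theta$'s. It follows that:
\begin{itemize}
\item the centered part has conditional variance $\frac{1}{k_D}\sum O(1/R_d) = O_P(1/b(D/k_D)) \to 0$;
\item the bias part is $\frac{1}{\sqrt{k_D}}\sum O(1/R_d) = O_P(\sqrt{k_D}/b(D/k_D)) \to 0$, exactly by the hypothesis.
\end{itemize}
Slutsky's lemma then combines the two terms and gives the three displayed normal limits.
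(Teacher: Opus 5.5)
Your proof is correct, but it takes a different route from the paper's.

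\textbf{How the paper argues.} The paper does not construct the Gaussian limit itself. It imports Theorem 1 of Larsson and Resnick (2012) on empirical angular-measure functionals with the random threshold $R_{(k_D)}$, extended to $W$ dimensions via a remark of Kl\"uppelberg and Krali. This reduces everything to a bias condition: $\sqrt{k_D}$ times the gap between $\frac{D}{k_D}\mathbb{E}[g(\Omega_d)1_{\{R_d\ge b(D/k_D)t^{-1/\iota}\}}]$ and $\mathbb{E}_S[g]\,\frac{D}{k_D}\mathbb{P}(R_d>b(D/k_D)t^{-1/\iota})$ must vanish locally uniformly in $t$. The paper verifies this with three ingredients:
\begin{itemize}
\item the same factorial-moment computation you propose, giving conditional bias at most $9/N_{d\cdot\cdot}$;
\item the bound $\mathbb{E}[N_{d\cdot\cdot}^{-1}1_{\{N_{d\cdot\cdot}\ge x\}}]\le x^{-1}\mathbb{P}(N_{d\cdot\cdot}\ge x)$;
\item regular variation plus monotonicity, to get local uniformity.
\end{itemize}

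\textbf{How you argue.} You use the exact independence of $\theta_d$ and $N_{d\cdot\cdot}$. This makes the oracle angles $p_d$ of the selected documents exactly i.i.d.\ from $S$, so the classical CLT replaces the tail empirical process and the random threshold needs no separate treatment. You then split the multinomial noise into two parts:
\begin{itemize}
\item a conditionally centered part, with conditional variance at most $C/R_{(k_D)}$;
\item a bias part, bounded by $C\sqrt{k_D}/R_{(k_D)}$.
\end{itemize}
Both are controlled using only $R_{(k_D)}/b(D/k_D)\to_P 1$.

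\textbf{What each route buys.} Your route is self-contained. It shows plainly that the multinomial bias is the only place where $\sqrt{k_D}/b(D/k_D)\to 0$ is used. It also gives joint normality of finitely many moments directly from the multivariate CLT. The paper's route is shorter given the citation. It is also modular: it would carry over to models where angle and radius are only asymptotically independent, which your exact i.i.d.\ reduction does not directly cover.

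\textbf{One correction.} $R_d=N_{d\cdot\cdot}$ is integer-valued, so ties at $R_{(k_D)}$ do have positive probability. The estimator as defined keeps all tied documents while still dividing by $k_D$. You therefore need either $\#\{d:R_d\ge R_{(k_D)}\}-k_D=o_P(\sqrt{k_D})$, or a tie-breaking rule that depends only on the radii, which preserves your i.i.d.\ structure. The paper's argument does not address ties either.
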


With Theorem \ref{thm:EDM} in hand, we may now state rates of convergence for the extreme-value based adjusted moment estimators. These rates are employed to derive the consistency of Algorithm \ref{alg:specLDA} using $\hat{\Sigma}_D$ and $\hat{\mathcal{T}}_D$ as inputs. The proof of Lemma \ref{lem:Op} is provided in Section \ref{sec:Opproof}.

\begin{lemma}
\label{lem:Op}
Consider the setting of Theorem \ref{thm:freqmrv}. Suppose that as $D \rightarrow \infty$, $k_D \rightarrow \infty$ and $k_D/D \rightarrow 0$. If, in addition
\begin{align*}
\sqrt{k_D}/b(D/k_D) \rightarrow 0, \qquad \text{as } D \rightarrow \infty,
\end{align*}
we have that
\begin{align*}
\|\hat{\tSigma}_D - \tSigma \| = O_p\left(\frac{1}{\sqrt{k_D}}\right), \quad \|\hat{\tmathcalT}_D - \tmathcalT \| = O_p\left(\frac{1}{\sqrt{k_D}}\right),
\end{align*}
where $\hat{\tmathcalT}_D$ and $\hat{\tSigma}_D$ denote the plug-in estimators for $\tmathcalT$ and $\tSigma$ using the extreme-value moment estimators defined in \eqref{eq:s1}, \eqref{eq:s2} and \eqref{eq:s3}.
\end{lemma}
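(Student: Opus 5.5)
The plan is to derive Lemma \ref{lem:Op} from the entrywise rates in Theorem \ref{thm:EDM}, using that $W$ and $K$ are fixed while only $D \to \infty$. Theorem \ref{thm:EDM} gives, for each fixed $u,v,w$, that $\sqrt{k_D}(\hat{\mathbb{E}}_D[p_{du}] - \mathbb{E}[p_{du}])$, $\sqrt{k_D}(\hat{\mathbb{E}}_D[p_{du}p_{dv}] - \mathbb{E}[p_{du}p_{dv}])$ and $\sqrt{k_D}(\hat{\mathbb{E}}_D[p_{du}p_{dv}p_{dw}] - \mathbb{E}[p_{du}p_{dv}p_{dw}])$ converge weakly. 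Weakly convergent sequences are tight, so each entry error is $O_p(k_D^{-1/2})$. There are only finitely many entries ($W + W^2 + W^3$), and a finite union of $O_p$ terms is $O_p$. Hence
\[
\|\hat{M}_D - M\|_2,\quad \|\hat{\mathbb{E}}_D[p_d\otimes p_d] - \mathbb{E}[p_d\otimes p_d]\|_F,\quad \|\hat{\mathbb{E}}_D[p_d\otimes p_d\otimes p_d] - \mathbb{E}[p_d\otimes p_d\otimes p_d]\|_F
\]
are all $O_p(k_D^{-1/2})$. Since $\|T\| \le \|T\|_F$ for both matrices and tensors, it suffices to bound Frobenius norms throughout.

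Next I would expand the plug-in adjusted moments. For $\hat{\tSigma}_D - \tSigma$, the scaling $\alpha_0(\alpha_0+1)$ is a fixed constant, so it is enough to treat $\hat{\Sigma}_D - \Sigma$. This is the second-moment error minus $\frac{\alpha_0}{\alpha_0+1}(\hat{M}_D\otimes\hat{M}_D - M\otimes M)$. For the cross term I would use the decomposition
\[
\hat{M}_D\otimes\hat{M}_D - M\otimes M = (\hat{M}_D - M)\otimes \hat{M}_D + M\otimes(\hat{M}_D - M).
\]
Because $\|a\otimes b\|_F = \|a\|_2\|b\|_2$, and $\hat{M}_D$ and $M$ both lie in $\Delta^{W-1}$ (so have norm at most $1$), this term is $O_p(k_D^{-1/2})$.

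For $\hat{\tmathcalT}_D - \tmathcalT$ I would treat each piece of $\mathcal{T}$ separately.
\begin{itemize}
\item The third-moment term is handled directly by the first step.
\item Each mixed term such as $\mathbb{E}[p_d\otimes p_d\otimes M]$ is estimated by $\hat{\mathbb{E}}_D[p_d\otimes p_d]\otimes \hat{M}_D$. I would telescope this as $(\hat{\mathbb{E}}_D[p_d\otimes p_d] - \mathbb{E}[p_d\otimes p_d])\otimes\hat{M}_D + \mathbb{E}[p_d\otimes p_d]\otimes(\hat{M}_D - M)$, and likewise for the permuted versions. All factors are bounded in Frobenius norm because they are built from probability vectors.
\item The term $M\otimes M\otimes M$ is telescoped into three differences in the same way.
\end{itemize}
The multiplying constants depend only on $\alpha_0$, so every piece is $O_p(k_D^{-1/2})$. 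Summing the pieces with the triangle inequality gives the result.

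The only delicate point is that the estimator of the mixed term must be $\hat{\mathbb{E}}_D[p_d\otimes p_d]\otimes\hat M_D$, which is consistent with the plug-in definition in the paper. Given that, the argument is routine. The real work, namely the asymptotic normality under the bias condition $\sqrt{k_D}/b(D/k_D)\to 0$, is already contained in Theorem \ref{thm:EDM}. Note also that only marginal tightness is needed here, not joint convergence, which is why the marginal statements in Theorem \ref{thm:EDM} suffice.
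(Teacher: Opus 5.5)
Your proposal is correct and follows essentially the same route as the paper: entrywise $O_p(k_D^{-1/2})$ bounds from the marginal limits in Theorem \ref{thm:EDM}, the bound $\|\cdot\|\le\|\cdot\|_F$ over finitely many entries, and telescoping of the $\hat{M}_D\otimes\hat{M}_D$, mixed and $\hat{M}_D\otimes\hat{M}_D\otimes\hat{M}_D$ correction terms with the triangle inequality. One small caveat is that document lengths are integers, so ties at $R_{(k_D)}$ can let more than $k_D$ documents into the sums and $\hat{M}_D$ need not lie exactly in $\Delta^{W-1}$; this is harmless, because your first step already gives $\|\hat{M}_D\|_2\le\|M\|_2+\|\hat{M}_D-M\|_2=O_p(1)$, which is the $O_p(1)$ bound the paper uses.
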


Theorem \ref{thm:roc} establishes the $\sqrt{k_D}$ rate of convergence for the topic estimators in Step 4 of Algorithm \ref{alg:specLDA} under appropriate conditions on $k_D$ and $N_D$. Recall that the former condition is a bias condition needed in order to establish asymptotic normality of the extreme-value based moment estimators while the latter ensures that sufficiently many tensor power iterations are performed in accordance with Theorem 5.1 of \cite{anandkumar2014tensor}. Note that for fixed $L$, the robust tensor power method can fail with positive probability: as discussed in Section \ref{sec:tensor}, the $L$ random restarts are required to obtain, with high probability, an initialization that sufficiently separates the tensor eigenvalues relative to the error tensor. Hence, the failure probability is controlled by the parameter $\delta$ in the theorem statement. The proof of Theorem \ref{thm:roc}, along with supporting lemmas, is provided in Section \ref{sec:rocproof}.

\begin{theorem}
\label{thm:roc}
Consider the setting of Theorem \ref{thm:freqmrv}. Fix $\varsigma > 0$. Suppose that as $D \rightarrow \infty$, $k_D \rightarrow \infty$ and $k_D/D \rightarrow 0$. Further suppose that
\begin{align*}
&\sqrt{k_D}/b(D/k_D) \rightarrow 0, \\ 
& N_D/\log\log k_D \rightarrow \infty,
\end{align*}
as $D \rightarrow \infty$ and for a given $\delta > 0$
\begin{align*}
L \geq& \text{poly}(K)\log(1/\delta),
\end{align*}
for some fixed polynomial defined in Theorem 5.1 of \cite{anandkumar2014tensor}. Then there exists a constant $\kappa > 0$ such that for $D$ sufficiently large, after appropriate reordering
\begin{align*}
\mathbb{P}\left(\max_{k = 1, \dots, K} \left\Vert \hat{\beta}_{D,k} - \phi_k \right\Vert_2 > \frac{\kappa}{\sqrt{k_D}} \right) < \delta + \varsigma.
\end{align*}
\end{theorem}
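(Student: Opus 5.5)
The argument chains Lemma \ref{lem:Op} into a deterministic perturbation analysis of Algorithm \ref{alg:specLDA}, built on Theorem 5.1 of \cite{anandkumar2014tensor}. Lemma \ref{lem:Op} gives $\|\hat{\tSigma}_D - \tSigma\| = O_p(k_D^{-1/2})$ and $\|\hat{\tmathcalT}_D - \tmathcalT\| = O_p(k_D^{-1/2})$. For the given $\varsigma$, choose $C>0$ such that, for all $D$ large, the event
\[
E_D=\Bigl\{\|\hat{\tSigma}_D - \tSigma\| \le C/\sqrt{k_D},\ \|\hat{\tmathcalT}_D - \tmathcalT\| \le C/\sqrt{k_D}\Bigr\}
\]
has probability at least $1-\varsigma$. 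On $E_D$ the rest of the argument is deterministic, apart from the randomness of the restarts in Algorithm \ref{alg:tensorpower}. That randomness contributes the extra failure probability $\delta$.

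\textbf{Step 1: whitening.} Since $\Phi$ has full column rank and every $\alpha_k>0$, $\tSigma$ has rank $K$ with smallest nonzero eigenvalue $\sigma_K>0$. By Weyl's inequality and the Davis--Kahan theorem, on $E_D$ we have $\|\hat{\Lambda}_D - \Lambda\| \le C/\sqrt{k_D}$. Moreover, the top-$K$ eigenspaces are close, in the sense that there is an orthogonal $R_D$ with $\|\hat U_D - U R_D\| \lesssim k_D^{-1/2}/\sigma_K$. Consequently $\|\hat{\mathcal W}_D - \mathcal W R_D'\|$ and $\|\hat U_D\hat\Lambda_D^{1/2} - U\Lambda^{1/2}R_D\|$ are both $O(k_D^{-1/2})$. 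To avoid having to align eigenvectors individually, I will follow the standard argument of \cite{anandkumar2012spectral} (their Lemma C.1-type bounds), which controls $\hat{\mathcal W}_D'\tSigma\hat{\mathcal W}_D$ directly. This yields a matrix $\hat{\mathcal W}_D$ that whitens $\tSigma$ up to $O(k_D^{-1/2})$ error.

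\textbf{Step 2: whitened tensor error.} Write
\[
\hat{\tmathcalT}_D(\hat{\mathcal W}_D,\hat{\mathcal W}_D,\hat{\mathcal W}_D) - \tmathcalT(\mathcal W,\mathcal W,\mathcal W)
\]
after the rotation $R_D$, telescoping over the three modes. Each difference term is bounded by the operator norms of the tensors times $\|\hat{\mathcal W}_D\|$ raised to bounded powers. The outcome is that the whitened tensor equals an orthogonally decomposable tensor $\sum_k \alpha_k^{-1/2}\varphi_k^{\otimes 3}$ plus an error $\mathcal E_D$ with $\|\mathcal E_D\| \le \epsilon_D := C'/\sqrt{k_D}$. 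One subtlety needs care: the whitened true tensor is orthogonally decomposable only in the rotated coordinates $R_D\varphi_k$. I will handle this by absorbing the rotation into the eigenvectors, which is harmless since $R_D$ is orthogonal.

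\textbf{Step 3: tensor power method.} For large $D$ we have $\epsilon_D \le c\,\lambda_{\min}/K$ with $\lambda_{\min}=\min_k\alpha_k^{-1/2}$. Theorem 5.1 of \cite{anandkumar2014tensor} then applies. With $L\ge \text{poly}(K)\log(1/\delta)$ and $N_D \gtrsim \log K + \log\log(\lambda_{\max}/\epsilon_D)$, it returns pairs satisfying $\|\hat\varphi_{D,k}-\varphi_k\|\le 8\epsilon_D/\lambda_k$ and $|\hat\lambda_{D,k}-\lambda_k|\le 5\epsilon_D$ with probability at least $1-\delta$. The condition $N_D/\log\log k_D\to\infty$ is exactly what makes $N_D$ eventually exceed this threshold, since $\log\log(1/\epsilon_D)\sim\log\log k_D$. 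I expect this step to be the main obstacle. The iteration count, the restart requirement and the deflation-error bookkeeping must be matched carefully against the hypotheses of that theorem, and the randomness of the restarts must be conditioned on the data so that the failure probabilities add to $\delta+\varsigma$.

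\textbf{Step 4: recovering the topics.} Write
\[
\hat\beta_{D,k}-\phi_k=\hat\lambda_{D,k}\hat U_D\hat\Lambda_D^{1/2}\hat\varphi_{D,k}-\lambda_k U\Lambda^{1/2}\varphi_k
\]
and expand it as a sum of three product-difference terms. Note that $\lambda_kU\Lambda^{1/2}\varphi_k = \alpha_k^{-1/2}\tphi_k=\phi_k$, because $U\Lambda^{1/2}\mathcal W'$ is the projection onto the span of the topics. Each of the three terms is $O(\epsilon_D)$, using the bounded quantities $\lambda_k$ and $\|U\Lambda^{1/2}\|=\|\tSigma\|^{1/2}$. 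This gives a constant $\kappa$, depending only on $\Phi$, $\alpha$ and $C$, such that the event $\max_k\|\hat\beta_{D,k}-\phi_k\|_2\le\kappa/\sqrt{k_D}$ holds on the intersection of $E_D$ with the success event of the power method. The probability of the complement is at most $\varsigma+\delta$, which is the claimed bound.
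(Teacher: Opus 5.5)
Your proof is correct and has the same architecture as the paper's: high-probability events from Lemma \ref{lem:Op}, a whitened tensor within $O(k_D^{-1/2})$ of an orthogonally decomposable one, Theorem 5.1 of \cite{anandkumar2014tensor} applied conditionally on the data, dewhitening, and a union bound giving $\delta+\varsigma$. The genuine difference is the reference whitener. You compare $\hat{\mathcal W}_D$ with the rotated population whitener $\mathcal W R_D$, where $R_D$ is a Davis--Kahan alignment of $\hat U_D$ with $U$. The paper instead uses the data-dependent exact whitener $\hat{\mathcal V}_D=\hat{\mathcal W}_D(\hat{\mathcal W}_D'\tSigma\hat{\mathcal W}_D)^{-1/2}$. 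That choice needs only the Weyl-type bounds of Lemma \ref{lem:whitening} on $\hat{\mathcal W}_D'\tSigma\hat{\mathcal W}_D$, never selects a rotation, and uses only the rotation-free projection bound $\|\hat U_D\hat U_D'-UU'\|$, at the cost of a four-term rather than three-term dewhitening decomposition.

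Your route needs one step that your ``consequently'' skips. The bounds $\|\hat{\mathcal W}_D-\mathcal W R_D\|=O(k_D^{-1/2})$ and $\|\hat U_D\hat\Lambda_D^{1/2}-U\Lambda^{1/2}R_D\|=O(k_D^{-1/2})$ do not follow from $\|\hat U_D-UR_D\|$ and $\|\hat\Lambda_D-\Lambda\|$ alone. For instance, take $\hat\Lambda_D=\Lambda$ with distinct entries and $\hat U_D=UR_D$ with $R_D$ non-diagonal. What is missing is that $R_D$ nearly commutes with $\Lambda$. This holds because $\hat U_D$ diagonalizes $\hat{\tSigma}_D$, so that $R_D\hat\Lambda_D R_D'=R_D\hat U_D'\hat{\tSigma}_D\hat U_DR_D'=\Lambda+O(k_D^{-1/2})$. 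Operator-Lipschitz continuity of $x\mapsto x^{\pm1/2}$ on $[\ts^{(K)}/2,\infty)$ then gives both bounds. Also, $\mathcal W R_D'$ should read $\mathcal W R_D$.

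In substance the two routes coincide. Since each $\tphi_k$ lies in the column space of $U$ and $\hat{\mathcal V}_D'\tSigma\hat{\mathcal V}_D=I$, one has $\hat{\mathcal V}_D'\tphi_k=Q'\varphi_k$ with $Q=\Lambda^{1/2}U'\hat{\mathcal V}_D$ orthogonal. So the paper's reference tensor is of your rotated form, with the rotation chosen automatically.
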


\subsection{Selecting the number of topics}\label{sec:ntopic}

Algorithm \ref{alg:specLDA} estimates the topic model assuming that the number of topics, $K$, is known. Various strategies for choosing $K$ in topic modeling have been proposed. One class of methods maximizes the marginal likelihood or BIC-like metrics \cite{griffiths2004finding, bystrov2024choosing}. Another class of methods employs measures such as perplexity or topic coherence metrics on held-out documents \cite{blei2003latent, newman2011improving}. In accordance with our extreme-value motivated methodology, we propose a scree-plot procedure based on the spectral decomposition of $\hat{\Sigma}_D$. A similar procedure for the LDA model was analyzed by \cite{cheng2015model}.

Lemma \ref{lem:moments} reveals that $\Sigma$ is the sum of $K$ rank-1 symmetric matrices with linearly independent factors and thus $\Sigma$ is rank $K$. Denote the eigenvalues of $\Sigma$ by $s^{(1)} \geq s^{(2)} \geq \dots \geq s^{(W)}$. Since $\Sigma$ is positive semi-definite and rank $K$, the first $K$ eigenvalues are strictly positive while the rest are zero. Hence, if $\Sigma$ were known, the number of topics could be chosen by identifying the number of non-zero eigenvalues. Instead, we have access to the consistent estimator $\hat{\Sigma}_D$. Denote the eigenvalues of $\hat{\Sigma}_D$ by $\hat{s}_D^{(1)} \geq \hat{s}_D^{(2)} \geq \dots \geq \hat{s}_D^{(W)}$. Lemma \ref{lem:eigen} shows that the eigenvalues of $\hat{\Sigma}_D$ consistently estimate the eigenvalues of $\Sigma$, and thus one may choose the number of topics by identifying the number of non-zero eigenvalues of $\hat{\Sigma}_D$.

\begin{lemma}
\label{lem:eigen}
Consider the setting of Theorem \ref{thm:freqmrv}. Suppose that as $D \rightarrow \infty$, $k_D \rightarrow \infty$ and $k_D/D \rightarrow 0$. If, in addition
\begin{align*}
\sqrt{k_D}/b(D/k_D) \rightarrow 0, \qquad \text{as } D \rightarrow \infty,
\end{align*}
we have that
\begin{align*}
(\hat{s}_D^{(1)}, \hat{s}_D^{(2)}, \dots, \hat{s}_D^{(W)}) = (s^{(1)}, s^{(2)}, \dots, s^{(W)}) + O_p\left( \frac{1}{\sqrt{k_D}}\right).
\end{align*}
\end{lemma}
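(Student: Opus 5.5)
The proof combines Weyl's inequality for eigenvalues of symmetric matrices with a rate of convergence for $\hat{\Sigma}_D$ in operator norm. Both $\Sigma$ and $\hat{\Sigma}_D$ are symmetric $W\times W$ matrices; $\hat{\Sigma}_D$ is symmetric because each plug-in term $\Omega_d\otimes\Omega_d$ and $\hat{\mathbb{E}}_D[p_d]\otimes\hat{\mathbb{E}}_D[p_d]$ is symmetric. Weyl's inequality therefore gives, for every $j=1,\dots,W$,
\begin{align*}
\left|\hat{s}_D^{(j)} - s^{(j)}\right| \leq \|\hat{\Sigma}_D - \Sigma\|.
\end{align*}
Because $W$ is fixed, it suffices to show that $\|\hat{\Sigma}_D - \Sigma\| = O_p(1/\sqrt{k_D})$. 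The vector statement then follows coordinatewise, using $\max_j|\hat{s}_D^{(j)}-s^{(j)}|\le \|\hat{\Sigma}_D-\Sigma\|$.

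The operator-norm rate follows directly from Lemma \ref{lem:Op}. Since $\hat{\tSigma}_D = \alpha_0(\alpha_0+1)\hat{\Sigma}_D$ and $\tSigma = \alpha_0(\alpha_0+1)\Sigma$, with $\alpha_0$ a fixed positive constant, we have
\begin{align*}
\|\hat{\Sigma}_D - \Sigma\| = \frac{1}{\alpha_0(\alpha_0+1)}\|\hat{\tSigma}_D - \tSigma\| = O_p\left(\frac{1}{\sqrt{k_D}}\right).
\end{align*}
If we want to avoid relying on Lemma \ref{lem:Op}, the rate can also be obtained from Theorem \ref{thm:EDM}:
\begin{enumerate}
\item Bound the operator norm by the Frobenius norm, $\|\hat{\Sigma}_D-\Sigma\|\le\|\hat{\Sigma}_D-\Sigma\|_F$, which is a finite sum of $W^2$ entries.
\item Write each entry of the error as
\begin{align*}
\bigl(\hat{\mathbb{E}}_D[p_{du}p_{dv}]-\mathbb{E}[p_{du}p_{dv}]\bigr) - \frac{\alpha_0}{\alpha_0+1}\bigl(\hat{M}_{D,u}\hat{M}_{D,v}-M_uM_v\bigr).
\end{align*}
\item Control the first term: the marginal asymptotic normality in Theorem \ref{thm:EDM} makes each $\sqrt{k_D}$-scaled first term tight, so it is $O_p(1/\sqrt{k_D})$.
\item Control the second term: expand $\hat{M}_{D,u}\hat{M}_{D,v}-M_uM_v=(\hat{M}_{D,u}-M_u)\hat{M}_{D,v}+M_u(\hat{M}_{D,v}-M_v)$. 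Each difference is $O_p(1/\sqrt{k_D})$ by Theorem \ref{thm:EDM}, and all entries are bounded by $1$ because $\Omega_d$ and $M$ lie in the simplex.
\item Combine: a finite sum of $O_p(1/\sqrt{k_D})$ terms is $O_p(1/\sqrt{k_D})$.
\end{enumerate}

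The only point needing care is that the eigenvalues on both sides are sorted in decreasing order. Weyl's inequality is stated for matched ordered eigenvalues, so no matching argument is required. There is no real obstacle: the analytic work, namely the bias condition $\sqrt{k_D}/b(D/k_D)\to 0$ controlling the gap between the angular distribution of long documents and $S$, is already contained in Theorem \ref{thm:EDM} and Lemma \ref{lem:Op}.
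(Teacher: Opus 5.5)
Your proof is correct and follows the paper's argument: the paper applies Weyl's inequality to get $\max_{w} |\hat{s}_D^{(w)} - s^{(w)}| \leq \|\hat{\Sigma}_D - \Sigma\|$ and then invokes Lemma~\ref{lem:Op}, exactly as you do. Your explicit rescaling by $\alpha_0(\alpha_0+1)$ spells out a step the paper leaves implicit, and your alternative entrywise route via Theorem~\ref{thm:EDM} essentially reproduces the proof of Lemma~\ref{lem:Op}.
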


\begin{proof}
Using Weyl's inequality \citep[see Lemma 4.1.14 of][]{vershynin2026high},
\begin{align*}
\max_{w = 1, \dots, W} |\hat{s}_D^{(w)} - s^{(w)} | \leq \| \hat{\Sigma}_D - \Sigma \|.
\end{align*}
Hence, applying Lemma \ref{lem:Op} gives the result.
\end{proof}

\section{Simulation studies}\label{sec:sim}

In this section, the empirical performance of the proposed extreme-value estimation method for the topic matrix is investigated through simulations. We compare the estimation error for $\Phi$ and wall-clock time of Algorithm \ref{alg:specLDA} to that of the full spectral LDA algorithm proposed by \cite{anandkumar2012spectral, anandkumar2015spectral} and the variational expectation maximization algorithm of \cite{blei2003latent}. The first three simulations investigate how changes in the distribution of document lengths impact the performance of the three algorithms. The fourth simulation evaluates how misspecification of the $\alpha_0$ parameter impacts error rates. The simulations are run on an Intel Xeon Gold 6140 processor with 256 GB of memory.

In order to implement the VEM algorithm, we use the \texttt{topicmodels} R package \cite{grun2011topicmodels}. In VEM, the latent topic weights are governed by a Dirichlet prior as in \eqref{eq:dirichlet}. In the first three simulations of Section \ref{sec:dlsim}, we set the prior parameters of the Dirichlet distribution to the true values under the data generating process. This is done to draw a fair comparison to the spectral LDA methods, which only require specification of $\alpha_0$, the concentration parameter. For the other settings, we retain the defaults provided by the \texttt{topicmodels} package.

The spectral LDA algorithm is implemented using Python code that is available from \url{https://github.com/Mega-DatA-Lab/SpectralLDA} \cite{huang2017spectral}. This implementation of spectral LDA differs from Algorithm \ref{alg:specLDA} in two ways. First, one iteration of a randomized SVD algorithm is employed to accelerate the eigendecomposition of $\hat{\tSigma}_D$. This technique is not leveraged in our implementation of the extreme-value estimation method and hence advantages the full spectral LDA algorithm in our runtime comparisons. Secondly, an alternating least squares algorithm, an alternative to the tensor power method, is used to decompose the whitened tensor $\hat{\tmathcalT}_D(\hat{\mathcal{W}}_D, \hat{\mathcal{W}}_D, \hat{\mathcal{W}}_D)$. 

Our implementation of the extreme-value estimation method uses a Lanczos style algorithm from the \texttt{RSpectra} package to compute the eigendecomposition of $\hat{\tSigma}_D$ \cite{rSpectra}. The tensor power method requires specification of the number of random restarts $L$ and the number of power iterations $N_D$, both of which we set to 20. 

\subsection{Effect of document lengths}\label{sec:dlsim}

We consider documents generated from the LDA model presented in Section \ref{sec:tm}. We generate documents with $K = 5$ topics. We allow the number of documents to vary across $D \in \{1000, 2000, 5000 \}$ and the number of words in the vocabulary to vary between $W \in \{2000, 5000, 10000 \}$. For each vocabulary size $W$, we sample the true topic vectors $\phi_1, \dots, \phi_K$ from a symmetric Dirichlet distribution with concentration parameter $W \times 0.1$. This results in sparse topic vectors, as commonly found in practice. 

For each pair $(D, W)$ we simulate 1{,}000 corpus realizations. The document-topic probabilities in \eqref{eq:dirichlet} are drawn from a symmetric Dirichlet distribution with $\alpha_0 = 1$. In each of the three simulation scenarios, the document lengths are drawn from a different power-law distribution satisfying \eqref{eq:lengthrv}. For each estimation method, we report the average Frobenius error after alignment (via the Hungarian algorithm) and average run time across all realizations. The extreme-value method requires specification of $k_D$, the number of documents used in the estimation scheme. In order to assess the impact of this choice, we allow $k_D $ to vary according to $k_D = \gamma D$, $\gamma \in \{ 0.01, 0.025, 0.05, 0.10, 0.20\}$.
We also report the average number of iterations needed to perform the eigendecomposition of $\hat{\tSigma}_D$ in the extreme-value application of Algorithm \ref{alg:specLDA}. This provides context for the non-monotonicity of the run times.

In the first simulation, we assume that $N_{d\cdot\cdot} = \ceil{Y_d}$ where $Y_1, \dots, Y_D$ are drawn iid from a Lomax distribution:
\begin{align*}
\mathbb{P}(Y_1 > y) = \left(1 + \frac{y}{\lambda}\right)^{-\iota}, \qquad y \geq 0,
\end{align*}
where $\lambda = 50$ and $\iota = 2$. This distribution produces power-law document lengths with non-negligible average length since $\mathbb{E}[Y_1] = 50$. The average Frobenius error for estimating $\Phi$ over the 1{,}000 realizations is reported in Table \ref{tab:sim1error}. Across sample sizes and vocabularies, the extreme-value method outperforms the full spectral LDA method for nearly all choices of $k_D$. The full spectral LDA method uses all documents with three or more words in its estimation of the moments, and as such, the inclusion of short documents seems to produce noisy estimators. The VEM algorithm produces the uniformly smallest estimation error, but at the expense of exceptionally long runtimes as seen in Table \ref{tab:sim1time}. The shortest average runtimes belong to the extreme-value method, lending credence to the computational benefits of using much fewer documents in the estimation procedure. While the average runtime tends to increase with $k_D$, the effect is non-uniform. The non-uniformity is correlated with the average number of iterations required to decompose $\hat{\tSigma}_D$ in Algorithm \ref{alg:specLDA}, as reported in Table \ref{tab:sim1iter}. As $k_D$ increases, more documents are used to construct $\hat{\tSigma}_D$, resulting in less noisy estimates. However, $\hat{\Sigma}_D$ is a function of $\hat{\mathbb{E}}_D[p_d \otimes p_d]$ which has rank at most $\min\left\lbrace k_D, W \right\rbrace$, ignoring ties in document lengths. Hence, increasing $k_D$ has some non-monotone effects on the computational burden of the eigendecomposition. 

\begin{table}
\centering
\begin{tabular}{ c c c c c c c c c}
\hline \hline
 & & \multicolumn{5}{c}{EVT, $k_D = \gamma D$} & & \\
\cline{3-7}
$D$ & $W$ & $\gamma = 0.01$ & $\gamma = 0.025$ & $\gamma = 0.05$ & $\gamma = 0.10$ & $\gamma = 0.20$ & SLDA & VEM  \\
\hline
 1000  &2000 &0.1085 &0.0725 &0.0584 &0.0502 &0.0457 &0.2342  &0.0340 \\
 1000  &5000 &0.0993 &0.0733 &0.0589 &0.0507 &0.0461 &0.3978 &0.0319 \\
 1000 &10000 &0.0999  &0.0820  &0.0640 &0.0538  &0.0490 &0.3918 &0.0319 \\
 2000  &2000 &0.0664 &0.0481 &0.0401  &0.0350 &0.0323  &0.0820 &0.0255 \\
 2000  &5000  &0.0630 &0.0469 &0.0396 &0.0346 &0.0318 &0.2203 &0.0234 \\
 2000 &10000 &0.0663 &0.0482 &0.0403 &0.0352 &0.0323 &0.2692 &0.0222 \\
 5000  &2000 &0.0384 &0.0296 &0.0252 &0.0222 &0.0211 &0.0381  &0.0180 \\
 5000  &5000 &0.0364 &0.0287 &0.0246 &0.0217   &0.0200  &0.0550 &0.0167 \\
 5000 &10000 &0.0361 &0.0285 &0.0245 &0.0216 &0.0199  &0.1120 &0.0155 \\
\hline \hline
\end{tabular}
\caption{Average Frobenius error for estimating $\Phi$ when document lengths are drawn from a rounded Lomax distribution.}\label{tab:sim1error}
\end{table}

\begin{table}
\centering
\begin{tabular}{ c c c c c c c c c}
\hline \hline
 & & \multicolumn{5}{c}{EVT, $k_D = \gamma D$} & & \\
\cline{3-7}
$D$ & $W$ & $\gamma = 0.01$ & $\gamma = 0.025$ & $\gamma = 0.05$ & $\gamma = 0.10$ & $\gamma = 0.20$ & SLDA & VEM  \\
\hline
 1000  & 2000 & 0.0123 & 0.0175 & 0.0183 & 0.0195 & 0.0211 & 0.0615  & 4.1119 \\
 1000  & 5000 & 0.0192 & 0.0339 & 0.0328 & 0.0316  & 0.0330 & 0.1058  & 5.8667 \\
 1000 & 10000 & 0.0287  & 0.0680 & 0.0712 & 0.0654 & 0.0658 & 0.1514    & 8.7600 \\
 2000  & 2000 & 0.0159 & 0.0177 & 0.0184 & 0.0211 & 0.0257 & 0.0631  & 8.1018 \\
 2000  & 5000 & 0.0295  & 0.0310 & 0.0322 & 0.0341  & 0.0370  & 0.1050   & 9.8690 \\
 2000 & 10000 & 0.0518  & 0.0550 & 0.0503 & 0.0508 & 0.0534 & 0.1557 & 12.2372 \\
 5000  & 2000 & 0.0165 & 0.0186 & 0.0215 & 0.0255 & 0.0342 & 0.0978 & 19.0227 \\
 5000  & 5000 & 0.0324 & 0.0354 & 0.0397 & 0.0458 & 0.0529 & 0.1352 & 22.7959 \\
 5000 & 10000 & 0.0509 & 0.0539 & 0.0579 & 0.0633 & 0.0697 & 0.1947 & 25.6062 \\
\hline \hline
\end{tabular}
\caption{Average runtime (in seconds) when document lengths are drawn from a rounded Lomax distribution.}\label{tab:sim1time}
\end{table}

\begin{table}
\centering
\begin{tabular}{ c c c c c c c }
\hline \hline
$D$ & $W$ & $\gamma = 0.01$ & $\gamma = 0.025$ & $\gamma = 0.05$ & $\gamma = 0.10$ & $\gamma = 0.20$  \\
\hline
 1000  &2000     &1.000 &2.035     &2.000     &2.000     &2.000 \\
 1000  &5000     &1.000 &2.578 &2.306 &2.056  &2.020 \\
 1000 &10000     &1.000 &3.605 &3.604 &3.144 &3.034 \\
 2000  &2000 &1.567 &1.729 &1.614  &1.760     &2.000 \\
 2000  &5000 &2.103 &2.024     &2.000     &2.000     &2.000 \\
 2000 &10000 &2.531 &2.573 &2.091 &2.004 &2.001 \\
 5000  &2000 &1.026 &1.002     &1.000 &1.002 &1.461 \\
 5000  &5000 &2.002 &1.995 &1.991     &2.000     &2.000 \\
 5000 &10000 &2.025     &2.000     &2.000     &2.000     &2.000 \\
\hline \hline
\end{tabular}
\caption{Average number of eigendecomposition iterations for the extreme-value method when document lengths are drawn from a rounded Lomax distribution.}\label{tab:sim1iter}
\end{table}

Table \ref{tab:sim2error} reports the average Frobenius error $\Phi$ over 1{,}000 corpus realizations of documents with lengths satisfying $N_{d\cdot\cdot} = \ceil{Y_d}$ where $Y_1, \dots, Y_D$ are drawn iid from a Pareto distribution:
\begin{align*}
\mathbb{P}(Y_1 > y) = (y/C)^{-\iota}, \qquad y \geq C,
\end{align*}
where $C = 49$ and $\iota = 2$. This distribution produces power-law document lengths with minimum length 50. Naturally, the full spectral LDA method produces uniformly lower error than  the extreme-value procedure since all documents are sufficiently long to bear reliable moment estimates. Given the discrepancy in sample size, however, the extreme-value procedure remains competitive. For larger corpora, it also outperforms the VEM method while generally matching its performance otherwise. As presented in Table \ref{tab:sim2time}, the extreme-value method retains its computational advantages in this setting. The average run time can be three to four times as fast as the full spectral LDA method. As $k_D$ increases, the average run time tends to increase, with the same non-monotonicity as observed in the previous setting. The average number of iterations required to decompose $\hat{\tSigma}_D$ in Algorithm \ref{alg:specLDA} is reported in Table \ref{tab:sim2iter}.

\begin{table}
\centering
\begin{tabular}{ c c c c c c c c c}
\hline \hline
 & & \multicolumn{5}{c}{EVT, $k_D = \gamma D$} & & \\
\cline{3-7}
$D$ & $W$ & $\gamma = 0.01$ & $\gamma = 0.025$ & $\gamma = 0.05$ & $\gamma = 0.10$ & $\gamma = 0.20$ & SLDA & VEM  \\
\hline
 1000  &2000 &0.1054 &0.0681 &0.0535 &0.0438 &0.0365  &0.0240  &0.0250 \\
 1000  &5000 &0.0949 &0.0666 &0.0525 &0.0431 &0.0359 &0.0238 &0.0239 \\
 1000 &10000 &0.0948 &0.0717 &0.0543 &0.0437 &0.0361  &0.0240 &0.0221 \\
 2000  &2000 &0.0643 &0.0458 &0.0371 &0.0307 &0.0259 &0.0169 &0.0187 \\
 2000  &5000 &0.0605 &0.0442 &0.0361 &0.0301 &0.0253 &0.0167 &0.0171 \\
 2000 &10000 &0.0624 &0.0445 &0.0362   &0.0300 &0.0252 &0.0167 &0.0169 \\
 5000  &2000 &0.0376 &0.0282 &0.0233 &0.0196 &0.0168 &0.0107 &0.0126 \\
 5000  &5000 &0.0353 &0.0271 &0.0226  &0.0190 &0.0161 &0.0106 &0.0117 \\
 5000 &10000 &0.0349 &0.0269 &0.0224 &0.0188 &0.0159 &0.0106  &0.0110 \\
\hline \hline
\end{tabular}
\caption{Average Frobenius error for estimating $\Phi$ when document lengths are drawn from a rounded Pareto distribution.}\label{tab:sim2error}
\end{table}

\begin{table}
\centering
\begin{tabular}{ c c c c c c c c c}
\hline \hline
 & & \multicolumn{5}{c}{EVT, $k_D = \gamma D$} & & \\
\cline{3-7}
$D$ & $W$ & $\gamma = 0.01$ & $\gamma = 0.025$ & $\gamma = 0.05$ & $\gamma = 0.10$ & $\gamma = 0.20$ & SLDA & VEM  \\
\hline
 1000  & 2000 & 0.0126  & 0.0170 & 0.0179  & 0.0190 & 0.0209 & 0.0603  & 7.8639 \\
 1000  & 5000 & 0.0194 & 0.0317 & 0.0307 & 0.0314 & 0.0333  & 0.0860  & 9.4576 \\
 1000 & 10000 & 0.0286 & 0.0622 & 0.0598 & 0.0514 & 0.0504 & 0.1231 & 11.7312 \\
 2000  & 2000 & 0.0162 & 0.0168 & 0.0168 & 0.0185 & 0.0218 & 0.0804 & 15.3605 \\
 2000  & 5000 & 0.0298 & 0.0311 & 0.0327 & 0.0356   & 0.0400 & 0.1104 & 17.7308 \\
 2000 & 10000 & 0.0517 & 0.0519 & 0.0507 & 0.0535 & 0.0577 & 0.1513 & 19.9068 \\
 5000  & 2000 & 0.0195 & 0.0218 & 0.0249   & 0.0300 & 0.0378 & 0.1508 & 37.8513 \\
 5000  & 5000 & 0.0353  & 0.0370 & 0.0374 & 0.0424 & 0.0545 & 0.1942   & 44.0300 \\
 5000 & 10000 & 0.0535 & 0.0568 & 0.0609 & 0.0677 & 0.0784  & 0.2480 & 48.0551 \\
\hline \hline
\end{tabular}
\caption{Average runtime (in seconds) when document lengths are drawn from a rounded Pareto distribution.}\label{tab:sim2time}
\end{table}

\begin{table}
\centering
\begin{tabular}{ c c c c c c c }
\hline \hline
$D$ & $W$ & $\gamma = 0.01$ & $\gamma = 0.025$ & $\gamma = 0.05$ & $\gamma = 0.10$ & $\gamma = 0.20$  \\
\hline
 1000  &2000     &1.000 &1.949 &1.972 &1.963 &1.943 \\
 1000  &5000     &1.000 &2.351 &2.065 &2.001     &2.000 \\
 1000 &10000     &1.000  &3.290 &2.964 &2.249 &2.001 \\
 2000  &2000 &1.434 &1.389 &1.108 &1.016 &1.041 \\
 2000  &5000 &2.064 &2.011     &2.000     &2.000     &2.000 \\
 2000 &10000 &2.415 &2.232 &2.004     &2.000     &2.000 \\
 5000  &2000  &1.020     &1.000     &1.000     &1.000 &1.001 \\
 5000  &5000 &1.991 &1.834 &1.392 &1.275 &1.526 \\
 5000 &10000 &2.013     &2.000     &2.000     &2.000     &2.000 \\
\hline \hline
\end{tabular}
\caption{Average number of eigendecomposition iterations for the extreme-value method when document lengths are drawn from a rounded Pareto distribution.}\label{tab:sim2iter}
\end{table}

As an intermediate setting, we now consider document lengths generated according to $N_{d\cdot\cdot} = \ceil{Y_d}$ where $Y_1, \dots, Y_D$ are drawn iid from a mixture of Pareto distributions:
\begin{align}
\label{eq:mix}
\mathbb{P}(Y_1 > y) = 0.5\min\left\lbrace 1, (y/C_1)^{-\iota_1} \right\rbrace + 0.5 \min\left\lbrace 1, (y/C_2)^{-\iota_2}\right\rbrace,
\end{align}
where $\iota_1 = 2$ and $C_1 = 49$ while $\iota_2 = 3$ and $C_2 = 9$. Hence, the minimum document length is 10, but over half of the documents have a length of at least 50. Table \ref{tab:sim3error} again reports the average Frobenius error $\Phi$ over 1{,}000 corpus realizations. For sufficiently large choices of $k_D$, the extreme-value method outperforms the full spectral LDA method. A small choice of $k_D$ produces noisy sample moments, while tuning $k_D$ to use many documents over the length 50 threshold produces quality estimates. On the other hand, the inclusion of short documents tends to degrade the performance of the spectral LDA method. The VEM algorithm again outperforms the other methods in terms of estimation error with the drawback of onerous computational demands as seen in Table \ref{tab:sim3time}. The extreme-value method is again the most computationally efficient among the three, no matter the choice of $k_D$. Table \ref{tab:sim3iter} reports the average number of iterations required to decompose $\hat{\tSigma}_D$ in Algorithm \ref{alg:specLDA}.

\begin{table}
\centering
\begin{tabular}{ c c c c c c c c c}
\hline \hline
 & & \multicolumn{5}{c}{EVT, $k_D = \gamma D$} & & \\
\cline{3-7}
$D$ & $W$ & $\gamma = 0.01$ & $\gamma = 0.025$ & $\gamma = 0.05$ & $\gamma = 0.10$ & $\gamma = 0.20$ & SLDA & VEM  \\
\hline
 1000  &2000 &0.1187 &0.0801 &0.0635 &0.0523 &0.0439 &0.0547  &0.0310 \\
 1000  &5000 &0.1118 &0.0821 &0.0637  &0.0520 &0.0433 &0.0684 &0.0309 \\
 1000 &10000 &0.1139  &0.0940 &0.0687 &0.0538  &0.0440 &0.0763 &0.0296 \\
 2000  &2000 &0.0749 &0.0538 &0.0439 &0.0367 &0.0312 &0.0365 &0.0231 \\
 2000  &5000 &0.0734 &0.0529 &0.0433 &0.0361 &0.0304 &0.0418 &0.0224 \\
 2000 &10000  &0.0790 &0.0548 &0.0439 &0.0362 &0.0304 &0.0513 &0.0215 \\
 5000  &2000 &0.0434 &0.0331 &0.0275 &0.0233 &0.0203 &0.0223 &0.0164 \\
 5000  &5000 &0.0417 &0.0322 &0.0268 &0.0226 &0.0193 &0.0228 &0.0159 \\
 5000 &10000 &0.0419 &0.0321 &0.0268 &0.0225 &0.0191 &0.0265 &0.0148 \\
\hline \hline
\end{tabular}
\caption{Average Frobenius error for estimating $\Phi$ when document lengths are drawn from a mixture of rounded Pareto distributions.}\label{tab:sim3error}
\end{table}

\begin{table}
\centering
\begin{tabular}{ c c c c c c c c c}
\hline \hline
 & & \multicolumn{5}{c}{EVT, $k_D = \gamma D$} & & \\
\cline{3-7}
$D$ & $W$ & $\gamma = 0.01$ & $\gamma = 0.025$ & $\gamma = 0.05$ & $\gamma = 0.10$ & $\gamma = 0.20$ & SLDA & VEM  \\
\hline
 1000  & 2000 & 0.0119 & 0.0173 & 0.0178 & 0.0184   & 0.0200 & 0.0535  & 4.7256 \\
 1000  & 5000 & 0.0183 & 0.0353 & 0.0338 & 0.0308 & 0.0318 & 0.0801  & 6.4017 \\
 1000 & 10000 & 0.0272 & 0.0687 & 0.0718 & 0.0629 & 0.0591  & 0.1340   & 9.8760 \\
 2000  & 2000 & 0.0164 & 0.0183 & 0.0192 & 0.0212 & 0.0246 & 0.0688  & 9.2662 \\
 2000  & 5000 & 0.0294 & 0.0306 & 0.0311 & 0.0328 & 0.0358 & 0.0888 & 11.1045 \\
 2000 & 10000 & 0.0533 & 0.0603 & 0.0515 & 0.0502  & 0.0530 & 0.1307 & 13.6178 \\
 5000  & 2000 & 0.0173 & 0.0183 & 0.0208  & 0.0250 & 0.0318 & 0.1094 & 22.2926 \\
 5000  & 5000 & 0.0327  & 0.0350 & 0.0385 & 0.0444 & 0.0523 & 0.1444 & 25.6023 \\
 5000 & 10000 & 0.0518 & 0.0527  & 0.0560  & 0.0610  & 0.0680 & 0.1855 & 28.7198 \\
\hline \hline
\end{tabular}
\caption{Average runtime (in seconds) when document lengths are drawn from a mixture of rounded Pareto distributions.}\label{tab:sim3time}
\end{table}

\begin{table}
\centering
\begin{tabular}{ c c c c c c c }
\hline \hline
$D$ & $W$ & $\gamma = 0.01$ & $\gamma = 0.025$ & $\gamma = 0.05$ & $\gamma = 0.10$ & $\gamma = 0.20$  \\
\hline
 1000  &2000     &1.000 &2.083 &2.012     &2.000     &2.000 \\
 1000  &5000     &1.000  &2.870 &2.546 &2.088 &2.001 \\
 1000 &10000     &1.000 &3.891 &3.847 &3.146  &2.830 \\
 2000  &2000 &1.816 &1.991 &1.956 &1.956 &1.997 \\
 2000  &5000 &2.192 &2.076     &2.000     &2.000     &2.000 \\
 2000 &10000 &2.772 &3.043 &2.281 &2.001     &2.000 \\
 5000  &2000 &1.198 &1.015 &1.003 &1.012 &1.176 \\
 5000  &5000 &2.003     &2.000     &2.000     &2.000     &2.000 \\
 5000 &10000 &2.124     &2.000     &2.000     &2.000     &2.000 \\
\hline \hline
\end{tabular}
\caption{Average number of eigendecomposition iterations for the extreme-value method when document lengths are drawn from a mixture of rounded Pareto distributions.}\label{tab:sim3iter}
\end{table}

\subsection{Effect of misspecification}\label{sec:alphasim}

Both the extreme-value and full spectral LDA methods require specification of the concentration parameter $\alpha_0$, while the VEM algorithm requires full specification of the prior parameters $(\alpha_1, \dots, \alpha_K)$. Note that an empirical Bayes method can be applied to estimate the prior parameters of the VEM algorithm \cite{blei2003latent}. In this section, we evaluate how misspecification of these parameters affects the performance of the algorithms. 

Again consider 1{,}000 realizations of a corpus with $K = 5$ topics. We simulate corpora of $D = 2000$ documents and $W = 5000$ words. The true topic vectors $\phi_1, \dots, \phi_K$ are sampled independently from a symmetric Dirichlet distribution with concentration parameter $W \times 0.1$. The topic-document probabilities $\theta_1, \dots, \theta_D$ are drawn from a symmetric Dirichlet distribution with concentration parameter $\alpha_0 = 1$. We generate document lengths from the mixture in \eqref{eq:mix}. 

In order to assess how misspecification of $\alpha_0$ degrades the performance of the extreme-value and spectral LDA algorithms, we apply the algorithms with $\alpha_0 \in \{0.1, 0.5, 1, 2, 10 \}$. For the VEM algorithm, we set the prior parameters to $\alpha_k = \alpha_0/K$ for $\alpha_0 \in \{0.1, 0.5, 1, 2, 10 \}$. Hence, the case where $\alpha_0 = 1$ is correctly specified.  Table \ref{tab:sim4error} reports the average Frobenius error for estimating $\Phi$ over the $1{,}000$ iterations. In this setting, the extreme-value method produces lower estimation error for large choices of $k_D$ except in the extreme case where the concentration is misspecified to $10$. For the VEM algorithm, misspecification of $\alpha_0$ only affects the prior. Hence, since the number of words is comparatively large, the likelihood dominates and the VEM algorithm is very robust to such misspecification. As seen in Tables \ref{tab:sim4time} and \ref{tab:sim4iter}, the same computational advantages of the extreme-value method are retained under misspecification.

\begin{table}
\centering
\begin{tabular}{ c c c c c c c c}
\hline \hline
 & \multicolumn{5}{c}{EVT, $k_D = \gamma D$} & & \\
\cline{2-6}
$\alpha_0$ & $\gamma = 0.01$ & $\gamma = 0.025$ & $\gamma = 0.05$ & $\gamma = 0.10$ & $\gamma = 0.20$ & SLDA & VEM  \\
\hline
0.1 &0.0656 &0.0498 &0.0429 &0.0379 &0.0342 &0.0448 &0.0229 \\
0.5 &0.0683 &0.0499 &0.0415 &0.0352 &0.0303 &0.0422 &0.0222 \\
1 &0.0728 &0.0525 &0.0431 &0.0359 &0.0303 &0.0421 &0.0223 \\
2 &0.0835 &0.0621 &0.0523 &0.0452 &0.0401 &0.0487 &0.0217 \\
10 &0.1795 &0.1533  &0.1360 &0.1291 &0.1306 &0.1211 &0.0209 \\
\hline \hline
\end{tabular}
\caption{Average Frobenius error for estimating $\Phi$ when document lengths are drawn from a mixture of rounded Pareto distributions and $\alpha_0$ is misspecified.}\label{tab:sim4error}
\end{table}

\begin{table}
\centering
\begin{tabular}{ c c c c c c c c}
\hline \hline
 & \multicolumn{5}{c}{EVT, $k_D = \gamma D$} & & \\
\cline{2-6}
$\alpha_0$ & $\gamma = 0.01$ & $\gamma = 0.025$ & $\gamma = 0.05$ & $\gamma = 0.10$ & $\gamma = 0.20$ & SLDA & VEM  \\
\hline
0.1 & 0.0291   & 0.0300 & 0.0307 & 0.0325 & 0.0356 & 0.0888  & 9.1122 \\
0.5 & 0.0293 & 0.0301 & 0.0308 & 0.0326 & 0.0357 & 0.0892 & 10.3751 \\
1 & 0.0297 & 0.0306 & 0.0313 & 0.0329 & 0.0359  & 0.0890 & 10.9873 \\
2   & 0.0300 & 0.0307 & 0.0314 & 0.0333 & 0.0363 & 0.0879  & 11.8260 \\
10 & 0.0346 & 0.0527 & 0.0602 & 0.0628 & 0.0559 & 0.0997  & 6.8245 \\
\hline \hline
\end{tabular}
\caption{Average runtime (in seconds) when document lengths are drawn from a mixture of rounded Pareto distributions and $\alpha_0$ is misspecified.}\label{tab:sim4time}
\end{table}

\begin{table}
\centering
\begin{tabular}{ c c c c c c }
\hline \hline
$\alpha_0$ & $\gamma = 0.01$ & $\gamma = 0.025$ & $\gamma = 0.05$ & $\gamma = 0.10$ & $\gamma = 0.20$  \\
\hline
0.1 &2.175 &2.078     &2.000     &2.000     &2.000 \\
0.5 &2.185 &2.075     &2.000     &2.000     &2.000 \\
1 &2.207 &2.076     &2.000     &2.000     &2.000 \\
2 &2.233 &2.092 &2.001     &2.000     &2.000 \\
10 &2.864 &5.063 &5.818 &5.717 &4.357 \\
\hline \hline
\end{tabular}
\caption{Average number of eigendecomposition iterations for the extreme-value method when document lengths are drawn from a mixture of rounded Pareto distributions and $\alpha_0$ is misspecified. Here, $k_D = \gamma D$.}\label{tab:sim4iter}
\end{table}

\section{Analysis of twenty newsgroups}\label{sec:20ng}

In this section, we apply the extreme-value and full spectral LDA procedures to the estimation of topics in the twenty newsgroups corpus \cite{lang1995newsweeder}. This dataset, publicly available at \url{http://qwone.com/~jason/20Newsgroups/}, is a well-known benchmark appearing in popular machine learning libraries \cite{pedregosa2011scikit}. In particular, we consider the training split of the dataset. Our analysis highlights the robustness of the extreme-value methodology to the often understated choices made in the pre-processing of text data. In order to prepare the dataset for topic modeling, we follow the data cleaning steps outlined in Section 9.1 of \cite{silge2017text}. The data cleaning steps taken in this case study of the twenty newsgroups dataset include the removal of headers, email signatures and nested quotes. These steps are further inspected at the end of this section. After pre-processing, the cleaned corpus contains $D = 9{,}750$ documents and $W = 68{,}031$ unique words. While $W$ is relatively large in comparison to $D$, we follow other analyses of the dataset and refrain from further word removal \cite{soleimani2014parsimonious, dieng2020topic}.

The pre-processed corpus prominently exhibits the power-law behavior of word frequencies. A plot displaying the empirical tail of the document length distribution on a log-log scale is provided in the left-hand panel of Figure \ref{fig:tail}. Notice that, on a log-log scale, there is a strong linear relationship between the tail probability and the document length, indicating that the document lengths are indeed power-law distributed. In order to determine the threshold beyond which a power-law tail fits best, we employ the minimum distance procedure \cite{clauset2009power, drees2020minimum}. This procedure chooses the threshold that minimizes the Kolmogorov-Smirnov distance between the empirical and fitted tail beyond the threshold. It returns an optimal threshold of $109$, which is indicated by the dashed line in Figure \ref{fig:tail}. The number of documents with length exceeding this threshold is $k_D = 1441$. We employ exactly these documents in our extreme-value estimation procedure. While Theorems \ref{thm:EDM} and \ref{thm:roc} require the number of extreme documents to satisfy the bias condition $\sqrt{k_D}/b(D/k_D) \rightarrow 0$, such rate conditions offer limited practical guidance for a fixed corpus. The minimum distance procedure provides a data-driven choice of $k_D$, and the simulations in Section \ref{sec:sim} indicate that the performance of the extreme-value method is stable over a wide range of $k_D$. The right-hand panel of Figure \ref{fig:tail} plots the Hill estimator, a popular estimator of the tail index, as a function of $k_D$. Note that the estimator stabilizes around $k_D = 1441$ and returns a document length tail index estimate of $\hat{\iota} = 1.55$.  

\begin{figure}[h]
\centering
\includegraphics[width=\textwidth]{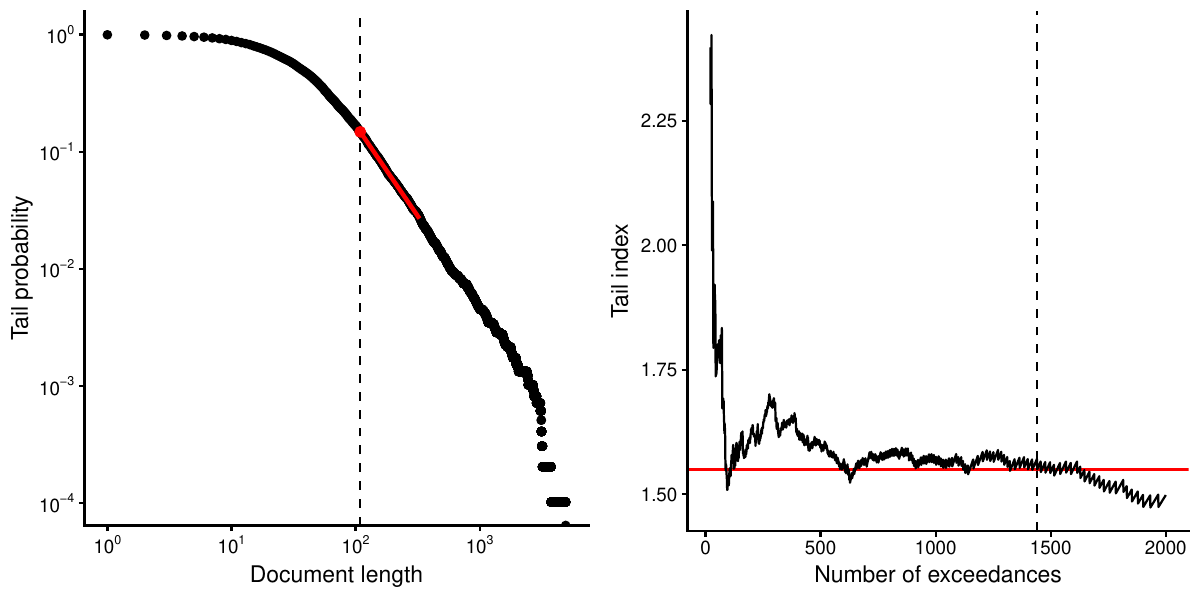}
\caption{Left: Log-log plot of the empirical tail of document lengths. Right: Hill plot for estimating the tail index of document lengths. The dashed lines indicate the point beyond which a power-law fits best as determined by the minimum distance procedure.}\label{fig:tail}
\end{figure}

In order to estimate the topics in the twenty newsgroups corpus with either the extreme-value or full spectral LDA estimation procedures, the parameters $K$ and $\alpha_0$ must be specified. Here, we choose these parameters by jointly minimizing the reconstruction error for $\alpha_0$. That is, given $\alpha_0$, the spectral methods estimate the Dirichlet parameters $(\alpha_1, \dots, \alpha_K)$ through the robust tensor eigenvalues. These estimates are not constrained to sum to $\alpha_0$. Hence, in order to retain model consistency, it is sensible to minimize the discrepancy between $\hat{\alpha}_0$ and $\alpha_0$. 

The Dirichlet parameters of the document-topic proportions in the twenty newsgroups corpus are estimated by applying the extreme-value based procedure over the grid $K \in \{2, 3, \dots, 50\}$ and $\alpha_0 \in \{0.001, 0.01, 0.1, 1 \}$. Note that the scalability of the extreme-value method ensures the computational feasibility of this procedure. Since the grid of $\alpha_0$ values vary in scale, we choose the $K$ and $\alpha_0$ pair that minimize the absolute relative error between $\hat{\alpha}_0$ and $\alpha_0$. The right-hand panel of Figure \ref{fig:Kalpha0} displays the relative reconstruction error for $\alpha_0$ over the grid. Note that for $\alpha_0 \in \{0.001, 0.01, 0.1\}$, the absolute relative error is minimized at $K = 42$, while for $\alpha_0 = 1$ it is minimized at $K = 46$. Overall, the error is jointly minimized at $(K, \alpha_0) = (42, 0.01)$. For each choice of $\alpha_0$, the top 100 eigenvalues of $\hat{\Sigma}_D$ are plotted in the left-hand panel of Figure \ref{fig:Kalpha0}. The choice of $K = 42$ is consistent with the flat spectrum of $\hat{\Sigma}_D$ beyond that point. Other analyses of this corpus have arrived at a similar choice in the number of topics \cite{soleimani2014parsimonious, srivastava2017autoencoding}.

\begin{figure}[h]
\centering
\includegraphics[width=\textwidth]{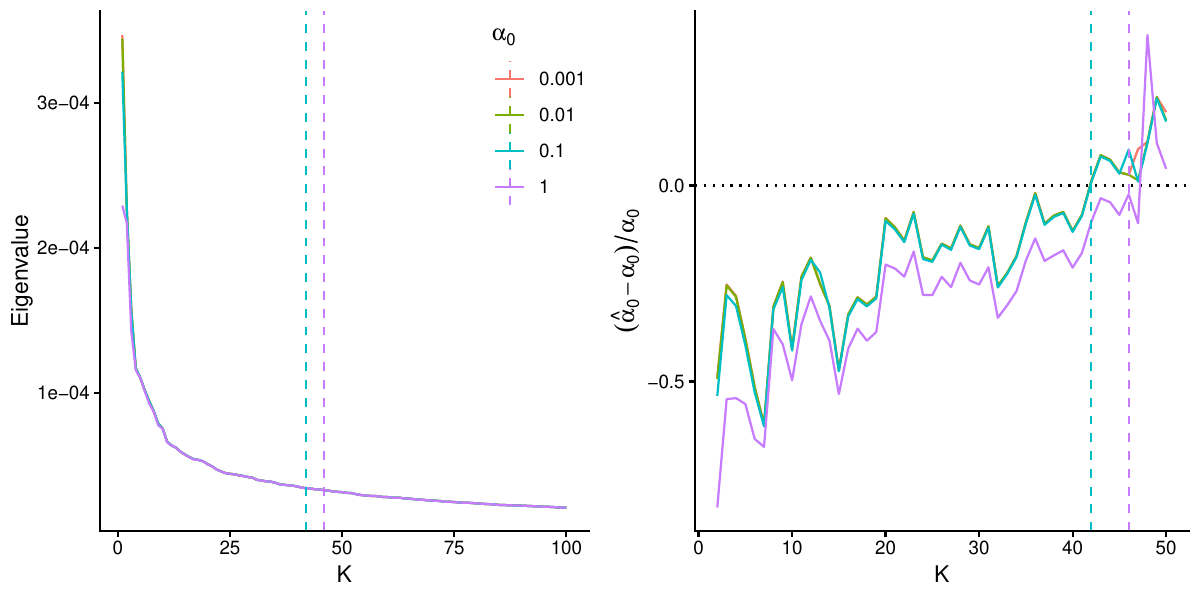}
\caption{Left: The top 100 eigenvalues of $\hat{\Sigma}_D$ across varying $\alpha_0$. Right: Relative error for estimating $\alpha_0$ across varying $K$ and $\alpha_0$. Note that the lines for different $\alpha_0$ overlap, indicating robustness to this choice.}\label{fig:Kalpha0}
\end{figure}

We proceed by applying the extreme-value and full spectral LDA methods to the twenty newsgroups corpus with $(K, \alpha_0) = (42, 0.01)$. The extreme-value procedure took 9 seconds to fit, while the full spectral LDA method took 145 seconds. For both the extreme-value and full spectral LDA methods, we display the 5 topics with the highest probability of occurrence in Tables \ref{tab:evttopic} and \ref{tab:sldatopic}, respectively. In particular, the 10 words with the highest estimated probabilities in each topic are reported. Note that the most frequently occurring topic reported by the extreme-value method contains generic, discussion-level words that appear across newsgroups. The other topics seem to center around sports, science, religion and privacy. The most likely topics reported by the spectral LDA method seem to overlap in computer-related terminology, while the other topics discuss religion and privacy. 

\begin{table}
\centering
\begin{tabular}{ c c c c c}
\hline \hline
Topic 1 & Topic 2 & Topic 3 & Topic 4 & Topic 5 \\
\hline 
  $\hat{\alpha}_k/\hat{\alpha}_0$ = 0.130 & $\hat{\alpha}_k/\hat{\alpha}_0$ = 0.103 & $\hat{\alpha}_k/\hat{\alpha}_0$ = 0.089 & $\hat{\alpha}_k/\hat{\alpha}_0$ = 0.087 & $\hat{\alpha}_k/\hat{\alpha}_0$ = 0.051 \\
  \hline
                                time (0.017)                                  &game (0.010)                              &science (0.009)                               &people (0.035)                           &government (0.028) \\
                              people (0.004)                                  &team (0.010)                                  &god (0.008)                                  &god (0.006)                               &people (0.016) \\
                                data (0.004)                                 &time (0.006)                                 &true (0.007)                           &government (0.006)                           &encryption (0.007) \\
                                real (0.003)                              &players (0.006)                               &people (0.007)                             &religion (0.003)                               &system (0.005) \\
                                life (0.002)                                 &play (0.006)                             &question (0.007)                                 &time (0.003)                               &rights (0.005) \\
                                read (0.002)                               &season (0.005)                                &truth (0.006)                                &jesus (0.003)                                 &time (0.004) \\
                            question (0.002)                                &games (0.005)                                &human (0.004)                           &christians (0.003)                              &privacy (0.004) \\
                                food (0.002)                               &hockey (0.004)                                &world (0.004)                              &militia (0.003)                               &strong (0.004) \\
                               event (0.002)                               &player (0.004)                                &moral (0.004)                               &person (0.003)                                &power (0.003) \\
                               scope (0.002)                                  &win (0.004)                             &morality (0.003)                                 &life (0.002)                                &money (0.003) \\
\hline \hline
\end{tabular}
\caption{Top 10 words from the five most frequently occurring topics as estimated by the extreme-value procedure.}\label{tab:evttopic}
\end{table}

\begin{table}
\centering
\begin{tabular}{ c c c c c}
\hline \hline
Topic 1 & Topic 2 & Topic 3 & Topic 4 & Topic 5 \\
\hline 
$\hat{\alpha}_k/\hat{\alpha}_0$ = 0.098 & $\hat{\alpha}_k/\hat{\alpha}_0$ = 0.087 & $\hat{\alpha}_k/\hat{\alpha}_0$ = 0.063 & $\hat{\alpha}_k/\hat{\alpha}_0$ = 0.061 & $\hat{\alpha}_k/\hat{\alpha}_0$ = 0.061 \\
\hline
                           christians (0.009)                              &windows (0.024)                                &modem (0.014)                               &driver (0.017)                                  &key (0.031) \\
                                 file (0.008)                                  &ftp (0.016)                                &mouse (0.013)                              &windows (0.009)                                 &chip (0.013) \\
                           considered (0.007)                                &files (0.016)                                 &port (0.012)                               &months (0.009)                                 &keys (0.013) \\
                              mormons (0.006)                              &program (0.012)                              &printer (0.011)                                 &card (0.008)                                  &bit (0.012) \\
                                 scsi (0.006)                              &drivers (0.011)                              &windows (0.007)                              &waiting (0.008)                               &system (0.011) \\
                              article (0.004)                               &version (0.010)                               &serial (0.007)                              &program (0.007)                            &encryption (0.010) \\
                                jesus (0.004)                                  &file (0.010)                                  &fax (0.007)                              &drivers (0.006)                              &clipper (0.009) \\
                                bible (0.004)                                 &site (0.009)                                 &data (0.006)                              &printer (0.006)                            &algorithm (0.007) \\
                              archive (0.004)                                  &dos (0.007)                                  &mac (0.006)                                &disks (0.005)                           &government (0.007) \\
                              manager (0.004)                             &graphics (0.007)                             &software (0.006)                                 &post (0.005)                               &secret (0.007) \\
\hline \hline
\end{tabular}
\caption{Top 10 words from the five most frequently occurring topics as estimated by the full spectral method.}\label{tab:sldatopic}
\end{table}

We note that the full spectral LDA method produces a topic that assigns strictly positive probability to only three words: ``faq," ``modified" and ``archive". This topic occurs with estimated probability 0.057\%. In comparison, among the topics estimated by the extreme-value method, the topic with minimal support assigns 95 words a non-zero probability of occurrence. The reason the full spectral LDA method produces such a topic is due to a subtlety in the pre-processing of the twenty newsgroups corpus. 

Observe that a typical document in the corpus has a structure as given in Table \ref{tab:dstructure}. In particular, it has a header (in blue) containing user and subject information, a body (in black) containing the main post text and a signature (in violet) that follows the \texttt{--} string. Section 9.1 of \cite{silge2017text} strips the header information by finding the first empty line in the document and removes the signature by detecting the first line starting with the \texttt{--} string. While this strategy is effective for most documents, the twenty newsgroups corpus also contains some FAQ documents that have the structure presented in Table \ref{tab:dstructure}. Note that the FAQ documents contain a sub-header with the words ``archive" and ``modified," often followed by the word ``faq". This sub-header is not stripped using the pre-processing steps outlined in \cite{silge2017text}. Since document contributions to the spectral LDA-estimated moments are normalized by document length, these three words can contribute substantial signal in short documents. Since the extreme-value based method only uses long documents however, these effects are subdued and such a topic is not recovered. 

Since the words ``faq," ``modified" and ``archive" are artifacts of the FAQ sub-header, they are not of any inferential use beyond uncovering this formatting in the corpus. Since the variability in document formats is quite large in this and many other corpora, the robustness of the extreme-value method to minor and difficult to detect nuances in the data format is attractive. After all, these choices are often under-discussed or omitted when data is provided.

\begin{table}
\centering
\begin{tabular}{ | p{7.5cm} | p{7.5cm} | }
\hline 
\multicolumn{1}{|c|}{Typical document} & \multicolumn{1}{|c|}{FAQ document} \\
\hline 
\textcolor{blue}{\texttt{From:}}  \newline 
\textcolor{blue}{\texttt{Subject:}}  \newline 
\textcolor{blue}{\texttt{Organization:}}  \newline 
\textcolor{blue}{\texttt{Lines:}}  \newline 

\texttt{Document body containing post text.} \newline

\textcolor{violet}{\texttt{--}} \newline
\textcolor{violet}{\texttt{Signature}}
& 
\textcolor{blue}{\texttt{From:}}  \newline 
\textcolor{blue}{\texttt{Subject:}}  \newline 
\textcolor{blue}{\texttt{Organization:}}  \newline 
\textcolor{blue}{\texttt{Lines:}}  \newline 

\texttt{\textcolor{red}{Archive-name:} often contains \textcolor{red}{"faq"}}  \newline 
\textcolor{red}{\texttt{Last-modified:}}  \newline 

\texttt{Document body containing post text.} \newline

\textcolor{violet}{\texttt{--}} \newline
\textcolor{violet}{\texttt{Signature}}
\\
\hline
\end{tabular}
\caption{Document structures in the twenty newsgroups corpus.}\label{tab:dstructure}
\end{table}

\section{Conclusion}\label{sec:conc}

In this paper, we demonstrated that the Latent Dirichlet Allocation model can accommodate power-law word frequencies when the document lengths are power-law distributed. Using this fact, we developed a topic matrix estimation procedure based on normalized extreme word frequencies. Compared to previously developed tensor decomposition methodology, the proposed procedure only leverages the extreme word frequencies to assist computational efficiency and derive simpler moment estimates. The analysis of the twenty newsgroups corpus indicates that the methodology is robust to certain subtleties in the pre-processing of text data.

We also note that our asymptotic analysis treats the vocabulary size $W$ and the number of topics $K$ as fixed while the number of documents grows. In modern corpora, including the twenty newsgroups corpus analyzed in Section \ref{sec:20ng}, the vocabulary size may be large relative to the number of documents, and the constants in our rates of convergence may grow with $W$ and $K$. Extending the present theory to regimes in which $W$ and $K$ grow with $D$, as has been done for other topic model estimators \cite{ke2024using}, is an important direction for future work.

Our results raise many avenues for future inquiry. Traditional topic models, like LDA, are unable to model correlation between topics due to the lack of sufficient dependence between Dirichlet document-topic proportions. Correlated topic models aim to remedy this drawback of the LDA model \cite{blei2006correlated, blei2007correlated}. Since tensor decomposition methods have been developed for correlated topic models, extending our methodology to this setting seems promising \cite{arabshahi2017spectral}. Further, while we prove multivariate regular variation of word/term frequencies, it is of interest to determine the extreme-value behavior of the term frequency-inverse document frequency (TF-IDF)  \cite{sparck1972statistical}. The TF-IDF is a popular metric used in document classification and clustering, and extreme-value procedures may assist such tasks. Lastly, it is valuable to determine how extreme-value procedures may interface with neural models of text such as neural topic models \cite{grootendorst2022bertopic, srivastava2017autoencoding}.

\section*{Acknowledgments}

D. Cirkovic gratefully acknowledges the Northwestern Mutual Data Science Institute Horizon Award. T. Wang gratefully acknowledges the National Natural Science Foundation of China under Grant 12301660 and the Science and Technology Commission of Shanghai Municipality under Grant 23JC1400700.

\newpage
\appendix
\section*{Appendix}

\counterwithin{lemma}{section}
\counterwithin{theorem}{section}

\section{Supporting technical results}\label{sec:tech}

In this section we record the arguments used in the proofs of the main results. Section \ref{sec:gbt} gathers some technical results used in proving the multivariate regular variation of various word frequency statistics (Theorem \ref{thm:freqmrv}). Sections \ref{sec:mrvproof} and \ref{sec:coproof} prove Theorem \ref{thm:freqmrv} and Lemma \ref{lem:cooccur}, respectively. The statistical results are presented in the remaining sections. In particular, Section \ref{sec:mproof} confirms that $\Sigma$ and $\mathcal{T}$ have the desired forms as presented in Lemma \ref{lem:moments}. Section \ref{sec:edmproof} proves the asymptotic normality of the moment estimators presented in \eqref{eq:s1}, \eqref{eq:s2} and \eqref{eq:s3}  (Theorem \ref{thm:EDM}), thereby providing a rate of convergence for $\hat{\tSigma}_D$ and $\hat{\tmathcalT}_D$ as presented in Lemma \ref{lem:Op} and proved in Section \ref{sec:Opproof}. The proof of Theorem \ref{thm:roc}, along with supporting lemmas, is provided in Section \ref{sec:rocproof}.

\subsection{Regular variation of word frequency statistics}\label{sec:gbt}

In order to prove MRV of various word frequency statistics, we rely on a generalized Breiman's theorem \cite{wang2022asymptotic, janssen2025multivariate}. We restate the result here for convenience. 

\begin{theorem}
\label{thm:Breiman}
Suppose $\{Y(t): t \geq 0 \}$ is an at least one-sidedly continuous stochastic process taking on values in $\mathbb{R}_+^p$ for some $p \geq 1$. Let $X$ be a positive random variable with regularly varying distribution satisfying for some scaling function $b(t) \rightarrow \infty$ as $t \rightarrow \infty$
\begin{align*}
 t\mathbb{P}\left(X/b(t) > x \right) \rightarrow \nu_c((x, \infty)),
\end{align*}
where $\nu_c$ is the measure determined by $\nu_c((x, \infty)) = x^{-c}, c > 0, x > 0$. Further suppose that
\begin{enumerate}
\item[(a)] For some finite random vector $Y_\infty \in \mathbb{R}_+^p \setminus \{0\}$, $\lim_{t \rightarrow \infty} Y(t) = Y_\infty$ almost surely.
\item[(b)] The random variable $X$ and the process $Y(\cdot)$ are independent.  
\end{enumerate}
Then
\begin{enumerate}
\item In $\mathbb{M}(\mathbb{R}^p_+ \times (\mathbb{R}_+ \setminus \{0\}))$,
\begin{align*}
t\mathbb{P}\left[\left( Y(X), \frac{X}{b(t)} \right) \in \cdot \right] \rightarrow \mathbb{P}(Y_\infty \in \cdot) \times \nu_c(\cdot).
\end{align*}
\item If for some $c^\star > c$ we have that
\begin{align*}
\sup_{t \geq 0} \mathbb{E}\left[\| Y(t) \|^{c^\star} \right] < \infty,
\end{align*}
then in $\mathbb{M}(\mathbb{R}_+^p \setminus \{0\})$,
\begin{align*}
t\mathbb{P}\left[\frac{XY(X)}{b(t)} \in \cdot \right] \rightarrow \left(\mathbb{P}(Y_\infty \in \cdot) \times \nu_c\right) \circ g^{-1},
\end{align*}
where $g: \mathbb{R}^p_+ \times (\mathbb{R}_+ \setminus \{0\}) \rightarrow \mathbb{R}^p_+$ is defined by $g(x, y) = xy$.
\end{enumerate}
\end{theorem}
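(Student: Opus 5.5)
The plan is to prove part 1 directly from Definition \ref{def:Mconv}, by splitting each test integral into a term where $Y(X)$ is replaced by its limit $Y_\infty$ and an error term. Part 2 then follows from part 1 by a truncation argument in which the moment condition controls the region where $X/b(t)$ is small.

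\textbf{Measurability.} One-sided continuity of $Y(\cdot)$ makes $(s,\omega)\mapsto Y(s,\omega)$ jointly measurable, so $Y(X)$ is a random vector. Independence of $X$ and $Y(\cdot)$ then lets us condition on $X=s$ and treat $Y(s)$ as having its unconditional law. Since $Y_\infty$ is an almost-sure limit of the process, it is measurable with respect to $Y(\cdot)$ and is therefore also independent of $X$.

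\textbf{Part 1.}
\begin{enumerate}
\item Fix $f\in\mathcal{C}(\mathbb{R}^p_+\times(\mathbb{R}_+\setminus\{0\}))$, assumed uniformly continuous. Choose $\delta>0$ with $f(y,x)=0$ for $x<\delta$. Write
\begin{align*}
t\mathbb{E}\left[f\left(Y(X),\tfrac{X}{b(t)}\right)\right]=t\mathbb{E}\left[f\left(Y_\infty,\tfrac{X}{b(t)}\right)\right]+t\mathbb{E}\left[\Delta_t\right],
\end{align*}
where $\Delta_t = f(Y(X),X/b(t))-f(Y_\infty,X/b(t))$.
\item For the first term, set $h(x)=\mathbb{E}[f(Y_\infty,x)]$. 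By independence, the first term equals $\int h(x)\, t\mathbb{P}(X/b(t)\in dx)$. The function $h$ is bounded and continuous, by dominated convergence, and vanishes on $(0,\delta)$. The univariate regular variation of $X$ therefore gives convergence to $\int h\, d\nu_c$. By Fubini this equals the integral of $f$ against $\mathbb{P}(Y_\infty\in\cdot)\times\nu_c$.
\item For the error term, take $\varepsilon>0$ and use uniform continuity to pick $\eta>0$ such that $|\Delta_t|\le \varepsilon+2\|f\|_\infty 1_{\{\|Y(X)-Y_\infty\|>\eta\}}$. Since $\Delta_t$ vanishes when $X<\delta b(t)$, conditioning on $X$ gives
\begin{align*}
t\mathbb{E}|\Delta_t| \le \Big(\varepsilon + 2\|f\|_\infty \sup_{s\ge \delta b(t)}\mathbb{P}\left(\|Y(s)-Y_\infty\|>\eta\right)\Big)\, t\mathbb{P}\left(X>\delta b(t)\right).
\end{align*}
\item Almost-sure convergence $Y(s)\to Y_\infty$ implies $\sup_{s\ge r}\mathbb{P}(\|Y(s)-Y_\infty\|>\eta)\le \mathbb{P}(\sup_{s\ge r}\|Y(s)-Y_\infty\|>\eta)\to 0$ as $r\to\infty$. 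Here the supremum is measurable by one-sided continuity. Since $b(t)\to\infty$ and $t\mathbb{P}(X>\delta b(t))\to\delta^{-c}$, the $\limsup$ of the error term is at most $\varepsilon\delta^{-c}$. Letting $\varepsilon\downarrow0$ proves part 1.
\end{enumerate}

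\textbf{Part 2.} We cannot simply apply the continuous mapping theorem with $g(y,x)=xy$. The reason is that the preimage of a set bounded away from $0$ contains points with $x$ near $0$ and $y$ large, so it is not bounded away from $\mathbb{R}^p_+\times\{0\}$. The plan is to truncate.
\begin{enumerate}
\item Fix $f\in\mathcal{C}(\mathbb{R}^p_+\setminus\{0\})$ vanishing on $\{\|z\|<\gamma\}$, and fix $\delta>0$. Let $\chi_\delta$ be a continuous cutoff in $x$ equal to $1$ on $[2\delta,\infty)$ and $0$ on $[0,\delta]$. Split $f(XY(X)/b(t))$ into the piece multiplied by $\chi_\delta(X/b(t))$ and the remainder.
\item The first piece is $(f\circ g)\cdot\chi_\delta$ evaluated at $(Y(X),X/b(t))$. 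This is bounded, continuous and supported on $x\ge\delta$, so part 1 applies to it. Its limit is $\int (f\circ g)\chi_\delta\, d(\mathbb{P}(Y_\infty\in\cdot)\times\nu_c)$, and this tends to $\int f\circ g\, d(\mathbb{P}(Y_\infty\in\cdot)\times\nu_c)$ as $\delta\to0$ by monotone convergence. Integrability for that step comes from the finiteness $\mathbb{E}\|Y_\infty\|^{c}<\infty$, which follows from Fatou's lemma applied to the $c^\star$ moment bound.
\item The remainder is bounded by $\|f\|_\infty\, t\mathbb{P}(X\|Y(X)\|>\gamma b(t),\ X\le 2\delta b(t))$. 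Conditioning on $X=s$ and applying Markov's inequality with exponent $c^\star$ bounds this by
\begin{align*}
C\gamma^{-c^\star}\, t\,\mathbb{E}\left[(X/b(t))^{c^\star}1_{\{X\le 2\delta b(t)\}}\right],\qquad C=\sup_s\mathbb{E}\|Y(s)\|^{c^\star}.
\end{align*}
\item By Karamata's theorem for truncated moments, using $c^\star>c$, the last expression converges to $C\gamma^{-c^\star}\frac{c}{c^\star-c}(2\delta)^{c^\star-c}$. This vanishes as $\delta\downarrow0$, which completes part 2.
\end{enumerate}

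\textbf{Main obstacle.} I expect the main obstacle to be the remainder estimate in part 2, namely obtaining a bound that is uniform in $t$ and then applying Karamata's theorem in the right form. Part 1 is comparatively routine once the measurability and conditioning steps are in place.
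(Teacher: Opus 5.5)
Your proof is correct. The paper does not prove this statement itself. It restates it verbatim from \cite{wang2022asymptotic, janssen2025multivariate} ``for convenience'' and uses it as a black box in Lemma~\ref{lem:multrv}, so there is no in-paper argument to compare yours with.

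What you give is a sound self-contained proof along the classical Breiman template. First you obtain the joint convergence of $(Y(X), X/b(t))$ by replacing $Y(X)$ with $Y_\infty$ on the only relevant region $X\ge \delta b(t)$, where $b(t)\to\infty$ forces the time index to infinity. Then you pass through the map $g(y,x)=xy$, discarding the region $X\le 2\delta b(t)$ via Markov's inequality of order $c^\star$ and Karamata's truncated-moment asymptotics.

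One step should be stated explicitly. You verify part 1 only for uniformly continuous test functions, yet in part 2 you apply it to $(f\circ g)\chi_\delta$. That function need not be uniformly continuous, because $g$ is not uniformly continuous on $\mathbb{R}^p_+\times[\delta,\infty)$. The step is still legitimate, but only because the Portmanteau theorem of \cite{lindskog2014regularly} upgrades convergence over uniformly continuous test functions to full $\mathbb{M}$-convergence. That full convergence is what covers every bounded continuous function supported in $\{x\ge\delta\}$.

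Your use of Fatou to get $\mathbb{E}\|Y_\infty\|^{c}<\infty$ also does a second job. It guarantees that the limit $(\mathbb{P}(Y_\infty\in\cdot)\times\nu_c)\circ g^{-1}$ assigns the finite mass $\gamma^{-c}\mathbb{E}\|Y_\infty\|^{c}$ to $\{\|z\|\ge\gamma\}$, so the limit genuinely lies in $\mathbb{M}(\mathbb{R}^p_+\setminus\{0\})$.
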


Multivariate regular variation of the word frequencies in Theorem \ref{thm:freqmrv} is derived from the following lemma. It is important to remark that the independence imposed in Lemma~\ref{lem:multrv} concerns the heavy-tailed random sample size and the underlying categorical process. The categorical draws remain dependent on the latent probability vector $\rho$. Note that in proofs of (a) and (b) in Theorem \ref{thm:freqmrv}, we employ $\rho = \Phi \theta_d$ and $\rho = \theta_d$, respectively. While the LDA model assumes $\theta_d$ is Dirichlet distributed, more general distributions on the simplex can be considered.

\begin{lemma}
\label{lem:multrv}
Fix $p \geq 1$. Suppose that $\rho$ is a random vector with distribution $\pi$ on $\Delta^{p - 1}$. Given $\rho$, let $C_1, C_2, \dots$ be conditionally independent random variables such that
\begin{align*}
C_i \mid \rho \sim \text{Categorical}(\rho), \qquad \text{for i = 1, 2, \dots.}
\end{align*}
For $n \geq 1$, define the frequencies
\begin{align*}
(N_1(n), \dots, N_p(n)) = \sum_{i = 1}^n \left(1_{\left\lbrace C_i = 1 \right\rbrace}, \dots, 1_{\left\lbrace C_i = p \right\rbrace} \right).
\end{align*}
Let $T \in \mathbb{Z}_+$ be a random variable independent of $\rho, C_1, C_2, \dots$ such that for some scaling function $b(t) \in RV_{1/c}$ with $c > 0$
\begin{align}
\label{eq:Trv}
\lim_{t \rightarrow \infty} t\mathbb{P}\left(T/b(t) > x \right) = x^{-c}, \qquad x > 0.
\end{align}
Then
\begin{align}
\label{eq:multrv2}
(N_1(T), \dots, N_p(T)) \in \text{MRV}(c, b(t), \mu, \mathbb{R}^p_+ \setminus \{0\}),
\end{align}
where $\mu \in\mathbb{M}(\mathbb{R}_+^p \setminus \{0\})$ satisfies for any $f \in \mathcal{C}(\mathbb{R}_+^p \setminus \{0\})$
\begin{align*}
\mu(f) = \int_0^\infty \mathbb{E}\left[f(y\rho)\right] \nu_c(dy).
\end{align*}
Hence in $\mathbb{M}\left(\Delta^{p-1} \times (\mathbb{R}_+ \setminus \{0 \}) \right)$
\begin{align}
\label{eq:multrv}
t\mathbb{P}\left[ \left(\left(\frac{N_1(T)}{T}, \dots, \frac{N_p(T)}{T}\right), \frac{T}{b(t)} \right) \in \cdot \right] \rightarrow \left(\pi \times \nu_c\right)(\cdot), \qquad \text{as } t \rightarrow \infty.
\end{align}
\end{lemma}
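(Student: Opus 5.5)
We will deduce Lemma \ref{lem:multrv} from the generalized Breiman theorem (Theorem \ref{thm:Breiman}) with $X = T$ and the process
\begin{align*}
Y(t) = \left(\frac{N_1(\floor{t})}{\floor{t}}, \dots, \frac{N_p(\floor{t})}{\floor{t}}\right), \qquad t \geq 1,
\end{align*}
and $Y(t) = (1/p, \dots, 1/p)$ for $t < 1$, say. This choice is immaterial because $T \geq 1$ eventually on the relevant events. The resulting process is right-continuous: it is piecewise constant with jumps at integers. It takes values in $\Delta^{p-1} \subset \mathbb{R}^p_+$, and $Y(T) = (N_1(T)/T, \dots, N_p(T)/T)$. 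Independence, hypothesis (b), holds because $T$ is independent of $(\rho, C_1, C_2, \dots)$, and $Y(\cdot)$ is a measurable functional of the latter. The scaling $b(t) \in RV_{1/c}$ tends to infinity, so the tail hypothesis on $X$ is exactly \eqref{eq:Trv}.

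For hypothesis (a), we condition on $\rho$. Given $\rho$, the indicators $1_{\{C_i = j\}}$ are i.i.d.\ Bernoulli$(\rho_j)$, so the strong law of large numbers gives $N_j(n)/n \to \rho_j$ almost surely conditionally on $\rho$. Integrating over $\pi$ then yields $Y(t) \to \rho$ almost surely. The limit $Y_\infty = \rho$ lies in $\Delta^{p-1}$, so it is finite and nonzero, as required. Part 1 of Theorem \ref{thm:Breiman} now gives convergence in $\mathbb{M}(\mathbb{R}^p_+ \times (\mathbb{R}_+\setminus\{0\}))$ to $\pi \times \nu_c$. Since $Y$ is supported on the closed set $\Delta^{p-1}$, restricting to $\Delta^{p-1} \times (\mathbb{R}_+ \setminus\{0\})$ yields \eqref{eq:multrv}.

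For \eqref{eq:multrv2}, we apply part 2 of Theorem \ref{thm:Breiman}. Its moment condition is trivial here: $\|Y(t)\| \leq \|\cdot\|$-diameter bound of the simplex (for example $\|Y(t)\|_1 = 1$), so $\sup_t \mathbb{E}\|Y(t)\|^{c^\star} < \infty$ for every $c^\star > c$. Since $TY(T) = (N_1(T), \dots, N_p(T))$, we obtain $t\mathbb{P}[(N_1(T),\dots,N_p(T))/b(t) \in \cdot] \to (\pi \times \nu_c)\circ g^{-1}$. Evaluating the limit at $f \in \mathcal{C}(\mathbb{R}^p_+\setminus\{0\})$ gives
\[
\int\!\!\int f(y r)\,\pi(dr)\,\nu_c(dy) = \int_0^\infty \mathbb{E}[f(y\rho)]\,\nu_c(dy) = \mu(f).
\]
The limit $\mu$ is nonzero, and since $b \in RV_{1/c}$ this is MRV with index $c$ in the sense of Definition \ref{def:mrv}.

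We expect the main point requiring care to be the verification of the hypotheses of Theorem \ref{thm:Breiman}. Specifically, we need one-sided continuity of $Y$, and we need the almost sure convergence to hold unconditionally given that $\rho$ is random. Conditioning on $\rho$ and using Fubini handles the latter. A secondary point is that the $\mathbb{M}$-convergence in part 1 is on a larger space than $\Delta^{p-1}\times(\mathbb{R}_+\setminus\{0\})$. Any bounded uniformly continuous $f$ on the smaller space with support bounded away from $\{y = 0\}$ extends continuously (e.g.\ by Tietze) to the larger space without changing the integrals, because all measures involved are concentrated on $\Delta^{p-1}$ in the first coordinate.
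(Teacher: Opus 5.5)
Your proposal is correct and matches the paper's proof: both apply the generalized Breiman theorem (Theorem \ref{thm:Breiman}) to essentially the same process, verifying (a) by the strong law of large numbers given $\rho$, (b) by independence, and the moment condition trivially. The paper uses $Y(t)=N(\lfloor t\rfloor)/t$ with $Y=0$ on $[0,1)$, which agrees with your process at the integer values $T$ actually takes. The only other difference is that you read \eqref{eq:multrv} directly off part 1 of Theorem \ref{thm:Breiman}, whereas the paper obtains \eqref{eq:multrv2} from part 2 and then passes to polar coordinates via Corollary 2.1 of \cite{resnick2024art}; both are valid.
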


\begin{proof}
We aim to apply the generalized Breiman's theorem (see Theorem \ref{thm:Breiman}). Let $\{Y(t) : t \geq 0\}$ be the stochastic process defined by $Y(t) = 0$ for $0 \leq t < 1$ and
\begin{align*}
Y(t) = \left(\frac{N_1(\floor{t})}{t}, \dots, \frac{N_p(\floor{t})}{t} \right)
\end{align*} 
for $t \geq 1$. Note that for $t \geq 1$
\begin{align*}
Y(t) = \frac{\floor{t}}{t} \left(\frac{N_1(\floor{t})}{\floor{t}}, \dots, \frac{N_p(\floor{t})}{\floor{t}} \right),
\end{align*}
Given $\rho$, $(N_1(n), \dots, N_p(n))$ is the sum of $n$ conditionally independent multinomial random vectors. Hence, we may apply the strong law of large numbers to see that $Y(t) \xrightarrow{\text{a.s.}} \rho$ as $t \rightarrow \infty$ \cite[see][for more details]{aldous2006exchangeability, kallenberg2021symmetric}. Hence, condition (a) of Theorem \ref{thm:Breiman} is satisfied. Condition (b) is satisfied by assumption. Lastly, the moment condition is trivially satisfied since $\| Y(t) \|_1 = 0$ for $t < 1$ and $\| Y(t) \|_1 = \floor{t}/t \leq 1$ for $t \geq 1$. Since the distribution of $T$ satisfies \eqref{eq:Trv}, an application of Theorem \ref{thm:Breiman} gives \eqref{eq:multrv2}. The convergence in polar coordinates follows from Corollary 2.1 of \cite{resnick2024art}.
\end{proof}

\subsection{Proof of Theorem \ref{thm:freqmrv}}\label{sec:mrvproof}

\begin{proof}[Proof of Theorem \ref{thm:freqmrv}]
For item (a), note that 
\begin{align*}
\left(N_{d\cdot 1}, \dots, N_{d\cdot W} \right) = \sum_{i = 1}^{N_{d\cdot\cdot}} \left(1_{\left\lbrace X_{di} = 1 \right\rbrace}, \dots, 1_{\left\lbrace  X_{di} = W \right\rbrace} \right),
\end{align*}
where, conditional on $p_d = \Phi \theta_d$, the $X_{di}$ are drawn independently over $i$ from a common categorical distribution with probability vector $p_d$. Since the distribution of $N_{d\cdot\cdot}$ satisfies \eqref{eq:lengthrv}, Lemma \ref{lem:multrv} applies.

For item (b), similarly notice that
\begin{align*}
\left(N_{d1 \cdot}, \dots, N_{dK \cdot} \right) = \sum_{i = 1}^{N_{d\cdot\cdot}} \left(1_{\left\lbrace Z_{di} = 1 \right\rbrace}, \dots, 1_{\left\lbrace  Z_{di} = K \right\rbrace} \right),
\end{align*}
where, conditional on $\theta_d$, the $Z_{di}$ are drawn independently over $i$ from a common categorical distribution with probability vector $\theta_d$. Hence, the conclusion follows similarly.

We now prove item (c). For fixed $k \in \{1, \dots, K\}$, consider the vector
\begin{align*}
\left( N_{\cdot k 1}, \dots, N_{\cdot k W} \right) = \sum_{d = 1}^{D} \left( N_{d k 1}, \dots, N_{d k W} \right).
\end{align*}
Since $\left( N_{\cdot k 1}, \dots, N_{\cdot k W} \right)$ is the sum of $D$ independent vectors, it suffices to prove multivariate regular variation of $\left( N_{d k 1}, \dots, N_{d k W} \right)$ for fixed $d \in \{1, \dots, D\}$ \cite[see Proposition 7.4 of][for example]{resnick2007heavy}. Towards that end, note that  
\begin{align*}
\left( N_{d k 1}, \dots, N_{d k W} \right) = \sum_{i = 1}^{N_{d\cdot\cdot}} \left(1_{\left\lbrace X_{di} = 1, Z_{di} = k \right\rbrace}, \dots, 1_{\left\lbrace  X_{di} = W, Z_{di} = k \right\rbrace} \right).
\end{align*}
For fixed $k \in \{1, \dots, K\}$, the previous display is a sum of $N_{dk\cdot} = \sum_{i = 1}^{N_{d\cdot\cdot}} 1_{\left\lbrace Z_{di} = k\right\rbrace}$ one-hot encoded categorical random vectors with probability vector $\phi_k$. Hence, by Lemma \ref{lem:multrv}, it suffices to show that $N_{dk\cdot}$ is univariate regularly varying. In order to do so, we evaluate the convergence 
\begin{align*}
t\mathbb{P}\left[\left( N_{d1\cdot}, \dots, N_{dK\cdot} \right)/b(t) \in \cdot \right] \rightarrow \eta(\cdot), \qquad \text{in } \mathbb{M}(\mathbb{R}_+^K \setminus \{0\}),
\end{align*}
as $t \rightarrow \infty$ on sets of the form $(0, \infty) \times \cdots \times (z, \infty) \times \cdots  \times (0, \infty)$, $z > 0$. Doing so, we obtain that
\begin{align*}
t\mathbb{P}\left(N_{dk\cdot} > b(t)z \right) \rightarrow \int_0^\infty \mathbb{P}\left(y \theta_{dk} > z\right)\nu_\iota(dy) \equiv \eta^\star((z, \infty)).
\end{align*}
Using the fact that for $X \sim \text{Beta}(a, b)$, $B(a, b)P(X > x) = \int_{x}^1 r^{a - 1}(1 - r)^{b - 1}dr$ for $x \in (0, 1)$ and $P(X > x) = 0$ for $x \geq 1$, we have that
\begin{align*}
\eta^\star((z, \infty)) =& \frac{1}{B(\alpha_k, \alpha_0 - \alpha_k)} \int_z^\infty \left(\int_{z/y}^1 r^{\alpha_k - 1}(1 - r)^{\alpha_0 - \alpha_k - 1}dr\right)\nu_\iota(dy) \\
=& \frac{1}{B(\alpha_k, \alpha_0 - \alpha_k)} \int_0^1 (r/z)^{\iota} r^{\alpha_k - 1}(1 - r)^{\alpha_0 - \alpha_k - 1}dr \\
=& z^{-\iota} \frac{B(\alpha_k + \iota, \alpha_0 - \alpha_k)}{B(\alpha_k, \alpha_0 - \alpha_k)}.
\end{align*}
Hence, defining 
\begin{align*}
b_k(t) = b(t)\left(  \frac{B(\alpha_k + \iota, \alpha_0 - \alpha_k)}{B(\alpha_k, \alpha_0 - \alpha_k)} \right)^{1/\iota},
\end{align*}
we have that $N_{dk\cdot}$ satisfies \eqref{eq:lengthrv} with scaling function $b_k(t)$. Hence
\begin{align*}
\left( N_{d k 1}, \dots, N_{d k W} \right) \in \text{MRV}(\iota, b_k(t), \upsilon, \mathbb{R}^W_+ \setminus \{0\}),
\end{align*}
with $\upsilon$ as defined in the theorem statement. Thus, applying Proposition 7.4 of \cite{resnick2007heavy} gives that
\begin{align*}
\left( N_{\cdot k 1}, \dots, N_{\cdot k W} \right) \in \text{MRV}(\iota, \tilde{b}_k(t), \upsilon, \mathbb{R}^W_+ \setminus \{0\}),
\end{align*}
with $\tilde{b}_k(t) = D^{1/\iota} b_k(t)$. The convergence in terms of polar coordinates follows from Corollary 2.1 of \cite{resnick2024art}.
\end{proof}

\subsection{Proof of Lemma \ref{lem:cooccur}}\label{sec:coproof}

\begin{proof}[Proof of Lemma \ref{lem:cooccur}]
It suffices to prove the statement for $u = 1$. We prove the statement via a variant of Breiman's Theorem found in Proposition 7.6 of \cite{resnick2007heavy} and page 89 of \cite{resnick2024art}. We first verify that
\begin{align*}
\mathbb{P}\left( N_{d\cdot 1} \in \cdot \right) \in \text{MRV}(\iota, cb(t), \nu_\iota, \mathbb{R}_+\setminus \{0\}),
\end{align*}
where $c = \left(\mathbb{E}\left[ p^{\iota}_{d1}\right]\right)^{1/\iota}$. Observe that evaluating the convergence in (a) of Theorem \ref{thm:freqmrv} on sets of the form $(z, \infty) \times (0, \infty) \times \cdots  \times (0, \infty)$, $z > 0$ gives that
\begin{align*}
t\mathbb{P}\left(N_{d\cdot 1} > b(t)z\right) \rightarrow \int_0^\infty \mathbb{P}\left(y p_{d1} > z \right) \nu_\iota(dy) \equiv \mu^\star((z, \infty)).
\end{align*}
Indeed
\begin{align*}
\mu^\star((z, \infty)) = \int_z^\infty \mathbb{P}\left(y p_{d1} > z \right) \nu_\iota(dy) = \mathbb{E}\left[\int_z^\infty 1_{\left\lbrace y p_{d1} > z \right\rbrace} \nu_\iota(dy)\right] = z^{-\iota}\mathbb{E}\left[ p^{\iota}_{d1}\right].
\end{align*}
We now show that in $\mathbb{M}\left(\mathbb{R}_+^{W + 1}  \setminus \{0 \} \right)$, as $t \rightarrow \infty$
\begin{align*}
t\mathbb{P}\left(\left(N_{d\cdot 1}/c, N_{d\cdot 1}, \dots, N_{d\cdot W}\right)/b(t) \in \cdot \right) \rightarrow \psi(\cdot),
\end{align*}
where the limit measure is defined for $f \in \mathcal{C}(\mathbb{R}_+^{W + 1}  \setminus \{0 \})$ by
\begin{align*}
\psi(f) = \int_0^\infty \mathbb{E}\left[f\left(y \left(\frac{p_{d1}}{c},  p_d\right)\right)\right] \nu_\iota(dy).
\end{align*}
Define the map $h: \mathbb{R}^W_+ \rightarrow \mathbb{R}^{W + 1}_+$ by
\begin{align*}
h: (x_1, \dots, x_W) \mapsto (x_1/c, x_1, \dots, x_W). 
\end{align*}
The map is uniformly continuous and maps $0 \in \mathbb{R}^W$ to $0 \in \mathbb{R}^{W+1}$. Thus Theorem 1.3 of \cite{resnick2024art} applies and hence 
\begin{align*}
t\mathbb{P}\left(\left(N_{d\cdot 1}/c, N_{d\cdot 1}, \dots, N_{d\cdot W}\right)/b(t) \in \cdot \right) = t\mathbb{P}\left(\left(N_{d\cdot 1}, \dots, N_{d\cdot W}\right)/b(t) \in \cdot \right) \circ h^{-1} \rightarrow \mu \circ h^{-1} = \psi.
\end{align*}
The limit measure $\psi$ clearly lacks asymptotic independence (i.e. satisfies (7.41) of \cite{resnick2007heavy}) and thus we may apply Proposition 7.6 of \cite{resnick2007heavy} to achieve that
\begin{align*}
\mathbb{P}\left(\left(N_{d\cdot 1}^2, \dots, N_{d\cdot 1}N_{d\cdot W}\right) \in \cdot \right) \in \text{MRV}\left(\iota/2, cb^2(t), \underaccent{\tilde}{\zeta}, \mathbb{R}_+^{W}  \setminus \{0 \}\right),
\end{align*}
with $\underaccent{\tilde}{\zeta}$ satisfying
\begin{align*}
\underaccent{\tilde}{\zeta}(f) = \int_0^\infty \mathbb{E}\left[f\left(y^2 \frac{p_{d1}}{c} p_d\right)\right] \nu_\iota(dy) = \int_0^\infty \mathbb{E}\left[f\left(y \frac{p_{d1}}{c} p_d\right)\right] \nu_{\iota/2}(dy),
\end{align*}
for $f \in \mathcal{C}(\mathbb{R}_+^{W}  \setminus \{0 \})$. The result follows.
\end{proof}

\subsection{Moments of $p_d$}\label{sec:mproof}

In this section we compute the moments of $p_d$ and show that $\Sigma$ and $\mathcal{T}$ have the form presented in Lemma \ref{lem:moments}. The computation of these moments relies on the following lemma from \cite{anandkumar2012spectral}.
\begin{lemma}
\label{lem:dirmoments}
Let $e_1, \dots, e_K$ denote the canonical vectors in $\mathbb{R}^K$. Further let $\alpha \in \mathbb{R}^K$ consist of the elements $\alpha_1, \dots, \alpha_K$ and let $\alpha_0 = \sum_{k = 1}^K \alpha_k$. For $\theta_d \sim \text{Dirichlet}(\alpha_1, \dots, \alpha_K)$, we have that
\begin{align*}
\alpha_0 \mathbb{E}[\theta_d] =& \sum_{k = 1}^K \alpha_ke_k, \\
\alpha_0(\alpha_0 + 1) \mathbb{E}[\theta_d \otimes \theta_d] =& \sum_{k = 1}^K \alpha_k e_k \otimes e_k + \alpha \otimes \alpha,
\end{align*}
and for the third-order tensor, 
\begin{align*}
\alpha_0(\alpha_0 + 1)(\alpha_0 + 2) \mathbb{E}[\theta_d \otimes \theta_d \otimes \theta_d] =&2\sum_{k = 1}^K \alpha_k (e_k \otimes e_k \otimes e_k) +  \sum_{k = 1}^K \alpha_k (e_k \otimes e_k \otimes \alpha) \\
&+ \sum_{k = 1}^K \alpha_k(\alpha \otimes e_k \otimes e_k) + \sum_{k = 1}^K \alpha_k(e_k \otimes \alpha \otimes e_k) + \alpha \otimes \alpha \otimes \alpha .
\end{align*}
\end{lemma}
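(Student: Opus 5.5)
The plan is to compute the Dirichlet moments by direct integration against the Dirichlet density, using the normalizing-constant identity. Write $\theta = \theta_d$, with density proportional to $\prod_k \theta_k^{\alpha_k - 1}$ on $\Delta^{K-1}$ and normalizing constant $B(\alpha) = \prod_k \Gamma(\alpha_k)/\Gamma(\alpha_0)$. For any multi-index $m \in \mathbb{Z}_+^K$ with $|m| = \sum_k m_k$, the basic computation is
\begin{align*}
\mathbb{E}\Big[\prod_k \theta_k^{m_k}\Big] = \frac{B(\alpha + m)}{B(\alpha)} = \frac{\prod_k \alpha_k^{(m_k)}}{\alpha_0^{(|m|)}},
\end{align*}
where $x^{(n)} = x(x+1)\cdots(x+n-1)$ denotes the rising factorial. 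This follows from $\Gamma(x+n) = x^{(n)}\Gamma(x)$. Everything else is bookkeeping of entries.

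For the first moment, take $m = e_k$. This gives $\mathbb{E}[\theta_k] = \alpha_k/\alpha_0$, which is the first identity. For the second moment, multiply through by $\alpha_0^{(2)} = \alpha_0(\alpha_0+1)$ and consider the $(i,j)$ entry. If $i \neq j$ the entry equals $\alpha_i\alpha_j$. If $i = j$ it equals $\alpha_i(\alpha_i+1) = \alpha_i^2 + \alpha_i$. These are exactly the entries of $\alpha\otimes\alpha + \sum_k \alpha_k e_k\otimes e_k$.

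For the third moment, multiply by $\alpha_0^{(3)}$ and split the $(i,j,l)$ entry into three cases.
\begin{itemize}
\item All indices distinct: the entry is $\alpha_i\alpha_j\alpha_l$.
\item Exactly two indices equal, say $i = j \neq l$: the entry is $\alpha_i(\alpha_i+1)\alpha_l = \alpha_i^2\alpha_l + \alpha_i\alpha_l$.
\item All indices equal: the entry is $\alpha_i(\alpha_i+1)(\alpha_i+2) = \alpha_i^3 + 3\alpha_i^2 + 2\alpha_i$.
\end{itemize}

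Next, evaluate the right-hand side of the claimed identity entrywise.
\begin{itemize}
\item The term $\alpha\otimes\alpha\otimes\alpha$ contributes $\alpha_i\alpha_j\alpha_l$ to every entry.
\item The term $\sum_k \alpha_k\, e_k\otimes e_k\otimes\alpha$ contributes $\alpha_i\alpha_l\,1_{\{i=j\}}$. The other two placements of $\alpha$ contribute $\alpha_j\alpha_i\,1_{\{j=l\}}$ and $\alpha_i\alpha_j\,1_{\{i=l\}}$.
\item The term $2\sum_k \alpha_k\, e_k^{\otimes 3}$ contributes $2\alpha_i\,1_{\{i=j=l\}}$.
\end{itemize}
Now check the cases against these contributions. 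When the indices are distinct, only the cube term survives, which matches. When exactly two indices coincide, exactly one of the three indicator terms fires and adds $\alpha_i\alpha_l$, which matches. When all three coincide, every term fires: $\alpha_i^3 + 3\alpha_i^2 + 2\alpha_i$, which matches.

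The only delicate step is this case analysis for the third-order tensor. In particular, one must be careful that the three "one $\alpha$" terms are placed in distinct modes so that each two-equal pattern is counted exactly once. Beyond that, the argument is routine.
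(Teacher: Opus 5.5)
Your proof is correct. The Dirichlet moment identity $\mathbb{E}[\prod_k \theta_k^{m_k}] = \prod_k \alpha_k^{(m_k)}/\alpha_0^{(|m|)}$, followed by your entrywise case analysis, establishes all three identities. This includes the check that exactly one mode-placed $\alpha$ term fires in each two-equal pattern, and the diagonal count $\alpha_i^3 + 3\alpha_i^2 + 2\alpha_i$.

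The paper itself gives no proof of this lemma; it simply imports it from \cite{anandkumar2012spectral}. Your direct computation is the standard derivation and makes the lemma self-contained, so nothing is missing.
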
  

\begin{proof}[Proof of Lemma \ref{lem:moments}]
For $\Sigma$, see that
\begin{align*}
\mathbb{E}\left[p_d \otimes p_d \right] =& \mathbb{E}\left[(\Phi \theta_d)(\Phi \theta_d)'\right] \\
=& \Phi \mathbb{E}[\theta_d \otimes \theta_d] \Phi' \\
=& \frac{1}{\alpha_0(\alpha_0 + 1)}\Phi \left( \sum_{k = 1}^K \alpha_k e_k \otimes e_k + \alpha \otimes \alpha \right) \Phi' \\
=& \frac{1}{\alpha_0(\alpha_0 + 1)}\left( \sum_{k = 1}^K \alpha_k (\phi_k \otimes \phi_k) + \alpha_0^2 (M \otimes M) \right).
\end{align*}
Rearranging gives the form of $\Sigma$. For $\mathcal{T}$, see that from the form of $\mathbb{E}[\theta_d \otimes \theta_d \otimes \theta_d]$ in Lemma \ref{lem:dirmoments}
\begin{align*}
\alpha_0&(\alpha_0 + 1)(\alpha_0 + 2)\mathbb{E}\left[p_d \otimes p_d \otimes p_d \right] \\
=& \alpha_0(\alpha_0 + 1)(\alpha_0 + 2) \mathbb{E}\left[(\Phi \theta_d) \otimes  (\Phi \theta_d) \otimes (\Phi \theta_d) \right] \\
=& 2\sum_{k = 1}^K \alpha_k (\phi_k \otimes \phi_k \otimes \phi_k) +  \sum_{k = 1}^K \alpha_k \alpha_0(\phi_k \otimes \phi_k \otimes M) + \sum_{k = 1}^K \alpha_k\alpha_0 (M \otimes \phi_k \otimes \phi_k) \\
&+ \sum_{k = 1}^K \alpha_k \alpha_0 (\phi_k \otimes M \otimes \phi_k) + \alpha_0^3 (M \otimes M \otimes M) \\
=& 2\sum_{k = 1}^K \alpha_k (\phi_k \otimes \phi_k \otimes \phi_k) + \alpha_0^3 (M \otimes M \otimes M) \\
&+ \alpha_0\left(\alpha_0(\alpha_0 + 1)\mathbb{E}\left[p_d \otimes M \otimes p_d \right] - \alpha_0^2 (M \otimes M \otimes M) \right) \\
&+ \alpha_0\left(\alpha_0(\alpha_0 + 1)\mathbb{E}\left[M \otimes p_d  \otimes p_d \right] - \alpha_0^2 (M \otimes M \otimes M) \right) \\
&+ \alpha_0\left(\alpha_0(\alpha_0 + 1)\mathbb{E}\left[p_d  \otimes p_d \otimes M \right] - \alpha_0^2 (M \otimes M \otimes M) \right) \\
=& 2\sum_{k = 1}^K \alpha_k (\phi_k \otimes \phi_k \otimes \phi_k) + \alpha_0^2(\alpha_0 + 1)\mathbb{E}\left[p_d \otimes p_d \otimes M \right] \\
&+ \alpha_0^2(\alpha_0 + 1)\mathbb{E}\left[p_d \otimes M \otimes p_d \right] + \alpha_0^2(\alpha_0 + 1)\mathbb{E}\left[M \otimes p_d  \otimes p_d \right] - 2\alpha_0^3 (M \otimes M \otimes M).
\end{align*}
Rearranging gives the expression for $\mathcal{T}$.
\end{proof}

\subsection{Proof of Theorem \ref{thm:EDM}}\label{sec:edmproof}

\begin{proof}[Proof of Theorem \ref{thm:EDM}]
We only prove \eqref{eq:norm3} since the other two statements are similar. We aim to apply Theorem 1 of \cite{larsson2012extremal}. This theorem is provided for a 2-dimensional angular component. As noted in Theorem 4 of \cite{kluppelberg2021estimating}, a line-by-line reading of the proof gives asymptotic normality of $\hat{\mathbb{E}}_D[p_{du}p_{dv}p_{dw}]$ given that
\begin{align*}
\lim_{D \rightarrow \infty} \sqrt{k_D}\left( \frac{D}{k_D}\mathbb{E}\left[\omega_{du}\omega_{dv}\omega_{dw} 1_{\left\lbrace  R_d \geq b(D/k_D)t^{-1/\iota} \right\rbrace} \right] - \mathbb{E}\left[p_{du}p_{dv}p_{dw} \right] \frac{D}{k_D} \mathbb{P}\left(R_{d} > b(D/k_D)t^{-1/\iota} \right) \right) = 0,
\end{align*}
holds locally uniformly for $t \in [0, \infty)$. Unpacking notation, we have that
\begin{align*}
\mathbb{E}\left[\omega_{du}\omega_{dv}\omega_{dw} 1_{\left\lbrace  R_d \geq b(D/k_D)t^{-1/\iota} \right\rbrace} \right] = \mathbb{E}\left[\frac{N_{d\cdot u}}{N_{d\cdot \cdot}}\frac{N_{d\cdot v}}{N_{d\cdot \cdot}}\frac{N_{d\cdot w}}{N_{d\cdot \cdot}} 1_{\left\lbrace N_{d\cdot \cdot} \geq b(D/k_D)t^{-1/\iota} \right\rbrace} \right].
\end{align*}
We aim to compute the moments $\mathbb{E}[N_{d\cdot u}N_{d\cdot v}N_{d\cdot w} \mid N_{d\cdot\cdot}]$. Using tower property, 
\begin{align*}
\mathbb{E}\left[N_{d\cdot u}N_{d\cdot v}N_{d\cdot w} \mid N_{d\cdot\cdot} \right] = \mathbb{E}\left[ \mathbb{E}\left[N_{d\cdot u}N_{d\cdot v}N_{d\cdot w} \mid N_{d\cdot\cdot}, \theta_d \right] \mid N_{d\cdot\cdot} \right],
\end{align*}
where the inner expectation may be obtained using factorial moment formulas from \cite{mosimann1962compound}. By independence of $\theta_d$ and $N_{d\cdot\cdot}$, we thus obtain the moments in three cases:
\begin{enumerate}
\item when $u = v = w$
\begin{align*}
\mathbb{E}\left[N_{d\cdot u}N_{d\cdot v}N_{d\cdot w} \mid N_{d\cdot\cdot} \right] = N_{d\cdot\cdot}(N_{d\cdot\cdot} - 1)(N_{d\cdot\cdot} - 2)\mathbb{E}[p^3_{du}] + 3 N_{d\cdot\cdot}(N_{d\cdot\cdot} - 1)\mathbb{E}[p^2_{du}] + N_{d\cdot\cdot}\mathbb{E}[p_{du}],
\end{align*}
\item when $u = v \neq w$
\begin{align*}
\mathbb{E}\left[N_{d\cdot u}N_{d\cdot v}N_{d\cdot w} \mid N_{d\cdot\cdot} \right] =&  N_{d\cdot\cdot}(N_{d\cdot\cdot} - 1)(N_{d\cdot\cdot} - 2)\mathbb{E}\left[p^2_{du}p_{dw}\right]+  N_{d\cdot\cdot}(N_{d\cdot\cdot} - 1)\mathbb{E}\left[p_{du}p_{dw}\right],
\end{align*}
\item and when $u,v,w$ distinct
\begin{align*}
\mathbb{E}\left[N_{d\cdot u}N_{d\cdot v}N_{d\cdot w} \mid N_{d\cdot\cdot} \right] =&  N_{d\cdot\cdot}(N_{d\cdot\cdot} - 1)(N_{d\cdot\cdot} - 2)\mathbb{E}\left[p_{du}p_{dv}p_{dw}\right].
\end{align*}
\end{enumerate}
Since $|  N_{d\cdot\cdot}(N_{d\cdot\cdot} - 1)(N_{d\cdot\cdot} - 2) - N^3_{d\cdot\cdot}| = |-3N^2_{d\cdot\cdot} + 2N_{d\cdot\cdot}| \leq 5 N^2_{d\cdot\cdot}$, we have that for all cases
\begin{align*}
\left|N^{-3}_{d\cdot\cdot}\mathbb{E}\left[N_{d\cdot u}N_{d\cdot v}N_{d\cdot w} \mid N_{d\cdot\cdot} \right] - \mathbb{E}\left[p_{du}p_{dv}p_{dw}\right]\right| \leq \frac{9}{N_{d\cdot\cdot}}.
\end{align*}
Hence
\begin{align*}
&\left|\sqrt{k_D}\left( \frac{D}{k_D}\mathbb{E}\left[\omega_{du}\omega_{dv}\omega_{dw} 1_{\left\lbrace  R_d \geq b(D/k_D)t^{-1/\iota} \right\rbrace} \right] - \mathbb{E}\left[p_{du}p_{dv}p_{dw} \right] \frac{D}{k_D} \mathbb{P}\left(R_{d} > b(D/k_D)t^{-1/\iota} \right) \right) \right| \\
&= \left|\sqrt{k_D} \frac{D}{k_D}\mathbb{E}\left[\left(N^{-3}_{d\cdot\cdot} \mathbb{E}\left[N_{d\cdot u}N_{d\cdot v}N_{d\cdot w} \mid N_{d\cdot\cdot} \right]  - \mathbb{E}\left[p_{du}p_{dv}p_{dw} \right]\right)1_{\left\lbrace  N_{d\cdot\cdot} \geq b(D/k_D)t^{-1/\iota} \right\rbrace} \right] \right| \\
&\leq \sqrt{k_D} \frac{D}{k_D}\mathbb{E}\left[\left|N^{-3}_{d\cdot\cdot} \mathbb{E}\left[N_{d\cdot u}N_{d\cdot v}N_{d\cdot w} \mid N_{d\cdot\cdot} \right]  - \mathbb{E}\left[p_{du}p_{dv}p_{dw} \right]\right|1_{\left\lbrace  N_{d\cdot\cdot} \geq b(D/k_D)t^{-1/\iota} \right\rbrace} \right]\\
&\leq 9 \sqrt{k_D} \frac{D}{k_D} \mathbb{E}\left[\frac{1_{\left\lbrace N_{d\cdot \cdot} \geq b(D/k_D)t^{-1/\iota} \right\rbrace}}{N_{d\cdot \cdot}} \right].
\end{align*}
It thus suffices to show that 
\begin{align*}
\lim_{D \rightarrow \infty} \sqrt{k_D} \frac{D}{k_D} \mathbb{E}\left[\frac{1_{\left\lbrace N_{d\cdot \cdot} \geq b(D/k_D)t^{-1/\iota} \right\rbrace}}{N_{d\cdot \cdot}} \right] = 0,
\end{align*}
locally uniformly. Naive bounds give that
\begin{align*}
\sqrt{k_D} \frac{D}{k_D} \mathbb{E}\left[\frac{1_{\left\lbrace N_{d\cdot \cdot} \geq b(D/k_D)t^{-1/\iota} \right\rbrace}}{N_{d\cdot \cdot}} \right] \leq& t^{1/\iota} \frac{\sqrt{k_D}}{b(D/k_D)} \frac{D}{k_D}\mathbb{P}\left( N_{d\cdot \cdot} \geq b(D/k_D)t^{-1/\iota}\right).
\end{align*}
By regular variation, as $D \rightarrow \infty$
\begin{align*}
t^{1/\iota} \frac{D}{k_D}\mathbb{P}\left( N_{d\cdot \cdot} \geq b(D/k_D)t^{-1/\iota}\right) \rightarrow t^{1 + \frac{1}{\iota}}.
\end{align*}
Since the functions are nondecreasing in $t$ and the limit is continuous, the convergence is locally uniform on $[0, \infty)$ \citep[see Proposition 2.1 of][]{resnick2007heavy}. Hence, given that $\sqrt{k_D}/b(D/k_D) \rightarrow 0$, the proof is complete.
\end{proof}

\subsection{Proof of Lemma \ref{lem:Op}}\label{sec:Opproof}

\begin{proof}[Proof of Lemma \ref{lem:Op}]
We only prove the second claim, since the first is similar. We consider the terms in $\hat{\tmathcalT}_D - \tmathcalT$. For the first term, note that
\begin{align*}
\left\Vert \hat{\mathbb{E}}_D\left[  p_d \otimes p_d \otimes p_d \right] - \mathbb{E}\left[ p_d \otimes p_d \otimes p_d \right] \right\Vert \leq \left\Vert \hat{\mathbb{E}}_D\left[  p_d \otimes p_d \otimes p_d \right] - \mathbb{E}\left[ p_d \otimes p_d \otimes p_d \right] \right\Vert_F.
\end{align*}
By Theorem \ref{thm:EDM}, for any fixed $u, v, w \in \{1, \dots, W\}$,
\begin{align*}
(\hat{\mathbb{E}}_D[p_{du}p_{dv}p_{dw}] - \mathbb{E}[p_{du}p_{dv}p_{dw}])^2 = O_p\left(\frac{1}{k_D}\right).
\end{align*}
Since $\| \hat{\mathbb{E}}_D\left[  p_d \otimes p_d \otimes p_d \right] - \mathbb{E}\left[ p_d \otimes p_d \otimes p_d \right] \|^2_F$ is a finite sum over these terms, we have that
\begin{align*}
\left\Vert \hat{\mathbb{E}}_D\left[  p_d \otimes p_d \otimes p_d \right] - \mathbb{E}\left[ p_d \otimes p_d \otimes p_d \right] \right\Vert \leq \left\Vert \hat{\mathbb{E}}_D\left[  p_d \otimes p_d \otimes p_d \right] - \mathbb{E}\left[ p_d \otimes p_d \otimes p_d \right] \right\Vert_F = O_p\left(\frac{1}{\sqrt{k_D}}\right).
\end{align*}
We now focus on the second and first moment correction terms. See that by triangle inequality
\begin{align*}
\left\Vert \hat{\mathbb{E}}_D\left[  p_d \otimes p_d\right] \otimes \hat{\mathbb{E}}_D\left[  p_d \right] - \mathbb{E}\left[  p_d \otimes p_d \right] \otimes M  \right\Vert \leq& \left\Vert \hat{\mathbb{E}}_D\left[  p_d \otimes p_d\right] \otimes \left( \hat{\mathbb{E}}_D\left[  p_d \right] -  M\right)  \right\Vert \\
&+ \left\Vert \left(\hat{\mathbb{E}}_D\left[  p_d \otimes p_d\right] - \mathbb{E}\left[  p_d \otimes p_d \right]\right) \otimes M  \right\Vert \\
\leq& \left\Vert \hat{\mathbb{E}}_D\left[  p_d \otimes p_d\right] \right\Vert \left\Vert  \hat{\mathbb{E}}_D\left[  p_d \right] -  M  \right\Vert_2 \\
&+ \left\Vert \hat{\mathbb{E}}_D\left[  p_d \otimes p_d\right] - \mathbb{E}\left[  p_d \otimes p_d \right] \right\Vert \left\Vert M \right\Vert_2 \\
&= O_p(1)O_p\left( \frac{1}{\sqrt{k_D}}\right) + O_p\left( \frac{1}{\sqrt{k_D}}\right),
\end{align*}
where the stochastic orders again follow from consistency and element-wise asymptotic normality of $\hat{\mathbb{E}}_D\left[  p_d \otimes p_d\right]$ and $\hat{\mathbb{E}}_D\left[ p_d\right]$. The other second and first moment correction terms follow similarly since they are just permutations of the above term. Now consider the last term in $\hat{\tmathcalT}_D - \tmathcalT$. See that
\begin{align*}
\left\Vert \hat{\mathbb{E}}_D\left[  p_d \right] \otimes \hat{\mathbb{E}}_D\left[  p_d \right] \otimes \hat{\mathbb{E}}_D\left[  p_d \right]  - M \otimes M \otimes M \right\Vert \leq& \left\Vert \hat{\mathbb{E}}_D\left[  p_d \right] \otimes \hat{\mathbb{E}}_D\left[  p_d \right] \otimes \left(\hat{\mathbb{E}}_D\left[  p_d \right] - M\right) \right\Vert \\
&+ \left\Vert \hat{\mathbb{E}}_D\left[  p_d \right] \otimes \left( \hat{\mathbb{E}}_D\left[  p_d \right] - M \right) \otimes M   \right\Vert \\
&+ \left\Vert \left( \hat{\mathbb{E}}_D\left[  p_d \right]  - M \right) \otimes M \otimes M  \right\Vert \\
\leq& \left\Vert \hat{\mathbb{E}}_D\left[  p_d \right] \right\Vert_2^2 \left\lVert \hat{\mathbb{E}}_D\left[  p_d \right]  - M \right\Vert_2 \\
&+ \left\Vert \hat{\mathbb{E}}_D\left[  p_d \right] \right\Vert_2 \left\lVert \hat{\mathbb{E}}_D\left[  p_d \right]  - M \right\Vert_2 \left\lVert M \right\Vert_2 \\
&+ \left\Vert \hat{\mathbb{E}}_D\left[  p_d \right]  - M \right\Vert_2 \left\Vert M \right\Vert_2^2 \\
=& O_p(1)O_p\left( \frac{1}{\sqrt{k_D}}\right) + O_p(1)O_p\left( \frac{1}{\sqrt{k_D}}\right) + O_p\left( \frac{1}{\sqrt{k_D}}\right).
\end{align*}
Again, the stochastic orders follow from asymptotic normality of the empirical spectral measure moments in Theorem \ref{thm:EDM}. Hence the proof is complete.
\end{proof}

\subsection{Proof of Theorem \ref{thm:roc}}\label{sec:rocproof}

We now turn to proving the consistency of Algorithm \ref{alg:specLDA} using $\hat{\Sigma}_D$ and $\hat{\mathcal{T}}_D$ as the adjusted moment estimators. Following \cite{anandkumar2015spectral}, we prove that $\hat{\tmathcalT}_D(\hat{\mathcal{W}}_D, \hat{\mathcal{W}}_D, \hat{\mathcal{W}}_D)$ concentrates around an orthogonally decomposable reference tensor $\tmathcalT(\hat{\mathcal{V}}_D, \hat{\mathcal{V}}_D, \hat{\mathcal{V}}_D)$ where 
\begin{align*}
\hat{\mathcal{V}}_D = \hat{\mathcal{W}}_D(\hat{\mathcal{W}}_D'\tSigma \hat{\mathcal{W}}_D)^{\dagger 1/2}.
\end{align*}

Note that on a high probability event, $\hat{\mathcal{V}}_D = \hat{\mathcal{W}}_D(\hat{\mathcal{W}}_D'\tSigma \hat{\mathcal{W}}_D)^{-1/2}$ (see Lemma \ref{lem:whitening}, claim 3). On this high probability event, observe that $\hat{\mathcal{V}}_D'\tSigma \hat{\mathcal{V}}_D = I$ and thus $\tmathcalT(\hat{\mathcal{V}}_D, \hat{\mathcal{V}}_D, \hat{\mathcal{V}}_D)$ is indeed orthogonally decomposable. Given that $\hat{\tmathcalT}_D(\hat{\mathcal{W}}_D, \hat{\mathcal{W}}_D, \hat{\mathcal{W}}_D)$ concentrates around $\tmathcalT(\hat{\mathcal{V}}_D, \hat{\mathcal{V}}_D, \hat{\mathcal{V}}_D)$, Theorem 5.1 of \cite{anandkumar2014tensor} provides that, up to permutation, the estimated eigenvectors and eigenvalues output by Algorithm \ref{alg:tensorpower} will be near the reference eigenvectors $\hat{\mathcal{V}}_D'\tphi_k$ and eigenvalues $\alpha_k^{-1/2}$, $k = 1,\dots, K$. The last step is to prove that transformation in Step 4 in Algorithm \ref{alg:specLDA} retains consistency. We do not analyze Step 5 of Algorithm \ref{alg:specLDA} since it simply projects the reconstructed topic vector to the simplex.

We first prove concentration of $\hat{\tmathcalT}_D(\hat{\mathcal{W}}_D, \hat{\mathcal{W}}_D, \hat{\mathcal{W}}_D)$ around the reference tensor $\tmathcalT(\hat{\mathcal{V}}_D, \hat{\mathcal{V}}_D, \hat{\mathcal{V}}_D)$. In order to do so, we require some supporting lemmas that establish the properties of the matrix $\hat{\mathcal{W}}_D'\tSigma \hat{\mathcal{W}}_D$. These properties largely follow from Lemma 10 of \cite{hsu2013learning}. Here, we let $\ts^{(w)}$ and $\hat{\ts}_D^{(w)}$ be the $w$-th largest eigenvalues of $\tSigma$ and $\hat{\tSigma}_D$, respectively.

\begin{lemma}
\label{lem:whitening}
Fix $\varepsilon_1 \in (0, \ts^{(K)}/3]$ . On the event
\begin{align*}
E_D(\varepsilon_1) = \{ \|\hat{\tSigma}_D - \tSigma \| \leq \varepsilon_1 \},
\end{align*}
we have that
\begin{enumerate}
\item $\hat{\ts}_D^{(K)} \geq 2\ts^{(K)}/3$.
\item $\left\Vert \hat{\mathcal{W}}_D \right\Vert \leq \sqrt{\frac{3}{2\ts^{(K)}}}$.
\item $\hat{\mathcal{W}}_D'\tSigma\hat{\mathcal{W}}_D$ is positive definite. 
\item $\left\Vert (\hat{\mathcal{W}}_D'\tSigma\hat{\mathcal{W}}_D)^{1/2} \right\Vert \leq \left(1 + \frac{3\varepsilon_1}{2\ts^{(K)}}\right)^{1/2}$.
\item $\left\Vert (\hat{\mathcal{W}}_D'\tSigma\hat{\mathcal{W}}_D)^{1/2} - I \right\Vert \leq \frac{3\varepsilon_1}{2\ts^{(K)}}$.
\item $\left\Vert (\hat{\mathcal{W}}_D'\tSigma\hat{\mathcal{W}}_D)^{-1/2} - I \right\Vert \leq \frac{3\varepsilon_1}{2\ts^{(K)}}$.
\item $\|\hat{U}_D \hat{\Lambda}^{1/2}_D \| \leq \left(\left\Vert \tSigma \right\Vert + \frac{\ts^{(K)}}{3}\right)^{1/2}$.
\end{enumerate}
\end{lemma}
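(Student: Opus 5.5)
The plan is to derive all seven claims from Weyl's inequality together with elementary operator-norm manipulations, much as in Lemma 10 of \cite{hsu2013learning}. Throughout, $\hat{U}_D \hat{\Lambda}_D \hat{U}_D'$ denotes the rank-$K$ truncated eigendecomposition of $\hat{\tSigma}_D$, so that $\hat{\Lambda}_D = \text{diag}(\hat{\ts}_D^{(1)},\dots,\hat{\ts}_D^{(K)})$. We also use that $\tSigma$ is positive semidefinite of rank exactly $K$, by Lemma \ref{lem:moments} and the full column rank of $\Phi$.

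\textbf{Claims 1, 2 and 7.} On $E_D(\varepsilon_1)$, Weyl's inequality (as in the proof of Lemma \ref{lem:eigen}) gives $|\hat{\ts}_D^{(w)} - \ts^{(w)}| \leq \varepsilon_1 \leq \ts^{(K)}/3$ for every $w$. Hence $\hat{\ts}_D^{(K)} \geq 2\ts^{(K)}/3$, which is claim 1. In particular $\hat{\Lambda}_D$ is invertible, and $\|\hat{\mathcal{W}}_D\| = \|\hat{\Lambda}_D^{-1/2}\| = (\hat{\ts}_D^{(K)})^{-1/2}$, giving claim 2. Similarly $\|\hat{U}_D\hat{\Lambda}_D^{1/2}\| = (\hat{\ts}_D^{(1)})^{1/2} \leq (\|\tSigma\| + \ts^{(K)}/3)^{1/2}$, which is claim 7.

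\textbf{Claims 3 and 4.} Note the whitening identity
\[
\hat{\mathcal{W}}_D'\hat{\tSigma}_D\hat{\mathcal{W}}_D = \hat{\Lambda}_D^{-1/2}\hat{U}_D'\hat{\tSigma}_D\hat{U}_D\hat{\Lambda}_D^{-1/2} = I,
\]
since the columns of $\hat{U}_D$ are eigenvectors of $\hat{\tSigma}_D$. Therefore
\[
\hat{\mathcal{W}}_D'\tSigma\hat{\mathcal{W}}_D - I = \hat{\mathcal{W}}_D'(\tSigma - \hat{\tSigma}_D)\hat{\mathcal{W}}_D,
\]
whose operator norm is at most $\|\hat{\mathcal{W}}_D\|^2\varepsilon_1 \leq 3\varepsilon_1/(2\ts^{(K)}) \leq 1/2$. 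Consequently all eigenvalues of the symmetric matrix $A = \hat{\mathcal{W}}_D'\tSigma\hat{\mathcal{W}}_D$ lie in $[1-\delta, 1+\delta]$, where $\delta = 3\varepsilon_1/(2\ts^{(K)}) \leq 1/2$. The lower bound $1 - \delta > 0$ yields positive definiteness, which is claim 3. The upper bound gives $\|A^{1/2}\| \leq (1+\delta)^{1/2}$, which is claim 4.

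\textbf{Claims 5 and 6.} Because $A$ is symmetric, both reduce to scalar inequalities on its eigenvalues $\lambda \in [1-\delta, 1+\delta]$. For claim 5, $|\sqrt{\lambda} - 1| = |\lambda - 1|/(\sqrt{\lambda}+1) \leq \delta$. For claim 6 we need $|\lambda^{-1/2} - 1| \leq \delta$, and this is the one step requiring care. Write
\[
|\lambda^{-1/2} - 1| = \frac{|\lambda - 1|}{\sqrt{\lambda}(1+\sqrt{\lambda})}.
\]
When $\lambda \geq 1$ the denominator is at least 2 and the bound is immediate. When $\lambda < 1$ we need $\sqrt{\lambda}(1+\sqrt{\lambda}) \geq 1$, i.e. $\sqrt{\lambda} \geq (\sqrt5 - 1)/2 \approx 0.618$, i.e. $\lambda \geq 0.382$. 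Since $\lambda \geq 1 - \delta \geq 1/2$, this holds, so claim 6 follows with constant exactly $\delta$.

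The main obstacle is therefore claim 6: the constant $3\varepsilon_1/(2\ts^{(K)})$ without an extra factor relies on the restriction $\varepsilon_1 \leq \ts^{(K)}/3$ keeping $\delta \leq 1/2$. If a sharper route were needed, the fallback is the bound $\delta/(1-\delta) \leq 2\delta$ after absorbing constants, though the direct scalar check above should suffice.
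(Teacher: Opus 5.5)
Your proposal is correct and follows essentially the same route as the paper: Weyl's inequality for claims 1, 2 and 7, and the identity $\hat{\mathcal{W}}_D'\hat{\tSigma}_D\hat{\mathcal{W}}_D = I$ to place the eigenvalues of $\hat{\mathcal{W}}_D'\tSigma\hat{\mathcal{W}}_D$ within $3\varepsilon_1/(2\ts^{(K)}) \leq 1/2$ of $1$, followed by scalar bounds on $|\sqrt{x}-1|$ and $|x^{-1/2}-1|$ for claims 3--6. The only cosmetic difference is in claim 7, where you bound $\hat{\ts}_D^{(1)}$ directly by Weyl, while the paper uses $\|\hat{\Lambda}_D\| \leq \|\hat{\tSigma}_D\|$ together with the triangle inequality.
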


\begin{proof}
To prove the first claim, see that by Weyl's inequality \citep[see Lemma 4.1.14 of][]{vershynin2026high}
\begin{align*}
\hat{\ts}_D^{(K)} \geq \ts^{(K)} - \|\hat{\tSigma}_D - \tSigma \| \geq \ts^{(K)} - \varepsilon_1 \geq 2\ts^{(K)}/3.
\end{align*}
Hence $\hat{\mathcal{W}}_D$ can be constructed using $K$ non-zero eigenvalues. To see the second claim, note that 
\begin{align*}
\left\Vert \hat{\mathcal{W}}_D \right\Vert = \left\Vert \hat{U}_D \hat{\Lambda}^{-1/2}_D \right\Vert = \left\Vert \hat{\Lambda}^{-1/2}_D \right\Vert = \sqrt{\frac{1}{\hat{\ts}_D^{(K)}}} \leq \sqrt{\frac{3}{2\ts^{(K)}}},
\end{align*}
by claim one. For the third claim, rewrite
\begin{align*}
\hat{\mathcal{W}}_D'\tSigma\hat{\mathcal{W}}_D =  \hat{\mathcal{W}}_D' \hat{\tSigma}_D \hat{\mathcal{W}}_D + \hat{\mathcal{W}}_D'\left(\tSigma - \hat{\tSigma}_D \right)\hat{\mathcal{W}}_D = I - \hat{\mathcal{W}}_D'\left(\hat{\tSigma}_D - \tSigma \right)\hat{\mathcal{W}}_D.
\end{align*}
Hence 
\begin{align*}
\left\Vert \hat{\mathcal{W}}_D'\tSigma\hat{\mathcal{W}}_D - I \right\Vert = \left\Vert \hat{\mathcal{W}}_D'\left(\tSigma - \hat{\tSigma}_D \right)\hat{\mathcal{W}}_D \right\Vert \leq \left\Vert \hat{\mathcal{W}}_D \right\Vert^2  \left\Vert \tSigma - \hat{\tSigma}_D \right\Vert \leq \frac{3\varepsilon_1}{2\ts^{(K)}} \leq \frac{1}{2}.
\end{align*}
Thus, by Weyl's inequality, all eigenvalues of $\hat{\mathcal{W}}_D'\tSigma\hat{\mathcal{W}}_D $ are within $\frac{3\varepsilon_1}{2\ts^{(K)}} \leq \frac{1}{2}$ of 1. Hence the third claim is proved. We now move to the fourth and fifth claims. Since $\hat{\mathcal{W}}_D'\tSigma\hat{\mathcal{W}}_D$ is positive definite, the spectral theorem gives that
\begin{align*}
\hat{\mathcal{W}}_D'\tSigma\hat{\mathcal{W}}_D = \hat{Q}_D \hat{S}_D \hat{Q}_D',
\end{align*}
for some $\hat{Q}_D \in \mathbb{R}^{K \times K}$ and diagonal $\hat{S}_D \in \mathbb{R}^{K \times K}$ with strictly positive diagonal entries. Hence 
\begin{align*}
(\hat{\mathcal{W}}_D'\tSigma\hat{\mathcal{W}}_D)^{1/2} = \hat{Q}_D \hat{S}^{1/2}_D \hat{Q}_D', \\
(\hat{\mathcal{W}}_D'\tSigma\hat{\mathcal{W}}_D)^{-1/2} = \hat{Q}_D \hat{S}^{-1/2}_D \hat{Q}_D'.
\end{align*}
Thus
\begin{align*}
&\left\Vert (\hat{\mathcal{W}}_D'\tSigma\hat{\mathcal{W}}_D)^{1/2} \right\Vert \leq \left(1 + \frac{3\varepsilon_1}{2\ts^{(K)}}\right)^{1/2}, \\
&\left\Vert (\hat{\mathcal{W}}_D'\tSigma\hat{\mathcal{W}}_D)^{1/2} - I \right\Vert \leq \frac{3\varepsilon_1}{2\ts^{(K)}},
\end{align*}
where the second inequality follows from the fact that $|\sqrt{x} - 1| = |(x - 1)/(\sqrt{x} + 1)| \leq |x - 1|$ for $x > 0$. Hence the fourth and fifth claims are proved. Likewise $|x^{-1/2} - 1| = |(\sqrt{x} - 1)/\sqrt{x}| = \frac{|x - 1|}{\sqrt{x}(\sqrt{x} + 1)} \leq |x - 1|$ for $x > 1/2$ and thus
\begin{align*}
\left\Vert (\hat{\mathcal{W}}_D'\tSigma\hat{\mathcal{W}}_D)^{-1/2} - I \right\Vert \leq \frac{3\varepsilon_1}{2\ts^{(K)}},
\end{align*}
proving the sixth claim. For the seventh claim, note that 
\begin{align*}
\left\Vert\hat{U}_D \hat{\Lambda}^{1/2}_D \right\Vert = \left\Vert \hat{\Lambda}^{1/2}_D \right\Vert = \left\Vert \hat{\Lambda}_D \right\Vert^{1/2} \leq \left\Vert \hat{\tSigma}_D \right\Vert^{1/2} \leq& \left(\left\Vert \tSigma \right\Vert + \left\Vert \hat{\tSigma}_D - \tSigma \right\Vert\right)^{1/2} \\
\leq& \left(\left\Vert \tSigma \right\Vert + \frac{\ts^{(K)}}{3}\right)^{1/2}.
\end{align*}
\end{proof}

With properties of $\hat{\mathcal{W}}_D'\tSigma \hat{\mathcal{W}}_D$ established, we may now prove concentration of $\hat{\tmathcalT}_D(\hat{\mathcal{W}}_D, \hat{\mathcal{W}}_D, \hat{\mathcal{W}}_D)$ around  $\tmathcalT(\hat{\mathcal{V}}_D, \hat{\mathcal{V}}_D, \hat{\mathcal{V}}_D)$. Note that the error between these two quantities can be decomposed into estimation errors emanating from $\hat{\tmathcalT}_D$ and $\hat{\tSigma}_D$. In fact, the error from $\hat{\tSigma}_D$ controls the error between the whiteners $\hat{\mathcal{W}}_D$ and $\hat{\mathcal{V}}_D$.

\begin{lemma}
\label{lem:terror}
Fix $\varepsilon_1 \in (0, \ts^{(K)}/3]$ and $\varepsilon_2 > 0$. On the events
\begin{align*}
E_D(\varepsilon_1) = \{ \|\hat{\tSigma}_D - \tSigma \| \leq \varepsilon_1 \}, \quad B_D(\varepsilon_2) = \{ \|\hat{\tmathcalT}_D - \tmathcalT \| \leq \varepsilon_2 \},
\end{align*}
we have that
\begin{align*}
\left\Vert \hat{\tmathcalT}_D(\hat{\mathcal{W}}_D, \hat{\mathcal{W}}_D, \hat{\mathcal{W}}_D) - \tmathcalT(\hat{\mathcal{V}}_D, \hat{\mathcal{V}}_D,\hat{\mathcal{V}}_D) \right\Vert \leq \frac{2}{\left(\ts^{(K)}\right)^{3/2}} \varepsilon_2 + 6 \frac{\max_k \alpha_k^{-1/2}}{\ts^{(K)}} \varepsilon_1.
\end{align*}
\end{lemma}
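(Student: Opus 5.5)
We split the error with the triangle inequality through the intermediate tensor $\tmathcalT(\hat{\mathcal{W}}_D, \hat{\mathcal{W}}_D, \hat{\mathcal{W}}_D)$:
\begin{align*}
\left\Vert \hat{\tmathcalT}_D(\hat{\mathcal{W}}_D, \hat{\mathcal{W}}_D, \hat{\mathcal{W}}_D) - \tmathcalT(\hat{\mathcal{V}}_D, \hat{\mathcal{V}}_D,\hat{\mathcal{V}}_D) \right\Vert \leq& \left\Vert (\hat{\tmathcalT}_D - \tmathcalT)(\hat{\mathcal{W}}_D, \hat{\mathcal{W}}_D, \hat{\mathcal{W}}_D)\right\Vert \\
&+ \left\Vert \tmathcalT(\hat{\mathcal{W}}_D, \hat{\mathcal{W}}_D, \hat{\mathcal{W}}_D) - \tmathcalT(\hat{\mathcal{V}}_D, \hat{\mathcal{V}}_D,\hat{\mathcal{V}}_D)\right\Vert.
\end{align*}
For the first term we use the elementary bound $\|T(A,A,A)\| \leq \|T\|\,\|A\|^3$ for symmetric $T$. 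This holds because $T(A,A,A)(x,x,x) = T(Ax,Ax,Ax)$ and $\|Ax\|_2 \leq \|A\|$ for $\|x\|_2 = 1$. Combined with claim 2 of Lemma \ref{lem:whitening}, this gives the bound $\varepsilon_2 (3/(2\ts^{(K)}))^{3/2}$. Since $(3/2)^{3/2} \approx 1.84 \leq 2$, this is at most $2\varepsilon_2/(\ts^{(K)})^{3/2}$, which is the first summand of the claim.

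For the second term, claim 3 of Lemma \ref{lem:whitening} gives that, on $E_D(\varepsilon_1)$, $\hat{\mathcal{V}}_D = \hat{\mathcal{W}}_D G$ with $G = (\hat{\mathcal{W}}_D'\tSigma\hat{\mathcal{W}}_D)^{-1/2}$. Hence $\tmathcalT(\hat{\mathcal{V}}_D,\hat{\mathcal{V}}_D,\hat{\mathcal{V}}_D) = S(G,G,G)$ and $\tmathcalT(\hat{\mathcal{W}}_D,\hat{\mathcal{W}}_D,\hat{\mathcal{W}}_D) = S(I,I,I)$, where $S = \tmathcalT(\hat{\mathcal{V}}_D,\hat{\mathcal{V}}_D,\hat{\mathcal{V}}_D)G^{-1}$ applied in each mode. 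It is cleaner to write $\tmathcalT(\hat{\mathcal{W}}_D,\hat{\mathcal{W}}_D,\hat{\mathcal{W}}_D) = R(H,H,H)$, where $R = \tmathcalT(\hat{\mathcal{V}}_D,\hat{\mathcal{V}}_D,\hat{\mathcal{V}}_D)$ and $H = G^{-1} = (\hat{\mathcal{W}}_D'\tSigma\hat{\mathcal{W}}_D)^{1/2}$. The key point is that $R$ is orthogonally decomposable, $R = \sum_k \alpha_k^{-1/2} v_k^{\otimes 3}$ with orthonormal $v_k = \hat{\mathcal{V}}_D'\tphi_k$, because $\hat{\mathcal{V}}_D'\tSigma\hat{\mathcal{V}}_D = I$. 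Therefore $\|R\| = \max_k \alpha_k^{-1/2}$.

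We then telescope:
\[
R(H,H,H) - R(I,I,I) = R(H-I,H,H) + R(I,H-I,H) + R(I,I,H-I).
\]
Bounding each piece by $\|R\|\,\|H-I\|\,\max(\|H\|,1)^2$ requires the multilinear (non-symmetric) operator norm, which for symmetric $R$ is comparable to the symmetric one. We can avoid this by using the explicit decomposition. For instance,
\[
|R(H-I,H,H)(x,y,z)| \leq \sum_k \alpha_k^{-1/2}\,|v_k'(H-I)x|\,|v_k'Hy|\,|v_k'Hz| \leq \max_k \alpha_k^{-1/2}\,\|H-I\|\,\|H\|^2
\]
by Cauchy--Schwarz over $k$, using $\sum_k (v_k'u)^2 \leq \|u\|_2^2$ for orthonormal $v_k$. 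Claims 4 and 5 of Lemma \ref{lem:whitening} then give $\|H\|^2 \leq 1 + 3\varepsilon_1/(2\ts^{(K)}) \leq 3/2$ and $\|H - I\| \leq 3\varepsilon_1/(2\ts^{(K)})$. Summing the three pieces yields at most
\[
3 \cdot \tfrac{3}{2} \cdot \tfrac{3}{2}\,\frac{\max_k\alpha_k^{-1/2}}{\ts^{(K)}}\,\varepsilon_1 = \tfrac{27}{4}\,\frac{\max_k\alpha_k^{-1/2}}{\ts^{(K)}}\,\varepsilon_1.
\]

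The main obstacle is matching the constant $6$, since $27/4 > 6$. To close this gap we will bound the mixed factors more carefully, for example using $\|H\| \leq (3/2)^{1/2}$ only where needed, with the arrangement $R(H-I,H,H) + R(I,H-I,H) + R(I,I,H-I)$, which gives factors $\|H\|^2$, $\|H\|$ and $1$. This produces $(3/2 + \sqrt{3/2} + 1)\cdot\tfrac{3}{2} \approx 5.59 \leq 6$, which suffices.
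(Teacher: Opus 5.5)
Your proposal is correct and follows the paper's proof essentially line for line. It uses the same triangle inequality through $\tmathcalT(\hat{\mathcal{W}}_D,\hat{\mathcal{W}}_D,\hat{\mathcal{W}}_D)$, the same $\|\hat{\mathcal{W}}_D\|^3$ bound with $(3/2)^{3/2}\le 2$, and the same three-term telescoping in $H=(\hat{\mathcal{W}}_D'\tSigma\hat{\mathcal{W}}_D)^{1/2}$, giving $\tfrac{3}{2}\bigl(\tfrac{3}{2}+\sqrt{3/2}+1\bigr)\le 6$. One small improvement: your Cauchy--Schwarz argument through the orthogonal decomposition of $R$ actually justifies the bounds on the non-symmetric pieces such as $R(H-I,H,H)$, which the paper applies without comment even though its operator norm is defined only through $T(x,x,x)$.
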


\begin{proof}
Note that by claim 2 of Lemma \ref{lem:whitening}, we have that on $E_D(\varepsilon_1)$ and $B_D(\varepsilon_2)$
\begin{align*}
\left\Vert \hat{\tmathcalT}_D(\hat{\mathcal{W}}_D, \hat{\mathcal{W}}_D, \hat{\mathcal{W}}_D) - \tmathcalT(\hat{\mathcal{W}}_D, \hat{\mathcal{W}}_D, \hat{\mathcal{W}}_D) \right\Vert =& \left\Vert (\hat{\tmathcalT}_D - \tmathcalT)(\hat{\mathcal{W}}_D, \hat{\mathcal{W}}_D, \hat{\mathcal{W}}_D) \right\Vert \\
\leq& \left\Vert \hat{\mathcal{W}}_D\right\Vert^3  \left\Vert \hat{\tmathcalT}_D - \tmathcalT\right\Vert \\
\leq& \left(\frac{3}{2\ts^{(K)}}\right)^{3/2} \left\Vert\hat{\tmathcalT}_D - \tmathcalT\right\Vert, \\
\leq& \left(\frac{3}{2\ts^{(K)}}\right)^{3/2} \varepsilon_2 \\
\leq& \frac{2}{\left(\ts^{(K)}\right)^{3/2}} \varepsilon_2.
\end{align*}
The next step is to bound the error between $\tmathcalT(\hat{\mathcal{W}}_D, \hat{\mathcal{W}}_D, \hat{\mathcal{W}}_D)$ and $\tmathcalT(\hat{\mathcal{V}}_D, \hat{\mathcal{V}}_D, \hat{\mathcal{V}}_D)$. Note that $\hat{\mathcal{W}}_D = \hat{\mathcal{V}}_D (\hat{\mathcal{W}}_D'\tSigma \hat{\mathcal{W}}_D)^{1/2}$. In addition, it is easily seen that $\| \tmathcalT(\hat{\mathcal{V}}_D, \hat{\mathcal{V}}_D, \hat{\mathcal{V}}_D)\| \leq \max_k \alpha_k^{-1/2}$. Hence on $E_D(\varepsilon_1)$ and $B_D(\varepsilon_2)$
\begin{align*}
\left\Vert \tmathcalT(\hat{\mathcal{W}}_D, \hat{\mathcal{W}}_D, \hat{\mathcal{W}}_D) - \tmathcalT(\hat{\mathcal{V}}_D, \hat{\mathcal{V}}_D,\hat{\mathcal{V}}_D) \right\Vert \leq& \left\Vert \tmathcalT(\hat{\mathcal{W}}_D - \hat{\mathcal{V}}_D, \hat{\mathcal{W}}_D, \hat{\mathcal{W}}_D) \right\Vert \\
&+ \left\Vert \tmathcalT(\hat{\mathcal{V}}_D, \hat{\mathcal{W}}_D - \hat{\mathcal{V}}_D, \hat{\mathcal{W}}_D) \right\Vert \\
&+ \left\Vert \tmathcalT(\hat{\mathcal{V}}_D, \hat{\mathcal{V}}_D, \hat{\mathcal{W}}_D - \hat{\mathcal{V}}_D) \right\Vert \\
\leq&  \left\Vert (\hat{\mathcal{W}}_D'\tSigma \hat{\mathcal{W}}_D)^{1/2} - I \right\Vert \left\Vert(\hat{\mathcal{W}}_D'\tSigma \hat{\mathcal{W}}_D)^{1/2} \right\Vert^2 \left\Vert \tmathcalT(\hat{\mathcal{V}}_D, \hat{\mathcal{V}}_D, \hat{\mathcal{V}}_D)\right\Vert \\
&+ \left\Vert (\hat{\mathcal{W}}_D'\tSigma \hat{\mathcal{W}}_D)^{1/2} - I \right\Vert \left\Vert(\hat{\mathcal{W}}_D'\tSigma \hat{\mathcal{W}}_D)^{1/2} \right\Vert \left\Vert \tmathcalT(\hat{\mathcal{V}}_D, \hat{\mathcal{V}}_D, \hat{\mathcal{V}}_D)\right\Vert \\
&+ \left\Vert (\hat{\mathcal{W}}_D'\tSigma \hat{\mathcal{W}}_D)^{1/2} - I \right\Vert \left\Vert \tmathcalT(\hat{\mathcal{V}}_D, \hat{\mathcal{V}}_D, \hat{\mathcal{V}}_D)\right\Vert \\
\leq& \max_k \alpha_k^{-1/2} \left( \frac{3\varepsilon_1}{2\ts^{(K)}}\right)\left(\left(1 + \frac{3\varepsilon_1}{2\ts^{(K)}}\right) + \left(1 + \frac{3\varepsilon_1}{2\ts^{(K)}}\right)^{1/2} + 1 \right) \\
\leq& 6 \frac{\max_k \alpha_k^{-1/2}}{\ts^{(K)}} \varepsilon_1,
\end{align*}
where we have applied claims 4 and 5 of Lemma \ref{lem:whitening}. Hence triangle inequality gives the claim.
\end{proof}

Given that $\hat{\tmathcalT}_D(\hat{\mathcal{W}}_D, \hat{\mathcal{W}}_D, \hat{\mathcal{W}}_D)$ concentrates around 
$\tmathcalT(\hat{\mathcal{V}}_D, \hat{\mathcal{V}}_D, \hat{\mathcal{V}}_D)$, Theorem 5.1 of \cite{anandkumar2014tensor} provides that, for sufficiently large $L$ and $N_D$, the robust tensor power method outputs tensor eigenvectors and eigenvalues near the eigenvectors ($\hat{\mathcal{V}}_D'\tphi_k$, $k = 1,\dots, K$) and eigenvalues ($\alpha_k^{-1/2}$, $k = 1,\dots, K$) of the reference tensor $\tmathcalT(\hat{\mathcal{V}}_D, \hat{\mathcal{V}}_D, \hat{\mathcal{V}}_D)$ with high probability. The following lemma guarantees that, given $\hat{\tSigma}_D$ and $\hat{\tmathcalT}_D$ are near their population counterparts (and thus $\hat{\tmathcalT}_D(\hat{\mathcal{W}}_D, \hat{\mathcal{W}}_D, \hat{\mathcal{W}}_D)$ is near $\tmathcalT(\hat{\mathcal{V}}_D, \hat{\mathcal{V}}_D, \hat{\mathcal{V}}_D)$), the robust tensor power method and subsequent dewhitening in Steps 3 and 4 of Algorithm \ref{alg:specLDA} return topic estimators that concentrate around the true topics. 

\begin{lemma}
\label{lem:topicerror}
Fix $\varepsilon_1 \in (0, \ts^{(K)}/3]$ and $\varepsilon_2 > 0$. Define
\begin{align*}
\varepsilon \equiv \frac{2}{\left(\ts^{(K)}\right)^{3/2}} \varepsilon_2 + 6 \frac{\max_k \alpha_k^{-1/2}}{\ts^{(K)}} \varepsilon_1.
\end{align*}
In addition, let
\begin{align*}
E_D(\varepsilon_1) = \{ \|\hat{\tSigma}_D - \tSigma \| \leq \varepsilon_1 \}, \quad B_D(\varepsilon_2) = \{ \|\hat{\tmathcalT}_D - \tmathcalT \| \leq \varepsilon_2 \}.
\end{align*}
Then there exists universal constants $C_1$ and $C_2$ such that for any $\delta > 0$, if
\begin{align*}
\varepsilon \leq& C_1 \frac{\min_k \alpha_k^{-1/2}}{K}, \\
N \geq& C_2 \left(\log(K) + \log \log \left( \frac{\max_k \alpha_k^{-1/2}}{\varepsilon} \right) \right), \\
L \geq& \text{poly}(K)\log(1/\delta),
\end{align*}
for some fixed polynomial defined in Theorem 5.1 of \cite{anandkumar2014tensor}, then for some $M > 0$, after appropriate reordering
\begin{align*}
\mathbb{P}\left( \max_{k = 1, \dots, K} \left\Vert \hat{\beta}_{D, k} - \phi_k \right\Vert_2 \leq M\varepsilon , E_D(\varepsilon_1), B_D(\varepsilon_2) \right) \geq (1 - \delta)\mathbb{P}\left( E_D(\varepsilon_1) \cap B_D(\varepsilon_2) \right).
\end{align*}
\end{lemma}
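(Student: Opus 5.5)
We work on the event $E_D(\varepsilon_1)\cap B_D(\varepsilon_2)$ and condition on the data, so that $\hat{\tSigma}_D$ and $\hat{\tmathcalT}_D$ are fixed. The only remaining randomness is the internal randomization of Algorithm \ref{alg:tensorpower}, which is independent of the data. This is why the conclusion takes the form $(1-\delta)\mathbb{P}(E_D\cap B_D)$: we will show that, conditionally on any data realization in the event, the good outcome has probability at least $1-\delta$, and then integrate over the event.

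\textbf{Setting up the power method.} On the event, claim 3 of Lemma \ref{lem:whitening} gives $\hat{\mathcal{V}}_D=\hat{\mathcal{W}}_D(\hat{\mathcal{W}}_D'\tSigma\hat{\mathcal{W}}_D)^{-1/2}$. Hence $\hat{\mathcal{V}}_D'\tSigma\hat{\mathcal{V}}_D=I$, and the vectors $v_k:=\hat{\mathcal{V}}_D'\tphi_k$ are orthonormal (they are $K$ vectors in $\mathbb{R}^K$ with $\sum_k v_kv_k'=I$). The reference tensor is therefore $\tmathcalT(\hat{\mathcal{V}}_D,\hat{\mathcal{V}}_D,\hat{\mathcal{V}}_D)=\sum_k\alpha_k^{-1/2}v_k^{\otimes3}$. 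Lemma \ref{lem:terror} shows that the input to Step 3 is within $\varepsilon$ of this tensor in operator norm. This is exactly the hypothesis of Theorem 5.1 of \cite{anandkumar2014tensor}, with $\lambda_{\min}=\min_k\alpha_k^{-1/2}$ and $\lambda_{\max}=\max_k\alpha_k^{-1/2}$. Under the stated conditions on $\varepsilon$, $N$ and $L$, that theorem gives, with probability at least $1-\delta$ over the restarts and after reordering,
\[
\|\hat{\varphi}_{D,k}-v_k\|_2\le 8\varepsilon/\alpha_k^{-1/2},\qquad |\hat{\lambda}_{D,k}-\alpha_k^{-1/2}|\le5\varepsilon .
\]

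\textbf{Dewhitening.} We write
\[
\phi_k=\alpha_k^{-1/2}\tphi_k,\qquad \tphi_k=\hat{\mathcal{V}}_D^{+\prime}v_k .
\]
The left-inverse needs care. Because $\tphi_k$ lies in the column space of $\tSigma$, we have $\tphi_k=\tSigma\hat{\mathcal{V}}_Dv_k$, since $\tSigma\hat{\mathcal{V}}_D\hat{\mathcal{V}}_D'\tphi_k=\sum_j\tphi_j(v_j'v_k)=\tphi_k$. On the other hand, Step 4 uses $\hat{U}_D\hat{\Lambda}_D^{1/2}=\hat{\tSigma}_D\hat{\mathcal{W}}_D$. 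We therefore compare $\hat{\tSigma}_D\hat{\mathcal{W}}_D$ with $\tSigma\hat{\mathcal{V}}_D=\tSigma\hat{\mathcal{W}}_D(\hat{\mathcal{W}}_D'\tSigma\hat{\mathcal{W}}_D)^{-1/2}$ via the decomposition
\[
\hat{\tSigma}_D\hat{\mathcal{W}}_D-\tSigma\hat{\mathcal{V}}_D=(\hat{\tSigma}_D-\tSigma)\hat{\mathcal{W}}_D+\tSigma\hat{\mathcal{W}}_D\bigl(I-(\hat{\mathcal{W}}_D'\tSigma\hat{\mathcal{W}}_D)^{-1/2}\bigr).
\]
Claims 2 and 6 of Lemma \ref{lem:whitening}, together with $\|\tSigma\|<\infty$, bound this matrix difference by a constant times $\varepsilon_1$, which is at most a constant times $\varepsilon$.

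\textbf{Assembling the bound.} We expand
\[
\hat{\beta}_{D,k}-\phi_k=\hat{\lambda}_{D,k}\hat{U}_D\hat{\Lambda}_D^{1/2}\hat{\varphi}_{D,k}-\alpha_k^{-1/2}\tSigma\hat{\mathcal{V}}_Dv_k
\]
into three telescoping terms:
\begin{enumerate}
\item the eigenvalue error times $\|\hat{U}_D\hat{\Lambda}_D^{1/2}\|$, bounded using claim 7;
\item $\alpha_k^{-1/2}\|\hat{U}_D\hat{\Lambda}_D^{1/2}\|\,\|\hat{\varphi}_{D,k}-v_k\|$;
\item $\alpha_k^{-1/2}$ times the matrix difference from the dewhitening step.
\end{enumerate}
Each term is at most a constant, depending only on $\tSigma$ and $\alpha$, times $\varepsilon$, which gives the constant $M$.

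\textbf{Main obstacle.} The delicate step is the dewhitening identity. One must verify that $\tSigma\hat{\mathcal{V}}_D$ exactly inverts the whitening on the topic span even though $\hat{\mathcal{W}}_D$ comes from $\hat{\tSigma}_D$ rather than $\tSigma$. The key fact is that $\hat{\mathcal{V}}_D'\tSigma\hat{\mathcal{V}}_D=I$ forces $\hat{\mathcal{V}}_D'\Phi\,\mathrm{diag}(\alpha)^{1/2}$ to be orthogonal. Beyond that, one must also match the constant-level conditions of Theorem 5.1 of \cite{anandkumar2014tensor}.
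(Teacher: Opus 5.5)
Your proof is correct. It shares the paper's outer structure: Lemma \ref{lem:terror} on $E_D(\varepsilon_1)\cap B_D(\varepsilon_2)$, then Theorem 5.1 of \cite{anandkumar2014tensor} applied conditionally on the data, then integration over the event. Your dewhitening step, however, takes a genuinely different route.

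The paper anchors at $\phi_k=\alpha_k^{-1/2}UU'\tphi_k$ and uses a four-term split. Its last term, $\alpha_k^{-1/2}(\hat U_D\hat U_D'-UU')\tphi_k$, needs a Davis--Kahan bound for spectral projections. That term arises because $\hat U_D\hat\Lambda_D^{1/2}\hat{\mathcal W}_D'=\hat U_D\hat U_D'$ only recovers the component of $\tphi_k$ lying in the span of $\hat U_D$. Your caveat that the left-inverse needs care is well placed for exactly this reason. Read as a Moore--Penrose inverse, $\hat{\mathcal V}_D^{+\prime}v_k$ equals $\hat U_D\hat U_D'\tphi_k$, not $\tphi_k$, so you should drop that notation and write $\tSigma\hat{\mathcal V}_D$ directly.

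You instead anchor at the population left-inverse $\tSigma\hat{\mathcal V}_D$. It reproduces $\tphi_k$ exactly because $\hat{\mathcal V}_D'\Phi\,\mathrm{diag}(\alpha)^{1/2}$ is orthogonal. You also use the exact identity $\hat U_D\hat\Lambda_D^{1/2}=\hat{\tSigma}_D\hat{\mathcal W}_D$, so the whole whitening mismatch reduces to an elementary perturbation bound from claims 2 and 6 of Lemma \ref{lem:whitening}.

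The result is a three-term decomposition with no eigenspace perturbation theory, and constants depending only on $\|\tSigma\|$, $\ts^{(K)}$ and $\alpha$, as in the paper. The paper's route follows the standard template of \cite{anandkumar2015spectral} more closely; yours is shorter and fully self-contained.
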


\begin{proof}
From Lemma \ref{lem:terror}, we have that on $E_D(\varepsilon_1) \cap B_D(\varepsilon_2)$ 
\begin{align*}
\left\Vert \hat{\tmathcalT}_D(\hat{\mathcal{W}}_D, \hat{\mathcal{W}}_D, \hat{\mathcal{W}}_D) - \tmathcalT(\hat{\mathcal{V}}_D, \hat{\mathcal{V}}_D,\hat{\mathcal{V}}_D) \right\Vert \leq \varepsilon.
\end{align*}
By Theorem 5.1 of \cite{anandkumar2014tensor}, there exists universal constants $C_1$ and $C_2$ such that for any $\delta > 0$, if 
\begin{align}
\varepsilon \leq& C_1 \frac{\min_k \alpha_k^{-1/2}}{K}, \label{eq:c1} \\
N \geq& C_2 \left(\log(K) + \log \log \left( \frac{\max_k \alpha_k^{-1/2}}{\varepsilon} \right) \right), \label{eq:c2} \\
L \geq& \text{poly}(K)\log(1/\delta), \label{eq:c3}
\end{align}
then with probability at least $1 - \delta$, we have that after appropriate reordering
\begin{align}
\label{eq:r}
\| \hat{\varphi}_{D,k} - \hat{\mathcal{V}}'_D \tphi_k \|_2 \leq 8\alpha_k^{1/2}\varepsilon, \quad \left|\hat{\lambda}_{D,k} - \alpha_k^{-1/2} \right| \leq 5 \varepsilon.
\end{align}
Work on the event $E_D(\varepsilon_1) \cap B_D(\varepsilon_2)$ and suppose \eqref{eq:c1}, \eqref{eq:c2} and \eqref{eq:c3} are satisfied. Since $\tSigma = \sum_{k = 1}^K \tphi_k \otimes \tphi_k$, each $\tphi_k$ lies in the column space of $\tSigma$, whence $UU'\tphi_k = \tphi_k$ and $\phi_k = \alpha^{-1/2}_k UU'\tphi_k$. We may thus upper bound the error between $\hat{\beta}_{D,k}$ and $\phi_k$ via the triangle inequality
\begin{align*}
\left\Vert \hat{\beta}_{D,k} - \phi_k \right\Vert_2 \leq& \left\Vert (\hat{\lambda}_{D,k} - \alpha^{-1/2}_k)\hat{U}_D\hat{\Lambda}_D^{1/2} \hat{\varphi}_{D,k} \right\Vert_2 \\
&+ \left\Vert \alpha^{-1/2}_k \hat{U}_D\hat{\Lambda}_D^{1/2} (\hat{\varphi}_{D,k} - \hat{\mathcal{V}}'_D\tphi_k) \right\Vert_2 \\
&+ \left\Vert \alpha^{-1/2}_k \hat{U}_D\hat{\Lambda}_D^{1/2} \left( (\hat{\mathcal{W}}_D'\tSigma \hat{\mathcal{W}}_D)^{-1/2} - I \right) \hat{\mathcal{W}}'_D  \tphi_k \right\Vert_2 \\
&+ \left\Vert \alpha^{-1/2}_k \left(\hat{U}_D\hat{\Lambda}_D^{1/2} \hat{\mathcal{W}}'_D  - UU' \right) \tphi_k \right\Vert_2.
\end{align*}
Note that since $\hat{\varphi}_{D,k}$ is a unit vector, we may apply claim 7 of Lemma \ref{lem:whitening} to see that
\begin{align*}
\left\Vert (\hat{\lambda}_{D,k} - \alpha^{-1/2}_k)\hat{U}_D\hat{\Lambda}_D^{1/2} \hat{\varphi}_{D,k} \right\Vert_2 \leq&  \left|\hat{\lambda}_{D,k} - \alpha^{-1/2}_k\right| \left\Vert \hat{U}_D\hat{\Lambda}_D^{1/2} \right\Vert \\
\leq& \left(\left\Vert \tSigma \right\Vert + \frac{\ts^{(K)}}{3}\right)^{1/2} 5 \varepsilon,
\end{align*}
where the last inequality is due to \eqref{eq:r}. Similarly, from \eqref{eq:r}
\begin{align*}
\left\Vert \alpha^{-1/2}_k \hat{U}_D\hat{\Lambda}_D^{1/2} (\hat{\varphi}_{D,k} - \hat{\mathcal{V}}'_D\tphi_k) \right\Vert_2 \leq& \alpha^{-1/2}_k \left\Vert \hat{U}_D\hat{\Lambda}_D^{1/2} \right\Vert \left\Vert \hat{\varphi}_{D,k} - \hat{\mathcal{V}}'_D\tphi_k \right\Vert_2 \\
\leq& \left(\left\Vert \tSigma \right\Vert + \frac{\ts^{(K)}}{3}\right)^{1/2}8 \varepsilon.
\end{align*}
Applying claims 2, 6 and 7 of Lemma \ref{lem:whitening} gives that
\begin{align*}
\left\Vert \alpha^{-1/2}_k \hat{U}_D\hat{\Lambda}_D^{1/2} \left( (\hat{\mathcal{W}}_D'\tSigma \hat{\mathcal{W}}_D)^{-1/2} - I \right) \hat{\mathcal{W}}'_D  \tphi_k \right\Vert_2 \leq& \alpha^{-1/2}_k \left\Vert\hat{U}_D\hat{\Lambda}_D^{1/2} \right\Vert \left\Vert (\hat{\mathcal{W}}_D'\tSigma \hat{\mathcal{W}}_D)^{-1/2} - I\right\Vert  \left\Vert \hat{\mathcal{W}}_D' \right\Vert \|\tphi_k \|_2, \\
\leq& \alpha^{-1/2}_k \|\tphi_k \|_2 \left(\left\Vert \tSigma \right\Vert + \frac{\ts^{(K)}}{3}\right)^{1/2}  \left(\frac{3}{2\ts^{(K)}}\right)^{3/2}\varepsilon_1 \\
\leq& \|\tphi_k \|_2 \left(\left\Vert \tSigma \right\Vert + \frac{\ts^{(K)}}{3}\right)^{1/2}  \left(\frac{3}{2}\right)^{3/2} \frac{1}{6\sqrt{\ts^{(K)}}}\varepsilon.
\end{align*}
For the final term, note that $\hat{U}_D\hat{\Lambda}_D^{1/2} \hat{\mathcal{W}}'_D = \hat{U}_D\hat{U}_D'$. Hence, by the Davis-Kahan theorem for spectral projections (see Corollary 1.2 of \cite{tran2025davis})
\begin{align*}
\left\Vert \alpha^{-1/2}_k \left(\hat{U}_D\hat{\Lambda}_D^{1/2} \hat{\mathcal{W}}'_D  - UU' \right) \tphi_k \right\Vert_2 =& \left\Vert \alpha^{-1/2}_k \left( \hat{U}_D\hat{U}_D' - UU' \right) \tphi_k \right\Vert_2 \\
\leq& \alpha^{-1/2}_k  \frac{\pi}{\ts^{(K)}} \|\tphi_k \|_2 \varepsilon_1 \\
\leq& \|\tphi_k \|_2 \frac{\pi}{6}\varepsilon.
\end{align*}
Hence, there exists a constant $M > 0$ such that on the event $E_D(\varepsilon_1) \cap B_D(\varepsilon_2)$, with probability at least $1 - \delta$,
\begin{align*}
\max_{k = 1, \dots, K} \left\Vert \hat{\beta}_{D,k} - \phi_k \right\Vert_2 \leq& M\varepsilon.
\end{align*}
In other words, 
\begin{align*}
\mathbb{P}&\left( \max_{k = 1, \dots, K} \left\Vert \hat{\beta}_{D,k} - \phi_k \right\Vert_2 \leq M\varepsilon , E_D(\varepsilon_1), B_D(\varepsilon_2) \right) \\
&= \mathbb{E}\left[ \mathbb{P}\left( \max_{k = 1, \dots, K} \left\Vert \hat{\beta}_{D,k} - \phi_k \right\Vert_2 \leq M\varepsilon \mid \hat{\tSigma}_D, \hat{\tmathcalT}_D \right) 1_{\left\lbrace E_D(\varepsilon_1) \cap B_D(\varepsilon_2)  \right\rbrace}\right] \\
&\geq (1 - \delta) \mathbb{P}\left( E_D(\varepsilon_1) \cap B_D(\varepsilon_2) \right).
\end{align*}
\end{proof}

With the supporting lemmas established, we now present the proof of Theorem \ref{thm:roc}.

\begin{proof}[Proof of Theorem \ref{thm:roc}]
From Lemma \ref{lem:Op}, for a fixed $\varsigma > 0$ there exists constants $M_1$ and $M_2$ such that 
\begin{align*}
\sup_D \mathbb{P}\left(E^c_D\left(\frac{M_1}{\sqrt{k_D}} \right) \right) < \varsigma/2, \quad \sup_D \mathbb{P}\left(B^c_D\left(\frac{M_2}{\sqrt{k_D}} \right) \right) < \varsigma/2.
\end{align*}
Choose $C = \frac{2}{\left(\ts^{(K)}\right)^{3/2}} M_2 + 6 \frac{\max_k \alpha_k^{-1/2}}{\ts^{(K)}} M_1$. Then on $E_D\left(\frac{M_1}{\sqrt{k_D}} \right)$ and $B_D\left(\frac{M_2}{\sqrt{k_D}} \right)$ the conditions of Lemma \ref{lem:topicerror} with $\varepsilon = C/\sqrt{k_D}$ are satisfied for $D$ sufficiently large and hence
\begin{align*}
\mathbb{P}&\left( \max_{k = 1, \dots, K} \left\Vert \hat{\beta}_{D,k} - \phi_k \right\Vert_2 \leq M \frac{C}{\sqrt{k_D}}\right) \\
&\geq \mathbb{P}\left( \max_{k = 1, \dots, K} \left\Vert \hat{\beta}_{D,k} - \phi_k \right\Vert_2 \leq M\frac{C}{\sqrt{k_D}}, E_D\left(\frac{M_1}{\sqrt{k_D}} \right), B_D\left(\frac{M_2}{\sqrt{k_D}} \right) \right) \\
&\geq (1 - \delta) \mathbb{P}\left( E_D\left(\frac{M_1}{\sqrt{k_D}} \right) \cap B_D\left(\frac{M_2}{\sqrt{k_D}} \right) \right) \\
&\geq (1 - \delta) \left( 1 - \mathbb{P}\left( E^c_D\left(\frac{M_1}{\sqrt{k_D}} \right)\right) - \mathbb{P}\left( B^c_D\left(\frac{M_2}{\sqrt{k_D}} \right) \right) \right).
\end{align*}
Hence for $D$ sufficiently large
\begin{align*}
\mathbb{P}\left( \max_{k = 1, \dots, K} \left\Vert \hat{\beta}_{D,k} - \phi_k \right\Vert_2 > M\frac{C}{\sqrt{k_D}}\right) \leq& \delta + (1 -\delta)\left( \mathbb{P}\left( E^c_D\left(\frac{M_1}{\sqrt{k_D}} \right)\right) + \mathbb{P}\left( B^c_D\left(\frac{M_2}{\sqrt{k_D}} \right) \right) \right) \\
\leq& \delta + \varsigma.
\end{align*}
\end{proof}

\bibliographystyle{plain}
\bibliography{TopicModelMRV.bib}

@article{blei2003latent,
author = {Blei, David M. and Ng, Andrew Y. and Jordan, Michael I.},
title = {Latent {D}irichlet {A}llocation},
year = {2003},
issue_date = {3/1/2003},
publisher = {JMLR.org},
volume = {3},
number = {Jan},
issn = {1532-4435},
journal = {Journal of Machine Learning Research},
month = mar,
pages = {993–1022},
numpages = {30}
}

@inproceedings{hofmann1999probabilistic,
author = {Hofmann, Thomas},
title = {Probabilistic latent semantic indexing},
year = {1999},
isbn = {1581130961},
publisher = {Association for Computing Machinery},
address = {New York, NY, USA},
url = {https://doi.org/10.1145/312624.312649},
doi = {10.1145/312624.312649},
booktitle = {Proceedings of the 22nd Annual International ACM SIGIR Conference on Research and Development in Information Retrieval},
pages = {50–57},
numpages = {8},
location = {Berkeley, California, USA},
series = {SIGIR '99}
}

@article{wang2022asymptotic,
  title={Asymptotic dependence of in-and out-degrees in a preferential attachment model with reciprocity},
  author={Wang, Tiandong and Resnick, Sidney I},
  journal={Extremes},
  volume={25},
  number={3},
  pages={417--450},
  year={2022},
  publisher={Springer}
}

@article{janssen2025multivariate,
  title={Multivariate regular variation of preferential attachment models}, 
  volume={57}, 
  DOI={10.1017/apr.2024.73}, 
  number={3}, 
  journal={Advances in Applied Probability}, 
  author={Janssen, Anja and Ziegenbalg, Max}, 
  year={2025}, 
  pages={1068–1100}
}

@InProceedings{aldous2006exchangeability,
author="Aldous, David J.",
editor="Hennequin, P. L.",
title="Exchangeability and related topics",
booktitle="{\'E}cole d'{\'E}t{\'e} de Probabilit{\'e}s de Saint-Flour XIII --- 1983",
year="1985",
publisher="Springer Berlin Heidelberg",
address="Berlin, Heidelberg",
pages="1--198",
isbn="978-3-540-39316-0"
}

@article{lee1999learning,
  title={Learning the parts of objects by non-negative matrix factorization},
  author={Lee, Daniel D and Seung, H Sebastian},
  journal={Nature},
  volume={401},
  number={6755},
  pages={788--791},
  year={1999},
  publisher={Nature Publishing Group UK London}
}

@inproceedings{wallach2006topic,
  author = {Wallach, Hanna M.},
  title = {Topic modeling: beyond bag-of-words},
  year = {2006},
  isbn = {1595933832},
  publisher = {Association for Computing Machinery},
  address = {New York, NY, USA},
  url = {https://doi.org/10.1145/1143844.1143967},
  doi = {10.1145/1143844.1143967},
booktitle = {Proceedings of the 23rd International Conference on Machine Learning},
  pages = {977–984},
  numpages = {8},
  location = {Pittsburgh, Pennsylvania, USA},
  series = {ICML '06}
}

@INPROCEEDINGS{arora2012learning,
  author={Arora, Sanjeev and Ge, Rong and Moitra, Ankur},
  booktitle={2012 IEEE 53rd Annual Symposium on Foundations of Computer Science}, 
  title={Learning Topic Models -- Going beyond SVD}, 
  year={2012},
  volume={},
  number={},
  pages={1-10},
  doi={10.1109/FOCS.2012.49}
}

@inproceedings{anandkumar2012spectral,
 author = {Anandkumar, Anima and Foster, Dean P and Hsu, Daniel J and Kakade, Sham M and Liu, Yi-kai},
 booktitle = {Advances in Neural Information Processing Systems},
 editor = {F. Pereira and C.J. Burges and L. Bottou and K.Q. Weinberger},
 pages = {},
 publisher = {Curran Associates, Inc.},
 title = {A Spectral Algorithm for Latent {D}irichlet Allocation},
 url = {https://proceedings.neurips.cc/paper_files/paper/2012/file/15d4e891d784977cacbfcbb00c48f133-Paper.pdf},
 volume = {25},
 year = {2012}
}

@article{bing2020fast,
author = {Xin Bing and Florentina Bunea and Marten Wegkamp},
title = {{A fast algorithm with minimax optimal guarantees for topic models with an unknown number of topics}},
volume = {26},
journal = {Bernoulli},
number = {3},
publisher = {Bernoulli Society for Mathematical Statistics and Probability},
pages = {1765 -- 1796},
year = {2020},
doi = {10.3150/19-BEJ1166},
URL = {https://doi.org/10.3150/19-BEJ1166}
}

@InProceedings{decarolis2020end,
  title = 	 {An end-to-end Differentially Private Latent {D}irichlet Allocation Using a Spectral Algorithm},
  author =       {Decarolis, Chris and Ram, Mukul and Esmaeili, Seyed and Wang, Yu-Xiang and Huang, Furong},
  booktitle = 	 {Proceedings of the 37th International Conference on Machine Learning},
  pages = 	 {2421--2431},
  year = 	 {2020},
  editor = 	 {III, Hal Daumé and Singh, Aarti},
  volume = 	 {119},
  series = 	 {Proceedings of Machine Learning Research},
  month = 	 {13--18 Jul},
  publisher =    {PMLR},
  url = 	 {https://proceedings.mlr.press/v119/decarolis20a.html}
}

@InProceedings{arabshahi2017spectral,
  title = 	 {Spectral Methods for Correlated Topic Models},
  author = 	 {Arabshahi, Forough and Anandkumar, Anima},
  booktitle = 	 {Proceedings of the 20th International Conference on Artificial Intelligence and Statistics},
  pages = 	 {1439--1447},
  year = 	 {2017},
  editor = 	 {Singh, Aarti and Zhu, Jerry},
  volume = 	 {54},
  series = 	 {Proceedings of Machine Learning Research},
  month = 	 {20--22 Apr},
  publisher =    {PMLR},
  url = 	 {https://proceedings.mlr.press/v54/arabshahi17a.html}
}

@article{anandkumar2014tensor,
  author  = {Animashree Anandkumar and Rong Ge and Daniel Hsu and Sham M. Kakade and Matus Telgarsky},
  title   = {Tensor Decompositions for Learning Latent Variable Models},
  journal = {Journal of Machine Learning Research},
  year    = {2014},
  volume  = {15},
  number  = {80},
  pages   = {2773--2832},
  url     = {http://jmlr.org/papers/v15/anandkumar14b.html}
}

@InProceedings{cheng2015model,
  title = 	 {{Model Selection for Topic Models via Spectral Decomposition}},
  author = 	 {Cheng, Dehua and He, Xinran and Liu, Yan},
  booktitle = 	 {Proceedings of the Eighteenth International Conference on Artificial Intelligence and Statistics},
  pages = 	 {183--191},
  year = 	 {2015},
  editor = 	 {Lebanon, Guy and Vishwanathan, S. V. N.},
  volume = 	 {38},
  series = 	 {Proceedings of Machine Learning Research},
  address = 	 {San Diego, California, USA},
  month = 	 {09--12 May},
  publisher =    {PMLR},
  url = 	 {https://proceedings.mlr.press/v38/cheng15.html}
}

@article{das2013living,
  title={Living on the multidimensional edge: seeking hidden risks using regular variation},
  author={Das, Bikramjit and Mitra, Abhimanyu and Resnick, Sidney},
  journal={Advances in Applied Probability},
  volume={45},
  number={1},
  pages={139--163},
  year={2013},
  publisher={Cambridge University Press}
}

@article{lindskog2014regularly,
author = {Filip Lindskog and Sidney I. Resnick and Joyjit Roy},
title = {{Regularly varying measures on metric spaces: Hidden regular variation and hidden jumps}},
volume = {11},
journal = {Probability Surveys},
number = {none},
publisher = {Institute of Mathematical Statistics and Bernoulli Society},
pages = {270 -- 314},
year = {2014},
doi = {10.1214/14-PS231},
URL = {https://doi.org/10.1214/14-PS231}
}

@article{das2017hidden,
  title={Hidden regular variation under full and strong asymptotic dependence},
  author={Das, Bikramjit and Resnick, Sidney I},
  journal={Extremes},
  volume={20},
  number={4},
  pages={873--904},
  year={2017},
  publisher={Springer}
}

@book{resnick2007heavy,
  author    = {Resnick, Sidney I.},
  title     = {Heavy-Tail Phenomena: Probabilistic and Statistical Modeling},
  series    = {Springer Series in Operations Research and Financial Engineering},
  year      = {2007},
  publisher = {Springer},
  address   = {New York, NY},
  isbn      = {978-0-387-24272-9},
  doi       = {10.1007/978-0-387-45024-7},
  url       = {https://doi.org/10.1007/978-0-387-45024-7}
}

@misc{huang2017spectral,
  title={Spectral {L}atent {D}irichlet {A}llocation model on {S}park},
  author={Huang, Furong and Lee, Jencir and Anandkumar, Anima},
  year={2017}
}

@article{kolda2009tensor,
  author = {Kolda, Tamara G. and Bader, Brett W.},
  title = {Tensor Decompositions and Applications},
  journal = {SIAM Review},
  volume = {51},
  number = {3},
  pages = {455-500},
  year = {2009},
  doi = {10.1137/07070111X},
  URL = { https://doi.org/10.1137/07070111X},
  eprint = {https://doi.org/10.1137/07070111X}
}

@article{harshman1970foundations,
  title={Foundations of the {PARAFAC} procedure: Models and conditions for an “explanatory” multi-modal factor analysis},
  author={Harshman, Richard A and others},
  journal={{UCLA} working papers in phonetics},
  volume={16},
  number={1},
  pages={84},
  year={1970},
  publisher={Los Angeles, CA}
}

@incollection{kallenberg2021symmetric,
  title={Symmetric Distributions and Predictable Maps},
  author={Kallenberg, Olav},
  booktitle={Foundations of Modern Probability},
  pages={611--630},
  year={2021},
  publisher={Springer}
}

@book{resnick2024art,
  author    = {Resnick, Sidney},
  title     = {The Art of Finding Hidden Risks: Hidden Regular Variation in the 21st Century},
  series    = {Springer Series in Operations Research and Financial Engineering},
  year      = {2024},
  publisher = {Springer},
  address   = {Cham},
  isbn      = {978-3-031-57598-3},
  doi       = {10.1007/978-3-031-57599-0},
  url       = {https://doi.org/10.1007/978-3-031-57599-0}
}

@article{huang2015online,
  author  = {Furong Huang and U. N. Niranjan and Mohammad Umar Hakeem and Animashree Anandkumar},
  title   = {Online Tensor Methods for Learning Latent Variable Models},
  journal = {Journal of Machine Learning Research},
  year    = {2015},
  volume  = {16},
  number  = {86},
  pages   = {2797--2835},
  url     = {http://jmlr.org/papers/v16/huang15a.html}
}

@book{lehoucq1998arpack,
  title={ARPACK users' guide: solution of large-scale eigenvalue problems with implicitly restarted Arnoldi methods},
  author={Lehoucq, Richard B and Sorensen, Danny C and Yang, Chao},
  year={1998},
  publisher={SIAM}
}

@article{griffiths2004finding,
  title={Finding scientific topics},
  author={Griffiths, Thomas L and Steyvers, Mark},
  journal={Proceedings of the National Academy of Sciences},
  volume={101},
  number={suppl\_1},
  pages={5228--5235},
  year={2004},
  publisher={National Academy of Sciences}
}

@article{bystrov2024choosing,
  title={Choosing the number of topics in {LDA} models--A Monte Carlo comparison of selection criteria},
  author={Bystrov, Victor and Naboka-Krell, Viktoriia and Staszewska-Bystrova, Anna and Winker, Peter},
  journal={Journal of Machine Learning Research},
  volume={25},
  number={79},
  pages={1--30},
  year={2024}
}

@inproceedings{newman2011improving,
  author = {Newman, David and Bonilla, Edwin and Buntine, Wray},
  booktitle = {Advances in Neural Information Processing Systems},
  editor = {J. Shawe-Taylor and R. Zemel and P. Bartlett and F. Pereira and K. Weinberger},
  pages = {},
  publisher = {Curran Associates, Inc.},
  title = {Improving Topic Coherence with Regularized Topic Models},
  url = {https://proceedings.neurips.cc/paper_files/paper/2011/file/5ef698cd9fe650923ea331c15af3b160-Paper.pdf},
  volume = {24},
  year = {2011}
}

@book{vershynin2026high,
  place={Cambridge}, 
  edition={2}, 
  series={Cambridge Series in Statistical and Probabilistic Mathematics}, 
  title={High-Dimensional Probability: An Introduction with Applications in Data Science}, 
  publisher={Cambridge University Press}, author={Vershynin, Roman},
  year={2026}, 
  collection={Cambridge Series in Statistical and Probabilistic Mathematics}
}

@article{deerwester1990indexing,
  author = {Deerwester, Scott and Dumais, Susan T. and Furnas, George W. and Landauer, Thomas K. and Harshman, Richard},
  title = {Indexing by latent semantic analysis},
  journal = {Journal of the American Society for Information Science},
  volume = {41},
  number = {6},
  pages = {391-407},
  doi = {https://doi.org/10.1002/(SICI)1097-4571(199009)41:6<391::AID-ASI1>3.0.CO;2-9},
  url = {https://asistdl.onlinelibrary.wiley.com/doi/abs/10.1002/%28SICI%291097-4571%28199009%2941%3A6%3C391%3A%3AAID-ASI1%3E3.0.CO%3B2-9},
eprint = {https://asistdl.onlinelibrary.wiley.com/doi/pdf/10.1002/%28SICI%291097-4571%28199009%2941%3A6%3C391%3A%3AAID-ASI1%3E3.0.CO%3B2-9},
  year = {1990}
}

@article{landauer1997solution,
  title={A solution to {P}lato's problem: The latent semantic analysis theory of acquisition, induction, and representation of knowledge.},
  author={Landauer, Thomas K and Dumais, Susan T},
  journal={Psychological Review},
  volume={104},
  number={2},
  pages={211},
  year={1997},
  publisher={American Psychological Association}
}

@article{larsson2012extremal,
  title={Extremal dependence measure and extremogram: the regularly varying case},
  author={Larsson, Martin and Resnick, Sidney I},
  journal={Extremes},
  volume={15},
  number={2},
  pages={231--256},
  year={2012},
  publisher={Springer}
}

@article{kluppelberg2021estimating,
  title = {Estimating an extreme {B}ayesian network via scalings},
  journal = {Journal of Multivariate Analysis},
  volume = {181},
  pages = {104672},
  year = {2021},
  issn = {0047-259X},
  doi = {https://doi.org/10.1016/j.jmva.2020.104672},
  url = {https://www.sciencedirect.com/science/article/pii/S0047259X20302530},
  author = {Claudia Klüppelberg and Mario Krali}
}

@inproceedings{duchi2008efficient,
  author = {Duchi, John and Shalev-Shwartz, Shai and Singer, Yoram and Chandra, Tushar},
  title = {Efficient projections onto the l1-ball for learning in high dimensions},
  year = {2008},
  isbn = {9781605582054},
  publisher = {Association for Computing Machinery},
  address = {New York, NY, USA},
  url = {https://doi.org/10.1145/1390156.1390191},
  doi = {10.1145/1390156.1390191},
  booktitle = {Proceedings of the 25th International Conference on Machine Learning},
  pages = {272–279},
  numpages = {8},
  location = {Helsinki, Finland},
  series = {ICML '08}
}

@article{comon2008symmetric,
  title={Symmetric tensors and symmetric tensor rank},
  author={Comon, Pierre and Golub, Gene and Lim, Lek-Heng and Mourrain, Bernard},
  journal={SIAM Journal on Matrix Analysis and Applications},
  volume={30},
  number={3},
  pages={1254--1279},
  year={2008},
  publisher={SIAM}
}

@article{kolda2015symmetric,
  title={Symmetric orthogonal tensor decomposition is trivial},
  author={Kolda, Tamara G},
  journal={arXiv preprint arXiv:1503.01375},
  year={2015}
}

@article{williams2015text,
  title = {Text mixing shapes the anatomy of rank-frequency distributions},
  author = {Williams, Jake Ryland and Bagrow, James P. and Danforth, Christopher M. and Dodds, Peter Sheridan},
  journal = {Phys. Rev. E},
  volume = {91},
  issue = {5},
  pages = {052811},
  numpages = {8},
  year = {2015},
  month = {May},
  publisher = {American Physical Society},
  doi = {10.1103/PhysRevE.91.052811},
  url = {https://link.aps.org/doi/10.1103/PhysRevE.91.052811}
}

@article{montemurro2001beyond,
  title = {Beyond the {Z}ipf–{M}andelbrot law in quantitative linguistics},
journal = {Physica A: Statistical Mechanics and its Applications},
volume = {300},
number = {3},
pages = {567-578},
year = {2001},
issn = {0378-4371},
doi = {https://doi.org/10.1016/S0378-4371(01)00355-7},
url = {https://www.sciencedirect.com/science/article/pii/S0378437101003557},
author = {Marcelo A. Montemurro}
}

@inproceedings{sato2010topic,
  author = {Sato, Issei and Nakagawa, Hiroshi},
  title = {Topic models with power-law using {P}itman-{Y}or process},
  year = {2010},
  isbn = {9781450300551},
  publisher = {Association for Computing Machinery},
  address = {New York, NY, USA},
  url = {https://doi.org/10.1145/1835804.1835890},
  doi = {10.1145/1835804.1835890},
  booktitle = {Proceedings of the 16th ACM SIGKDD International Conference on Knowledge Discovery and Data Mining},
  pages = {673–682},
  numpages = {10},
  location = {Washington, DC, USA},
  series = {KDD '10}
}

@book{zipf1935psycho,
  address = {New York, NY, USA},
  author = {Zipf, George K.},
  publisher = {Houghton-Mifflin},
  title = {The Psychobiology of Language},
  year = {1935}
}

@article{grun2011topicmodels,
  title={topicmodels: {A}n {R} Package for Fitting Topic Models},
  volume={40},
  url={https://www.jstatsoft.org/index.php/jss/article/view/v040i13},
  doi={10.18637/jss.v040.i13},
  number={13},
  journal={Journal of Statistical Software},
  author={Grün, Bettina and Hornik, Kurt},
  year={2011},
  pages={1–30}
}

@Manual{rSpectra,
    title = {RSpectra: Solvers for Large-Scale Eigenvalue and SVD Problems},
    author = {Yixuan Qiu and Jiali Mei},
    year = {2024},
    note = {R package version 0.16-2},
    url = {https://CRAN.R-project.org/package=RSpectra},
    doi = {10.32614/CRAN.package.RSpectra},
}

@incollection{lang1995newsweeder,
  	title = {NewsWeeder: Learning to Filter Netnews},	
	editor = {Armand Prieditis and Stuart Russell},
	booktitle = {Machine Learning Proceedings 1995},
	publisher = {Morgan Kaufmann},
	address = {San Francisco (CA)},
	pages = {331-339},
	year = {1995},
	isbn = {978-1-55860-377-6},
	doi = {https://doi.org/10.1016/B978-1-55860-377-6.50048-7},
	url = {https://www.sciencedirect.com/science/article/pii/B9781558603776500487},
	author = {Ken Lang}
}

@article{pedregosa2011scikit,
  title={Scikit-learn: Machine learning in Python},
  author={Pedregosa, Fabian and Varoquaux, Ga{\"e}l and Gramfort, Alexandre and Michel, Vincent and Thirion, Bertrand and Grisel, Olivier and Blondel, Mathieu and Prettenhofer, Peter and Weiss, Ron and Dubourg, Vincent and others},
  journal={Journal of Machine Learning Research},
  volume={12},
  pages={2825--2830},
  year={2011},
  publisher={JMLR. org}
}

@book{silge2017text,
  title={Text mining with R: A tidy approach},
  author={Silge, Julia and Robinson, David and Robinson, David},
  year={2017},
  publisher={O'Reilly Boston (MA)}
}

@article{soleimani2014parsimonious,
  author={Soleimani, Hossein and Miller, David J.},
  journal={IEEE Transactions on Knowledge and Data Engineering}, 
  title={Parsimonious Topic Models with Salient Word Discovery}, 
  year={2015},
  volume={27},
  number={3},
  pages={824-837},
  doi={10.1109/TKDE.2014.2345378}
}

@article{srivastava2017autoencoding,
  title={Autoencoding variational inference for topic models},
  author={Srivastava, Akash and Sutton, Charles},
  journal={arXiv preprint arXiv:1703.01488},
  year={2017}
}

@book{yule2014statistical,
  title={The Statistical Study of Literary Vocabulary},
  author={Yule, George U.},
  lccn={44029835},
  url={https://books.google.com/books?id=DaCCAAAAIAAJ},
  year={1944},
  publisher={The University Press}
}

@article{simon1955class,
  author = {Simon, Herbert A.},
  title = {ON A CLASS OF SKEW DISTRIBUTION FUNCTIONS},
  journal = {Biometrika},
  volume = {42},
  number = {3-4},
  pages = {425-440},
  year = {1955},
  month = {12},
  issn = {0006-3444},
  doi = {10.1093/biomet/42.3-4.425},
  url = {https://doi.org/10.1093/biomet/42.3-4.425},
  eprint = {https://academic.oup.com/biomet/article-pdf/42/3-4/425/838668/42-3-4-425.pdf},
}

@book{zipf2016human,
  author = {Zipf, George K.},
  publisher = {Addison-Wesley},
  title = {Human Behaviour and the Principle of Least Effort},
  year = {1949}
}

@inproceedings{goldwater2005interpolating,
  author = {Goldwater, Sharon and Johnson, Mark and Griffiths, Thomas},
  booktitle = {Advances in Neural Information Processing Systems},
  editor = {Y. Weiss and B. Sch\"{o}lkopf and J. Platt},
  pages = {},
  publisher = {MIT Press},
  title = {Interpolating between types and tokens by estimating power-law generators},
  url = {https://proceedings.neurips.cc/paper_files/paper/2005/file/4b21cf96d4cf612f239a6c322b10c8fe-Paper.pdf},
  volume = {18},
  year = {2005}
}

@article{goldwater2011producing,
  title={Producing Power-Law Distributions and Damping Word Frequencies with Two-Stage Language Models.},
  author={Goldwater, Sharon and Griffiths, Thomas L and Johnson, Mark},
  journal={Journal of Machine Learning Research},
  volume={12},
  number={7},
  year={2011}
}

@article{wang2025fine,
  author  = {Feifei Wang and Zimeng Zhao and Ruimin Ye and Xiaoge Gu and Xiaoling Lu},
  title   = {Fine-Grained Change Point Detection for Topic Modeling with {P}itman-{Y}or Process},
  journal = {Journal of Machine Learning Research},
  year    = {2025},
  volume  = {26},
  number  = {67},
  pages   = {1--53},
  url     = {http://jmlr.org/papers/v26/23-1576.html}
}

@inproceedings{teh2006hierarchical,
  title={A hierarchical Bayesian language model based on {P}itman-{Y}or processes},
  author={Teh, Yee Whye},
  booktitle={Proceedings of the 21st International Conference on Computational Linguistics and 44th Annual Meeting of the Association for Computational Linguistics},
  pages={985--992},
  year={2006}
}

@article{engelke2021sparse,
  title={Sparse structures for multivariate extremes},
  author={Engelke, Sebastian and Ivanovs, Jevgenijs},
  journal={Annual Review of Statistics and Its Application},
  volume={8},
  number={1},
  pages={241--270},
  year={2021},
  publisher={Annual Reviews}
}

@article{janssen2020k,
 author = {Anja Jan{\ss}en and Phyllis Wan},
 title = {{$k$-means clustering of extremes}},
 volume = {14},
 journal = {Electronic Journal of Statistics},
 number = {1},
 publisher = {Institute of Mathematical Statistics and Bernoulli Society},
 pages = {1211 -- 1233},
 year = {2020},
 doi = {10.1214/20-EJS1689},
 URL = {https://doi.org/10.1214/20-EJS1689}
}

@article{medina2024spectral,
  author  = {Marco Avella Medina and Richard A Davis and Gennady Samorodnitsky},
  title   = {Spectral learning of multivariate extremes},
  journal = {Journal of Machine Learning Research},
  year    = {2024},
  volume  = {25},
  number  = {124},
  pages   = {1--36},
  url     = {http://jmlr.org/papers/v25/21-1367.html}
}

@article{medina2025insights,
  author = {Medina, Marco Avella and Davis, Richard A. and Samorodnitsky, Gennady},
  title = {Insights into Kernel {PCA} with Application to Multivariate Extremes},
  journal = {SIAM Journal on Mathematics of Data Science},
  volume = {7},
  number = {2},
  pages = {777-801},
  year = {2025},
  doi = {10.1137/24M1678635},
  URL = {https://doi.org/10.1137/24M1678635},
  eprint = {https://doi.org/10.1137/24M1678635}
}

@article{reinbott2026principal,
  author = {Felix Reinbott and Anja Janßen},
  title = {Principal Component Analysis for Max-Stable Distributions},
  journal = {Journal of the American Statistical Association},
  volume = {0},
  number = {0},
  pages = {1--12},
  year = {2026},
  publisher = {Taylor \& Francis},
  doi = {10.1080/01621459.2025.2595732},
  URL = {https://doi.org/10.1080/01621459.2025.2595732},
  eprint = {https://doi.org/10.1080/01621459.2025.2595732}
}

@article{cooley2019decompositions,
  title={Decompositions of dependence for high-dimensional extremes},
  author={Cooley, Daniel and Thibaud, Emeric},
  journal={Biometrika},
  volume={106},
  number={3},
  pages={587--604},
  year={2019},
  publisher={Oxford University Press}
}

@article{butsch2025estimation,
  title={Estimation of the number of principal components in high-dimensional multivariate extremes},
  author={Butsch, Lucas and Fasen-Hartmann, Vicky},
  journal={Scandinavian Journal of Statistics},
  volume={52},
  number={4},
  pages={2270--2313},
  year={2025},
  publisher={Wiley Online Library}
}

@article{drees2020minimum,
  title={On a minimum distance procedure for threshold selection in tail analysis},
  author={Drees, Holger and Jan{\ss}en, Anja and Resnick, Sidney I and Wang, Tiandong},
  journal={SIAM Journal on Mathematics of Data Science},
  volume={2},
  number={1},
  pages={75--102},
  year={2020},
  publisher={SIAM}
}

@article{clauset2009power,
  title={Power-law distributions in empirical data},
  author={Clauset, Aaron and Shalizi, Cosma Rohilla and Newman, Mark EJ},
  journal={SIAM Review},
  volume={51},
  number={4},
  pages={661--703},
  year={2009},
  publisher={SIAM}
}

@article{sparck1972statistical,
  title={A statistical interpretation of term specificity and its application in retrieval},
  author={Sparck Jones, Karen},
  journal={Journal of Documentation},
  volume={28},
  number={1},
  pages={11--21},
  year={1972},
  publisher={MCB UP Ltd}
}

@article{grootendorst2022bertopic,
  title={BERTopic: Neural topic modeling with a class-based TF-IDF procedure},
  author={Grootendorst, Maarten},
  journal={arXiv preprint arXiv:2203.05794},
  year={2022}
}

@article{blei2007correlated,
  ISSN = {19326157},
  URL = {http://www.jstor.org/stable/4537420},
  author = {David M. Blei and John D. Lafferty},
  journal = {The Annals of Applied Statistics},
  number = {1},
  pages = {17--35},
  publisher = {Institute of Mathematical Statistics},
  title = {A Correlated Topic Model of Science},
  urldate = {2026-07-17},
  volume = {1},
  year = {2007}
}

@inproceedings{blei2006correlated,
  author = {Lafferty, John and Blei, David},
  booktitle = {Advances in Neural Information Processing Systems},
  editor = {Y. Weiss and B. Sch\"{o}lkopf and J. Platt},
  pages = {},
  publisher = {MIT Press},
  title = {Correlated Topic Models},
  url = {https://proceedings.neurips.cc/paper_files/paper/2005/file/9e82757e9a1c12cb710ad680db11f6f1-Paper.pdf},
  volume = {18},
  year = {2005}
}

@article{mosimann1962compound,
  ISSN = {00063444, 14643510},
  URL = {http://www.jstor.org/stable/2333468},
  author = {James E. Mosimann},
  journal = {Biometrika},
  number = {1/2},
  pages = {65--82},
  publisher = {[Oxford University Press, Biometrika Trust]},
  title = {On the Compound Multinomial Distribution, the Multivariate $\beta$-Distribution, and Correlations Among Proportions},
  urldate = {2026-07-17},
  volume = {49},
  year = {1962}
}

@article{zhang2012best,
  title={The best rank-1 approximation of a symmetric tensor and related spherical optimization problems},
  author={Zhang, Xinzhen and Ling, Chen and Qi, Liqun},
  journal={SIAM Journal on Matrix Analysis and Applications},
  volume={33},
  number={3},
  pages={806--821},
  year={2012},
  publisher={SIAM}
}

@article{anandkumar2015spectral,
  title={A Spectral Algorithm for Latent {D}irichlet Allocation},
  author={Anandkumar, Anima and Foster, Dean P and Hsu, Daniel and Kakade, Sham M and Liu, Yi-Kai},
  journal={Algorithmica},
  volume={72},
  number={1},
  pages={193--214},
  year={2015},
  publisher={Springer-Verlag Berlin, Heidelberg}
}

@inproceedings{hsu2013learning,
  author = {Hsu, Daniel and Kakade, Sham M.},
  title = {Learning mixtures of spherical gaussians: moment methods and spectral decompositions},
  year = {2013},
  isbn = {9781450318594},
  publisher = {Association for Computing Machinery},
  address = {New York, NY, USA},
  url = {https://doi.org/10.1145/2422436.2422439},
  doi = {10.1145/2422436.2422439},
booktitle = {Proceedings of the 4th Conference on Innovations in Theoretical Computer Science},
  pages = {11–20},
  numpages = {10},
  location = {Berkeley, California, USA},
  series = {ITCS '13}
}

@article{tran2025davis,
  title={Davis-Kahan Theorem under a moderate gap condition},
  author={Tran, Phuc and Vu, Van},
  journal={arXiv preprint arXiv:2510.22393},
  year={2025}
}

@article{dieng2020topic,
  author = {Dieng, Adji B. and Ruiz, Francisco J. R. and Blei, David M.},
  title = {Topic Modeling in Embedding Spaces},
  journal = {Transactions of the Association for Computational Linguistics},
  volume = {8},
  pages = {439-453},
  year = {2020},
  month = {07},
  issn = {2307-387X},
  doi = {10.1162/tacl_a_00325},
  url = {https://doi.org/10.1162/tacl_a_00325},
  eprint = {https://direct.mit.edu/tacl/article-pdf/doi/10.1162/tacl_a_00325/1923074/tacl_a_00325.pdf}
}

@article{ke2024using,
  author = {Ke, Zheng Tracy and Wang, Minzhe},
  title = {Using {SVD} for Topic Modeling},
  journal = {Journal of the American Statistical Association},
  volume = {119},
  number = {545},
  pages = {434--449},
  year = {2024},
  doi = {10.1080/01621459.2022.2123813}
}

\end{document}